%% file: main.tex
\documentclass[acmsmall,screen]{acmart}

\usepackage{preamble}
\input{section/1-introduction-kl.tex}
\input{section/2-background-kl.tex}
\input{section/3-analysis-kl.tex}
\input{section/4-solution-kl.tex}
\knowledgescope{document}

\setcopyright{acmlicensed}
\copyrightyear{2026}
\acmYear{2026}
\acmDOI{XXXXXXX.XXXXXXX}
\acmConference[ICFP '26]{The ACM SIGPLAN International Conference on Functional Programming}{August 24--29, 2026}{Indianapolis, United States}
\acmISBN{XXX-X-XXXX-XXXX-X}
\keywords{Property-Based Testing, Denotational Semantics, Probabilistic Programming, Program Equivalence}

\ccsdesc[500]{Theory of computation~Probabilistic computation}
\ccsdesc[500]{Theory of computation~Denotational semantics}
\ccsdesc[500]{Theory of computation~Program reasoning}

\begin{document}

\title{Compositional Generator Equivalence \ifextended(Extended Version)\fi}
\author{Anthony Vandikas}
\email{anthony.vandikas@mail.utoronto.ca}
\author{Kiarash Sotoudeh}
\email{kiarash@cs.toronto.edu}
\author{Marsha Chechik}
\email{chechik@cs.utoronto.edu}
\affiliation{%
 \institution{University of Toronto}
 \city{Toronto}
 \state{Ontario}
 \country{Canada}
}

\begin{abstract}
  Property-based testing (PBT) is a powerful technique for software verification that relies on random input generators and ``shrinking'' processes to find and minimize counterexamples to executable specifications called properties. While optimizing these generators is crucial for testing efficiency, formally justifying such optimizations is currently difficult because existing languages lack a compositional semantics that is coarse-grained enough for high-level reasoning.

  In this paper, we first provide a formal account of the syntax and semantics of \hedgehog{}, a popular PBT framework. We demonstrate that \hedgehog{}'s distribution semantics --- which models how users typically reason about generators --- is non-compositional. Furthermore, we prove that any sound and complete compositional semantics for \hedgehog{} must necessarily be equivalent to its sampling semantics, which is too fine-grained to justify common program optimizations.

  To resolve this dilemma, we introduce \hedgehog*{}, a restricted version of the language based on the arrow calculus, and prove that \hedgehog*{} possesses a compositional distribution semantics. We evaluate \hedgehog*{} through a Haskell implementation and show that it remains expressive enough to capture generators of practical interest, while providing the formal foundation needed for compositional generator equivalence proofs.
\end{abstract}

\maketitle

\input{section/1-introduction.tex}
\input{section/2-background.tex}
\input{section/3-analysis.tex}
\input{section/4-solution.tex}
\input{section/5-evaluation.tex}
\input{section/6-related-work.tex}
\input{section/7-conclusion.tex}

\section*{Data Availability Statement}
The source code for our implementation, examples, and experiments is available at \url{https://github.com/hedgehog-arr/hedgehog-arr}. \ifextended\else The proofs for our theorems can be found in the extended version at
\url{https://github.com/hedgehog-arr/hedgehog-arr/blob/master/extended.pdf}.\fi

\end{document}

%% file: section/1-introduction-kl.tex
\knowledge{scope=introduction,notion}
| Property-based testing
| property-based testing
| PBT

\knowledge{scope=introduction,notion}
| PBT framework

\knowledge{scope=introduction,notion}
| property

\knowledge{scope=introduction,notion}
| system under test
| SUT

\knowledge{scope=introduction,notion}
| test case
| test cases

\knowledge{scope=introduction,notion}
| counterexample
| counterexamples

\knowledge{scope=introduction,notion}
| generator
| generators
| Generators

\knowledge{scope=introduction,notion}
| oracle
| oracles

\knowledge{scope=introduction,notion}
| shrinker
| shrinkers
| Shrinkers

\knowledge{scope=introduction,notion}
| testing stage
| testing

\knowledge{scope=introduction,notion}
| shrinking stage
| shrinking
| Shrinking

\knowledge{scope=introduction,notion}
| shrink
| shrinks

\knowledge{scope=introduction,notion}
| manually
| manual shrinking

\knowledge{scope=introduction,notion}
| generator-based
| generator-based shrinking

\knowledge{scope=introduction,notion}
| user-controlled
| user-controlled shrinking

\knowledge{scope=introduction,notion}
| optimization
| optimizations
| optimized

\knowledge{scope=introduction,notion}
| semantically equivalent
| semantic equivalence
| Semantic equivalence

\knowledge{scope=introduction,notion}
| sampling semantics

\knowledge{scope=introduction,notion}
| fine-grained
| coarse-grained

\knowledge{scope=introduction,notion}
| distribution semantics

\knowledge{scope=introduction,notion}
| compositional

\knowledge{scope=introduction,notion}
| observational equivalence

%% file: section/2-background-kl.tex
\knowledgenewunicodechar{ℕ}{\cmdkl{\mathbb{N}}}
\knowledge{synonym}
| natural number@background
| natural numbers@background

\knowledgenewrobustcmd{\ennreal}{\cmdkl{\overline{\mathbb{R}}_{≥ 0}}}
\knowledge{synonym}
| extended non-negative real number@background
| extended non-negative real numbers@background

\knowledgenewunicodechar{∪}[\mathbin]{\cmdkl{\cup}}
\knowledge{union}{synonym}

\knowledgenewunicodechar{∩}[\mathbin]{\cmdkl{\cap}}
\knowledge{intersection}{synonym}

\knowledgenewunicodechar{∅}{\cmdkl{\emptyset}}
\knowledge{empty set}{synonym}

\knowledgenewunicodechar{𝓟}{\cmdkl{\mathcal{P}}}
\knowledge{powerset}{synonym}

\knowledgenewrobustcmd{\Sample}{\cmdkl{\mathbb{S}}}
\knowledge{synonym}
| sample@background
| samples@background
| sample space@background

\knowledgenewunicodechar{∏}[\mathop]{\cmdkl{\prod}}
\knowledge{synonym}
| (cartesian) product@background
| cartesian product@background

\knowledgenewunicodechar{×}[\mathbin]{\cmdkl{\times}}
\knowledge{synonym}
| $n$-ary product@background

\knowledgenewrobustcmd{\one}{\cmdkl{\mathbf{1}}}
\knowledge{synonym}
| nullary product@background

\knowledgenewrobustcmd{\sing}{\cmdkl{()}}

\knowledgenewwrapped{\fprod}[\mathbin]{\cmdkl{\times}}

\knowledgenewsymb{\assoc}{\cmdkl{\mathrm{assoc}}}

\knowledgenewwrapped{\join}[\mathpunct]{\cmdkl{,}}
\knowledge{synonym}
| join@background
| joining@background

\knowledgenewunicodechar{→}[\mathbin]{\cmdkl{\rightarrow}}
\knowledge{synonym}
| function@background
| functions@background

\knowledgenewunicodechar{↦}[\mathbin]{\cmdkl{\mapsto}}
\knowledge{synonym}
| (anonymous) function@background

\knowledgenewwrapped{\blank}[\mathbin]{\cmdkl{-}}
\knowledge{synonym}
| blanks@background

\knowledgenewsymb{\id}{\cmdkl{\mathrm{id}}}
\knowledge{synonym}
| identity function@background

\knowledgenewunicodechar{∘}[\mathbin]{\cmdkl{\circ}}
\knowledge{synonym}
| function composition@background
| composition@background

\knowledgenewsymb{\curry}{\cmdkl{\mathrm{curry}}}

\knowledgenewunicodechar{∑}[\mathop]{\cmdkl{\sum}}
\knowledge{synonym}
| coproduct@background
| disjoint sum@background

\knowledgenewsymb{\inj}{\cmdkl{\iota}}

\newrobustcmd{\elim}[1]{\app{\kl[coproduct]{[} #1 \kl[coproduct]{]}}}
\knowledge{\elim}{notion}

\knowledgenewwrapped{\bisum}[\mathbin]{\cmdkl{+}}
\knowledge{synonym}
| $n$-ary coproduct@background

\knowledgenewrobustcmd{\zero}{\cmdkl{\mathbf{0}}}
\knowledge{synonym}
| nullary coproduct@background

\knowledgenewsymb{\absurd}{\cmdkl{!}}

\knowledgenewrobustcmd{\B}{\cmdkl{\mathbb{B}}}
\knowledge{synonym}
| boolean@background
| booleans@background

\knowledgenewrobustcmd{\true}{\cmdkl{\mathrm{true}}}
\knowledgenewrobustcmd{\false}{\cmdkl{\mathrm{false}}}

\knowledgenewsymb{\distrib}{\cmdkl{\mathrm{distrib}}}

\knowledgenewrobustcmd{\iverson}[1]{\mathopen{\cmdkl{[}}#1\mathclose{\cmdkl{]}}}
\knowledge{Iverson brackets}{synonym}

\knowledgenewrobustcmd{\Meas}{\cmdkl{\mathbf{Meas}}}
\knowledge{synonym}
| measurable space@background

\KnowledgeNewDocumentCommand{\mset}{E{_}{{}}}{\cmdkl{\mathcal{F}_{#1}}}
\knowledge{$σ$-algebra}{synonym}

\knowledgenewwrapped{\mfun}[\mathbin]{\cmdkl{\rightarrow_\mathbf{M}}}
\knowledge{synonym}
| measurable@background
| measurable function@background
| measurable functions@background

\knowledgenewsymb{\Borel}{\cmdkl{\mathrm{Borel}}}

\knowledgenewrobustcmd{\Qbs}{\cmdkl{\textbf{Qbs}}}
\knowledge{synonym}
| quasi-Borel space@background
| quasi-Borel spaces@background

\knowledgenewsymb{\Elem}{\cmdkl{\textrm{Elem}}}
\knowledge{synonym}
| random variable@background
| random variables@background

\knowledge{notion}
| random function@background
| random functions@background

\knowledgenewwrapped{\qfun}[\mathbin]{\cmdkl{\rightarrow_\textbf{Q}}}
\knowledge{synonym}
| quasi-measurable@background
| quasi-measurable function@background
| quasi-measurable functions@background

\knowledgenewsymb{\map}{\cmdkl{\mathrm{map}}}
\knowledge{synonym}
| functor@background

\knowledge{notion}
| pre-monad@background
| pre-monads@background

\knowledgenewsymb{\bind}{\cmdkl{\mathrm{bind}}}
\knowledgenewsymb{\unit}{\cmdkl{\mathrm{unit}}}

\knowledge{notion}
| monad@background
| monads@background

\knowledge{notion}
| pre-arrow@background
| pre-arrows@background

\knowledgenewsymb{\arr}{\cmdkl{\mathrm{arr}}}
\knowledgenewwrapped{\acomp}[\mathbin]{\cmdkl{\circ}}
\knowledgenewsymb{\afirst}{\cmdkl{\mathrm{first}}}
\knowledgenewsymb{\aleft}{\cmdkl{\mathrm{left}}}

\knowledge{notion}
| arrow@background
| arrows@background

\knowledgenewsymb{\Dist}{\cmdkl{\mathrm{Dist}}}
\knowledge{synonym}
| distribution@background
| distribution monad@background
| measure@background
| measure monad@background

\KnowledgeNewDocumentCommand{\integral}{O{}mm}{\mathchoice
  {\mathop{\cmdkl{\int}}\displaylimits\IfNoValueF{#1}{_{\mathrlap{#1}\phantom{x}}}#2\,\cmdkl{\mathrm{d}}{#3}}
  {\mathop{\cmdkl{\int}}\displaylimits\IfNoValueF{#1}{_{#1}}#2\,\kl[integral]{\mathrm{d}}{#3}}
  {\mathop{\cmdkl{\int}}\displaylimits\IfNoValueF{#1}{_{#1}}#2\,\kl[integral]{\mathrm{d}}{#3}}
  {\mathop{\cmdkl{\int}}\displaylimits\IfNoValueF{#1}{_{#1}}#2\,\cmdkl{\mathrm{d}}{#3}}}
\knowledge{synonym}
| integral@background
| integration@background
| integration operator@background
| expectation@background
| expectation operator@background

\knowledge{notion}
| push-forward@background

\knowledgenewwrapped{\dmul}[\mathbin]{⋅}

\knowledgenewrobustcmd{\uniform}{\cmdkl{λ}}
\knowledge{synonym}
| uniform distribution@background

\knowledgenewsymb{\projl}{\cmdkl{πₗ}}
\knowledgenewsymb{\projr}{\cmdkl{πᵣ}}

\knowledgenewsymb{\List}{\cmdkl{\mathrm{List}}}
\knowledge{synonym}
| list@background
| lists@background

\knowledgenewunicodechar{ε}{\cmdkl{\epsilon}}
\knowledgenewunicodechar{∷}[\mathbin]{\cmdkl{\dblcolon}}
\knowledgenewunicodechar{⧺}[\mathbin]{\cmdkl{{+}\kern-0.7ex{+}}}

\knowledgenewsymb{\Tree}{\cmdkl{\mathrm{Tree}}}
\knowledge{synonym}
| (rose) tree@background
| (rose) trees@background
| rose tree@background
| rose trees@background
| tree@background
| trees@background

\knowledgenewsymb{\node}{\cmdkl{\mathrm{node}}}

%% file: section/3-analysis-kl.tex
\knowledgenewrobustcmd\Type{\cmdkl{\mathrm{Type}}}
\knowledge{synonym}
| types@hedgehog
| type@hedgehog

\knowledgenewrobustcmd{\Empty}{\cmdkl{\syntax{0}}}
\knowledge{synonym}
| empty type@hedgehog
| empty types@hedgehog

\knowledgenewrobustcmd{\Unit}{\cmdkl{\syntax{1}}}
\knowledge{synonym}
| unit type@hedgehog
| singleton type@hedgehog

\knowledgenewrobustcmd{\Sum}[2]{#1 \mathbin{\cmdkl{\syntaxop{+}}} #2}
\knowledge{synonym}
| sum type@hedgehog
| sum types@hedgehog

\knowledgenewrobustcmd{\Prod}[2]{#1 \mathbin{\cmdkl{\syntaxop{*}}} #2}
\knowledge{synonym}
| product type@hedgehog
| product types@hedgehog

\knowledgenewrobustcmd{\Fun}[2]{#1 \mathbin{\cmdkl{\syntax{\shortrightarrow}}} #2}
\knowledge{synonym}
| function type@hedgehog
| function types@hedgehog

\knowledgenewsymb{\Gen}{\cmdkl{\syntax{Gen}}}
\knowledge{synonym}
| generator@hedgehog
| generators@hedgehog
| generator type@hedgehog
| generator types@hedgehog

\knowledgenewrobustcmd\Term{\cmdkl{\mathrm{Term}}}
\knowledge{synonym}
| terms@hedgehog
| term@hedgehog

\knowledgenewsymb{\tabsurd}{\cmdkl{\syntax{absurd}}}
\knowledge{synonym}
| absurd@hedgehog

\knowledgenewrobustcmd{\tunit}{\cmdkl{\syntaxop{()}}}
\knowledge{synonym}
| singleton value@hedgehog

\knowledgenewsymb{\tinl}{\cmdkl{\syntax{inl}}}
\knowledge{synonym}
| left injection@hedgehog

\knowledgenewsymb{\tinr}{\cmdkl{\syntax{inr}}}
\knowledge{synonym}
| right injection@hedgehog

\knowledgenewrobustcmd{\tmatch}[5]{%
  \cmdkl{\keyword{match}}\;#1\;
  \cmdkl{\keyword{with}}\;
  \app{\cmdkl{\syntax{inl}}}{#2}\mathpunct{\cmdkl{\syntaxop{.}}}#3\mathbin{\cmdkl{\syntaxop{|}}}  \app{\cmdkl{\syntax{inr}}}{#4}\mathpunct{\cmdkl{\syntaxop{.}}}#5}
\WithSuffix\newcommand\tmatch*[5]{\withkl{\kl[\tmatch]}{%
  \begin{array}[t]{@{}l@{}}
    \cmdkl{\keyword{match}}\;#1\;\cmdkl{\keyword{with}}                                                        \\
    \mathbin{\phantom{\syntaxop{|}}} \app{\cmdkl{\syntax{inl}}}{#2}\mathpunct{\cmdkl{\syntaxop{.}}}#3 \\
    \mathbin{\cmdkl{\syntaxop{|}}}   \app{\cmdkl{\syntax{inr}}}{#4}\mathpunct{\cmdkl{\syntaxop{.}}}#5}
  \end{array}}
\newcommand{\tmatchkeyword}{\kl[\tmatch]{\keyword{match}}}
\knowledge{synonym}
| case analysis@hedgehog

\knowledgenewrobustcmd{\tpair}[1]{\cmdkl{\syntaxop{(}}#1\cmdkl{\syntaxop{)}}}
\knowledge{synonym}
| tuple@hedgehog
| tuples@hedgehog

\knowledgenewsymb{\tfst}{\cmdkl{\syntax{fst}}}
\knowledge{synonym}
| left projection@hedgehog
| left projections@hedgehog

\knowledgenewsymb{\tsnd}{\cmdkl{\syntax{snd}}}
\knowledge{synonym}
| right projection@hedgehog
| right projections@hedgehog

\knowledgenewrobustcmd\tlam[4][]{
  \cmdkl{\keyword{fun}}^{#1}\;
  #2 \mathbin{\cmdkl{\syntax{:}}}
  #3 \mathpunct{\cmdkl{\syntaxop{.}}}
  #4}
\WithSuffix\newcommand\tlam*[4][]{{
      \begin{array}[t]{@{}l@{}}
        \tlam[#1]{#2}{#3}{ \\ #4}
      \end{array}}}
\knowledge{synonym}
| anonymous function@hedgehog
| anonymous functions@hedgehog

\knowledge{notion}
| function application@hedgehog

\knowledgenewsymb{\treturn}{\cmdkl{\keyword{return}}}
\knowledge{synonym}
| effect-free computation@hedgehog
| effect-free computations@hedgehog

\knowledgenewrobustcmd\tlet[3]{
  \cmdkl{\keyword{let}}\;#1 \mathbin{\cmdkl{\syntaxop{\textleftarrow}}} #2\;
  \cmdkl{\keyword{in}}\;#3}
\newcommand\tletkeyword{\kl[\tlet]{\keyword{let}}}
\WithSuffix\newcommand\tlet*[4][t]{{%
      \begin{array}[#1]{@{}l@{}}
        \tlet{#2}{#3 \\\negthickspace}{
          #4}
      \end{array}}}
\WithSuffix\newcommand\tlet+[4][t]{{%
      \begin{array}[#1]{@{}l@{}}
        \tlet{#2}{#3}{ \\
          #4}
      \end{array}}}
\knowledge{synonym}
| sequential composition@hedgehog

\knowledgenewsymb{\tflip}{\cmdkl{\syntax{flip}}}
\knowledge{synonym}
| coin flip operation@hedgehog

\knowledgenewsymb{\tshrink}{\cmdkl{\syntax{shrink}}}
\knowledge{synonym}
| shrinking operation@hedgehog

\knowledgenewsymb{\Var}{\cmdkl{\mathrm{Var}}}
\knowledge{synonym}
| variable@hedgehog
| variables@hedgehog

\knowledgenewrobustcmd{\Bool}{\cmdkl{\syntax{Bool}}}
\knowledgenewrobustcmd{\ttrue}{\cmdkl{\syntax{true}}}
\knowledgenewrobustcmd{\tfalse}{\cmdkl{\syntax{false}}}
\knowledgenewsymb{\tnot}{\cmdkl{\syntax{not}}}
\knowledgenewwrapped{\teq}[\mathbin]{\cmdkl{\syntaxop{=}}}
\knowledgenewwrapped{\tand}[\mathbin]{\cmdkl{\syntaxop{and}}}
\knowledgenewwrapped{\tor}[\mathbin]{\cmdkl{\syntaxop{or}}}

\knowledgenewrobustcmd{\tif}[3]{
  \cmdkl{\keyword{if}}\;#1\;
  \cmdkl{\keyword{then}}\;#2\;
  \cmdkl{\keyword{else}}\;#3}
\WithSuffix\newcommand\tif*[3]{
  \begin{array}[t]{@{}l@{}}
    \tif{#1}{ \\
    \quad #2  \\\negthickspace}{\\
      \quad #3}
  \end{array}}
\WithSuffix\newcommand\tif+[3]{
  \begin{array}[t]{@{}l@{}}
    \tif{#1 \\\negthickspace}{
    #2      \\\negthickspace}{
      #3}
  \end{array}}
\newcommand{\tifkeyword}{\kl[\tif]{\keyword{if}}}

\knowledgenewrobustcmd{\tcoini}{\app{\cmdkl{\syntax{coin₁}}}}
\knowledgenewrobustcmd{\tcoinii}{\app{\cmdkl{\syntax{coin₂}}}}
\knowledgenewrobustcmd{\tcoiniii}{\app{\cmdkl{\syntax{coin₁'}}}}
\knowledgenewrobustcmd{\tcoinix}{\app{\cmdkl{\syntax{coin₁^{\arrow}}}}}
\knowledgenewrobustcmd{\tcoiniix}{\app{\cmdkl{\syntax{coin₂^{\arrow}}}}}

\knowledgenewrobustcmd\Env{\cmdkl{\mathrm{Env}}}
\knowledge{synonym}
| environment@hedgehog
| environments@hedgehog

\knowledgenewsymb{\Dom}{\cmdkl{\mathrm{Dom}}}
\knowledge{synonym}
| domain@hedgehog

\KnowledgeNewDocumentCommand\typed{omm}{%
  \IfNoValueF{#1}{#1 \mathrel{\cmdkl{⊢}}} #2 \mathrel{\cmdkl{:}} #3}
\knowledge{synonym}
| typing relation@hedgehog

\knowledgenewsymb{\flip}{\cmdkl{\mathrm{flip}}}
\knowledgenewsymb{\shrink}{\cmdkl{\mathrm{shrink}}}

\knowledge{notion}
| interpretation@hedgehog
| interpretations@hedgehog
| compositional@hedgehog

\knowledgenewrobustcmd{\tysem}[1]{\mathopen{\cmdkl{⟦}}#1\mathclose{\cmdkl{⟧}}}
\knowledge{synonym}
| type and environment semantics@hedgehog
| environment semantics@hedgehog
| type semantics@hedgehog
| variable assignment@hedgehog
| variable assignments@hedgehog

\NewDocumentCommand{\tmsem}{omo}{\app{\withkl{\kl[\tmsem]}{%
      \mathopen{\cmdkl{⟦}}%
      \IfNoValueF{#1}{#1 \mathrel{\cmdkl{⊢}}}%
      #2%
      \IfNoValueF{#3}{\mathrel{\cmdkl{:}} #3}
      \mathclose{\cmdkl{⟧}}}}}
\knowledge{notion}
| \tmsem
| term semantics@hedgehog

\knowledgenewsymb{\Sampling}{\cmdkl{\mathfrak{C}}}
\knowledge{synonym}
| sampling semantics@hedgehog
| sampling interpretation@hedgehog
| Sampling interpretation@hedgehog

\knowledge{notion}
| shrink tree@hedgehog
| shrink trees@hedgehog

\KnowledgeNewDocumentCommand{\semeq}{mommo}{
  \IfNoValueF{#2}{#2 \mathrel{\cmdkl{⊢}}}
  #3 \mathrel{\cmdkl{=}^{#1}} #4
  \IfNoValueF{#5}{\mathrel{\cmdkl{:}} #5}}
\knowledge{synonym}
| semantic equivalence@hedgehog
| semantically equivalent@hedgehog

\knowledgenewrobustcmd{\dist}[1]{\mathopen{\cmdkl{⦇}}#1\mathclose{\cmdkl{⦈}}}
\knowledge{synonym}
| distribution semantics@hedgehog
| distributions@hedgehog
| distribution@hedgehog

\knowledgenewsymb{\tcontext}{\cmdkl{\syntax{discrim}}} 
\knowledgenewsymb{\tcontextx}{\cmdkl{\syntax{discrim}^{\arrow}}} 

\knowledgenewrobustcmd{\Ctx}{\cmdkl{\mathrm{Ctx}}}
\knowledge{synonym}
| context@hedgehog
| contexts@hedgehog

\knowledgenewrobustcmd{\ctxtyped}[5]{
  #1 \mathrel{\cmdkl{:}}
  (#2 \mathrel{\cmdkl{⊢}} #3) \mathrel{\cmdkl{\longrightarrow}}
  (#4 \mathrel{\cmdkl{⊢}} #5)}

\knowledgenewrobustcmd{\hole}{\cmdkl{\mathrm{□}}}
\knowledge{synonym}
| hole@hedgehog
| holes@hedgehog

\knowledgenewrobustcmd{\subst}[2]{#1\mathopen{\cmdkl{[}}#2\mathclose{\cmdkl{]}}}
\knowledge{synonym}
| substitution@hedgehog

\KnowledgeNewDocumentCommand{\disteq}{ommo}{
  \IfNoValueF{#1}{#1 \mathrel{\cmdkl{⊢}}}
  #2 \mathrel{\cmdkl{≈}} #3
  \IfNoValueF{#4}{\mathrel{\cmdkl{:}} #4}}
\knowledge{synonym}
| contextual equivalence@hedgehog
| contextually equivalent@hedgehog

\knowledge{notion}
| sound@hedgehog
| soundness@hedgehog
| Soundness@hedgehog

\knowledge{notion}
| complete@hedgehog
| completeness@hedgehog
| Completeness@hedgehog

\knowledgenewsymb{\tdiscrim}{\cmdkl{\syntax{discrim'}}}

%% file: section/4-solution-kl.tex
\WithSuffix\knowledgenewrobustcmd\Type*{\cmdkl{\mathrm{Type}^{\arrow}}}
\knowledge{synonym}
| type@hedgehog*
| types@hedgehog*

\WithSuffix\knowledgenewrobustcmd\Term*{\cmdkl{\mathrm{Term}^{\arrow}}}
\knowledge{synonym}
| term@hedgehog*
| terms@hedgehog*

\WithSuffix\KnowledgeNewDocumentCommand\typed*{omm}{%
\IfNoValueF{#1}{#1 \mathrel{\cmdkl{⊢^{\arrow}}}}%
#2 \mathrel{\cmdkl{:}}%
#3}
\knowledge{synonym}
| typing relation@hedgehog*
| typing@hedgehog*

\newcommand{\semi}{\mathpunct{\cmdkl{;}}}
\KnowledgeNewDocumentCommand\ctyped{mmmm}{%
#1 \semi #2 \mathrel{\cmdkl{⊢^{\arrow}}} #3 \mathrel{\cmdkl{!}} #4}
\knowledge{synonym}
| command typing relation@hedgehog*

\WithSuffix\knowledgenewrobustcmd\Env*{\cmdkl{\mathrm{Env}^{\arrow}}}
\knowledge{synonym}
| environment@hedgehog*
| environments@hedgehog*

\knowledgenewrobustcmd{\GFun}[2]{#1 \mathbin{\cmdkl{\rightsquigarrow}} #2}
\knowledge{synonym}
| generator function type@hedgehog*
| generator function@hedgehog*
| generator functions@hedgehog*
| generator@hedgehog*
| generators@hedgehog*

\knowledge{notion}
| command term@hedgehog*
| command terms@hedgehog*

\knowledge{notion}
| interpretation@hedgehog*
| interpretations@hedgehog*
| compositional@hedgehog*

\WithSuffix\knowledgenewrobustcmd\tysem*[1]{\mathopen{\cmdkl{⟦}}#1\mathclose{\cmdkl{⟧}}}
\knowledge{synonym}
| type and environment semantics@hedgehog*
| environment semantics@hedgehog*
| type semantics@hedgehog*

\WithSuffix\NewDocumentCommand\tmsem*{omo}{\app{\withkl{\kl[\tmsem*]}{%
\mathopen{\cmdkl{⟦}}%
\IfNoValueF{#1}{#1 \mathrel{\cmdkl{⊢^{\arrow}}}}%
#2%
\IfNoValueF{#3}{\mathrel{\cmdkl{:}} #3}
\mathclose{\cmdkl{⟧}}}}}
\knowledge{notion}
| \tmsem*
| term semantics@hedgehog*

\WithSuffix\knowledgenewsymb\Sampling*{\cmdkl{\mathfrak{C}^{\arrow}}}
\knowledge{synonym}
| sampling semantics@hedgehog*
| sampling interpretation@hedgehog*
| Sampling interpretation@hedgehog*

\WithSuffix\KnowledgeNewDocumentCommand\semeq*{mommo}{%
\IfNoValueF{#2}{#2 \mathrel{\cmdkl{⊢^{\arrow}}}}%
#3 \mathrel{\cmdkl{=}^{#1}} #4%
\IfNoValueF{#5}{\mathrel{\cmdkl{:}} #5}}
\knowledge{synonym}
| semantic equivalence@hedgehog*
| semantically equivalent@hedgehog*

\WithSuffix\knowledgenewrobustcmd\dist*[1]{\cmdkl{⦇}#1\cmdkl{⦈^{\arrow}}}
\knowledge{synonym}
| distribution semantics@hedgehog*
| distributions@hedgehog*
| distribution@hedgehog*

\WithSuffix\knowledgenewrobustcmd\Ctx*{\cmdkl{\mathrm{Ctx^{\arrow}}}}
\knowledge{synonym}
| context@hedgehog*
| contexts@hedgehog*

\WithSuffix\knowledgenewrobustcmd\ctxtyped*[5]{
  #1 \mathrel{\cmdkl{:}}
  (#2 \mathrel{\cmdkl{⊢^{\arrow}}} #3) \mathrel{\cmdkl{\longrightarrow}}
  (#4 \mathrel{\cmdkl{⊢^{\arrow}}} #5)}

\knowledge{notion}
| translation@hedgehog*
| \tytrans
| \tmtrans
| \envtrans
\newcommand{\tytrans}[1]{\mathopen{\kl[\tytrans]{⟨}}#1\mathclose{\kl[\tytrans]{⟩_\mathrm{ty}}}}
\newcommand{\tmtrans}[1]{\mathopen{\kl[\tmtrans]{⟨}}#1\mathclose{\kl[\tmtrans]{⟩_\mathrm{tm}}}}
\newcommand{\envtrans}[1]{\mathopen{\kl[\envtrans]{⟨}}#1\mathclose{\kl[\envtrans]{⟩_\mathrm{env}}}}

\knowledgenewsymb{\Distrib}{\cmdkl{\mathfrak{D}}}
\knowledge{synonym}
| distribution interpretation@hedgehog*

%% file: section/1-introduction.tex
\begin{scope}
\knowledgeimport{introduction}
\knowledgeimport{background}

\section{Introduction} \label{sec:introduction}

\begin{outline}
  \textbf{Property-Based Testing}
  \begin{itemize}
    \item Property
    \item Test Cases
    \item Counterexample
  \end{itemize}
\end{outline}
""Property-based testing"" (PBT) is a powerful tool for identifying software bugs. Users provide an executable specification --- or ""property"" --- to a ""PBT framework"", which automatically checks the specification against a ""system under test"" ("SUT") over a wide range of randomly generated inputs (or ""test cases""). If the framework finds a ""counterexample"" --- a "test case" that causes the "SUT" to violate the specification --- then the framework computes and reports a minimal "counterexample".

\begin{outline}
  \textbf{Property}
  \begin{itemize}
    \item Generators
    \item Oracle
    \item Shrinkers
  \end{itemize}
\end{outline}
A "property" consists of three components:
\begin{enumerate*}
  \item ""generators"", which compute random "test cases",
  \item an ""oracle"", which decides if a test case is a counterexample, and
  \item ""shrinkers"", which compute a set of ``smaller'' "test cases" from an existing "test case".
\end{enumerate*}
A "property"'s components are integral to the operation of a "PBT framework", which consists of two stages: "testing" and "shrinking". The ""testing stage"" is responsible for identifying a "counterexample", and operates using the "generators" and the "oracle" of the "property". The ""shrinking stage"" performs "counterexample" generation and minimization, and operates using the "shrinkers" and the "oracle".


\begin{outline}
  \textbf{Testing Stage}
  \begin{itemize}
    \item Stopping Conditions
  \end{itemize}
\end{outline}
The "testing stage" proceeds by constructing a random "test case" using the "generators" and then checking if the "test case" is a "counterexample" using the "oracle". If the "test case" is not a "counterexample", the "testing" process restarts with a different "test case"; otherwise, the "counterexample" is passed along to the "shrinking stage". In lieu of a "counterexample", the "testing stage" can be terminated
\begin{enumerate*}
  \item after a user-specified number of "test cases" have been generated,
  \item after a user-specified timeout, or
  \item manually.
\end{enumerate*}

\begin{outline}
  \textbf{Shrinking Stage}
  \begin{itemize}
    \item Stopping Conditions
  \end{itemize}
\end{outline}
The "shrinking stage" is essential for the usability of "property-based testing". Without it, "counterexamples" can be arbitrarily large, which complicates debugging. "Shrinking" operates as follows:
\begin{enumerate*}
  \item starting from the previously discovered "counterexample", it computes a set of ``smaller'' "test cases", called ""shrinks"", and then
  \item the "oracle" determines whether any of the "shrinks" are "counterexamples". If so, the "shrinking" process restarts using the smaller "counterexample"; otherwise, the current "counterexample" is deemed minimal and reported to the user.
\end{enumerate*}
Like the "testing stage", the "shrinking stage" can be terminated
\begin{enumerate*}
  \item after a user-specified number of "shrinks" have been generated,
  \item after a user-specified timeout, or
  \item manually.
\end{enumerate*}

\begin{outline}
  \textbf{Generator Specification}
\end{outline}
"Generators" are specified using an embedded domain-specific language (eDSL) provided by the "PBT framework". These eDSLs are essentially probabilistic programming languages, and hence users view generators as probability distributions. For example, consider the program given in \Cref{fig:inequality-example}, written using the \hedgehog{} framework \citep{hedgehog} for \haskell{}. Concretely, the \texttt{coin1} generator computes two random booleans, \texttt{x} and \texttt{y} (lines 3 and 4), by invoking a sub-generator \texttt{bool\textunderscore} which performs a fair coin flip. The \texttt{coin1} generator then checks whether \texttt{x} and \texttt{y} are equal (line 5). Abstractly, \texttt{coin1} represents a probability distribution that assigns \texttt{0.5} to both \texttt{True} and \texttt{False}.

\begin{figure}[t]
  \centering
  \includegraphics[scale=0.8]{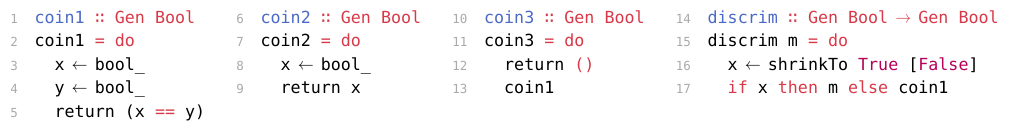}
    \vspace{-0.15in}
  \caption[Example generators and distinguishing context.]{Example generators and distinguishing context. The generator \texttt{shrinkTo\footnotemark\ True\ [False]} produces \texttt{True} during testing, but can shrink to \texttt{False} during shrinking.}
  \Description{Example generators and distinguishing context}
  \label{fig:inequality-example}
  \vspace{-0.25in}
\end{figure}
\footnotetext{The function \texttt{shrinkTo $\dblcolon$ a $\to$ [a] $\to$ Gen a} does not exist verbatim in Hedgehog, but is straightforward to implement using the existing combinator \texttt{shrink $\dblcolon$ (a $\to$ [a]) $\to$ Gen a $\to$ Gen a}.}

\begin{outline}
  \textbf{Shrinker Specification}
  \begin{itemize}
    \item Manual Shrinking
    \item Generator-Based Shrinking
    \item User-Controlled Shrinking
  \end{itemize}
\end{outline}
"Shrinkers" are also specified using an eDSL provided by the "PBT framework". The \quickcheck{} \citep{quickcheck} framework which popularized PBT employs ""manual shrinking"", where "shrinkers" and "generators" are specified separately. "Shrinkers" are surprisingly difficult to write "manually", thus subsequent work such as \hedgehog{} \citep{hedgehog} employs ""generator-based shrinking"", where "shrinkers" are derived from annotated "generator" specifications. Under "generator-based shrinking", "generators" represent distributions of labelled trees, where labels lower in the tree are considered ``smaller''. While most languages with "generator-based shrinking" allow users to influence "shrinker" behaviour \citep{hedgehog,falsify}, others --- such as \hypothesis{} \citep{hypothesis} --- rely solely upon built-in heuristics. This lack of ""user-controlled shrinking"" saves users from having to think about how to shrink "test cases", but inhibits correct and efficient "shrinking" since the correct notion of ``smaller'' "test case" is application-dependent. Furthermore, the heuristics used by these frameworks are unpredictable and can change when the framework is updated \citep{falsify}. {Thus, this work focuses on "user-controlled" "generator-based shrinking".}

\begin{outline}
  \textbf{Generator Efficiency}
  \begin{itemize}
    \item Efficiency
    \item Optimization
    \item Semantic Equivalence
  \end{itemize}
\end{outline}
Efficient "generators" and "shrinkers" are essential for the effectiveness of "property-based testing". When timeouts are used, slow "generators" reduce the likelihood that the "testing stage" discovers a "counterexample", and slow "shrinkers" reduce the likelihood that the "shrinking stage" completely minimizes "counterexamples". Thus it is important that generators and shrinkers are "optimized", either by hand or (preferably) automatically.

By ""optimization"", we mean replacing an existing (slow) program with a (faster) ""semantically equivalent"" one. Thus, in order to perform "optimizations", we must formally justify their validity via a "semantic equivalence" proof. "Semantic equivalence" in languages with generator-based shrinking is surprisingly subtle. For example, consider the generator \texttt{coin2} (\Cref{fig:inequality-example}, line 6). As \texttt{coin2} has the same distribution as \texttt{coin1}, we can ostensibly optimize a program using \texttt{coin1} by replacing \texttt{coin1} with \texttt{coin2}. However, this is generally unsound, as \texttt{coin1} and \texttt{coin2} produce observably different behaviour in combination with the function \texttt{discrim} (line 14): \texttt{discrim\ coin1} produces the trees given in \Cref{fig:expected-trees}, while \texttt{discrim\ coin2} produces the trees given in both \Cref{fig:expected-trees,fig:additional-trees}. We explain the reason for this in \Cref{ex:coin-distribution}.


\begin{outline}
  \textbf{Problem and Existing Solutions}
  \begin{itemize}
    \item Existing Probabilistic Calculi
    \item Existing Probabilistic Non-Determinism Calculi
    \item Implementation-Level Reasoning
          \begin{itemize}
            \item Sampling Semantics
            \item Sample Space
          \end{itemize}
    \item Distribution Reasoning
          \begin{itemize}
            \item Distribution Semantics
            \item Compositionality
          \end{itemize}
  \end{itemize}
\end{outline}
Unfortunately, equivalence proofs are currently infeasible as there is currently no way to formally and effectively reason about the behaviour of generators in languages with "user-controlled" "generator-based shrinking". Existing possibilities include:

\emph{(1) Utilize existing probabilistic calculi.} Since "generator" specification languages are effectively probabilistic programming languages, we can use existing probabilistic calculi to reason about "generator" behaviour. However, existing probabilistic calculi \citep{probabilistic-λ-calculus-1,probabilistic-λ-calculus-2,probabilistic-λ-calculus-3} do not model "shrinking" behaviour.

\emph{(2) Adapt existing calculi for probabilistic non-determinism.} "Shrinking" can be viewed as a form of non-determinism, which suggests that we can adapt existing calculi which model combinations of probabilistic and non-deterministic effects for our purposes. However, languages which feature "generator-based shrinking" violate common equational properties --- the \emph{monad laws}~\citep{monads} --- which are present in these calculi and also intuitively expected by users. For example, the monad laws imply that the generators \texttt{coin1} and \texttt{coin3} (\Cref{fig:inequality-example}, line 10) are equivalent. Intuitively, this is expected because \texttt{return ()} is a ``no-op'' and can be safely removed. Nonetheless, \texttt{coin1} and \texttt{coin3} (like \texttt{coin1} and \texttt{coin2}) are distinguished by \texttt{discrim}, as \texttt{discrim coin3} returns the same set of trees as \texttt{discrim coin2} (\Cref{fig:expected-trees,fig:additional-trees}).

\emph{(3) Implementation-level reasoning.} We can reason about languages directly in terms of their implementation (or ""sampling semantics""). Existing languages such as \hedgehog{} \citep{hedgehog} represent a generator as a \reintro{random variable}, i.e., a deterministic function on some \reintro{sample space}. However, "sampling semantics" is too ""fine-grained"", not preserved even by simple changes.
For example,  \texttt{coin1}, \texttt{coin2}, and \texttt{coin3}  (\Cref{fig:inequality-example}) are all semantically distinct under \hedgehog{}'s "sampling semantics" because they interpret their "samples" differently. In contrast, an appropriate \reintro{coarse-grained} semantics should allow \texttt{coin1}, \texttt{coin2}, and \texttt{coin3} to be identified, at least in some contexts.

\emph{(4) Distribution-level reasoning.} The effectiveness of property-based testing depends on the distribution of a generator, not the specifics of how "samples" are turned into "test cases". Since \hedgehog{} has a "sampling semantics" in terms of "random variables", one can derive an appropriate ""distribution semantics"". However, this "distribution semantics" is not ""compositional"": the distribution of a generator is not uniquely determined by the distributions of its components. For example, we cannot determine the distribution of \texttt{discrim\ m} (\Cref{fig:inequality-example}) simply by knowing the distribution of \texttt{m}: \texttt{coin1} and \texttt{coin2} have the same distribution, but \texttt{discrim\ coin1} and \texttt{discrim\ coin2} do not. A "compositional" semantics is important for reasoning about the behaviour of larger programs.

\begin{figure}[t]
  \begin{subfigure}[t]{0.25\linewidth}
    \centering
    \includegraphics[scale=0.8]{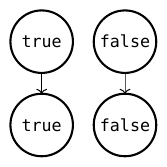}
    \caption{Expected Trees}
    \Description{Expected Trees}
    \label{fig:expected-trees}
  \end{subfigure}
  \begin{subfigure}[t]{0.25\linewidth}
    \centering
    \includegraphics[scale=0.8]{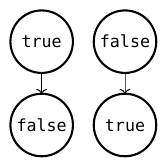}
    \caption{Additional Trees}
    \Description{Additional Trees}
    \label{fig:additional-trees}
  \end{subfigure}
  \vspace{-0.1in}
  \caption{Trees produced by \texttt{discrim} (see \Cref{fig:inequality-example}).}
  \vspace{-0.25in}
\end{figure}

\paragraph{Contributions.}
In this paper, we address the problem of formally reasoning about "generator" behaviour in languages with "user-controlled" "generator-based shrinking", for the purpose of justifying optimizations. We focus on \hedgehog{} \citep{hedgehog}, although our results are easily transferred to other languages. The examples given in \Cref{fig:inequality-example} demonstrate that a "compositional" semantics must be somewhat sensitive to a generator's internal structure.In fact, we show \hedgehog{}'s \textbf{only} "compositional" semantics is its "sampling semantics"; \hedgehog{} generators can (at best) be interpreted as deterministic programs with no opportunity to optimize or reason about their probabilistic behaviour. In light of this result, we define a restricted version of \hedgehog{} based on the arrow calculus \citep{arrow-calculus} called \hedgehog*{}, and prove that its "distribution semantics" is "compositional" and validates many common optimizations. We evaluate the expressiveness of \hedgehog*{} relative to \hedgehog{} in practice and demonstrate \hedgehog*{}'s utility for equivalence proofs on a set of examples.  Specifically,
\begin{enumerate*}
  \item we give a formal account of the \hedgehog{} language's syntax and its "sampling semantics", where "generators" represent "random variables" on an appropriate "sample space",
  \item we also define \hedgehog{}'s "distribution semantics", where generators represent probability distributions, and show that it is not "compositional",
  \item \hedgehog{}'s "distribution semantics" induces an ""observational equivalence"" relation, and we show that the only "compositional" semantics that is sound and complete with respect to "observational equivalence" is the "sampling semantics",
  \item we define \hedgehog*{}, a version of \hedgehog{} based on the arrow calculus \citep{arrow-calculus}, and show that the "distribution semantics" for this version of \hedgehog{} is "compositional", and
  \item we evaluate the expressiveness of \hedgehog*{}, its effect on program size, and its utility for program optimization.
\end{enumerate*}

\paragraph{Organization.} In \Cref{sec:background}, we review background material. In \Cref{sec:analysis}, we present our formalization and analysis of the \hedgehog{} language. In \Cref{sec:solution}, we present \hedgehog*{}. In \Cref{sec:evaluation}, we evaluate the practical utility of \hedgehog*{}. In \Cref{sec:related-work}, we describe and compare related work. In \Cref{sec:conclusion}, we conclude and discuss future work.

\end{scope}

%% file: section/2-background.tex
\begin{scope}
\knowledgeimport{background}

\section{Background} \label{sec:background}

In this section, we fix some basic notation (\Cref{sec:background/notation}) and then review the requisite probability theory (\Cref{sec:background/probability-theory}). We work within the framework of "quasi-Borel spaces" \citep{quasi-borel-spaces}, which allows reasoning about probabilistic behaviour of programs with higher-order functions.

\subsection{Notation} \label{sec:background/notation}

\paragraph{Sets.}
\AP
The set of ""natural numbers"" is denoted by $ℕ$ and
the set of ""extended non-negative real numbers"" is denoted by $\ennreal$.
The ""union"" of two sets $A$ and $B$ is denoted by $A ∪ B$,
the ""intersection"" is denoted by $A ∩ B$,
the ""empty set"" is denoted by $∅$, and
the ""powerset"" of $A$ is denoted by $𝓟{A}$.
The ""sample space"" $\Sample$ is defined as the interval $[0, 1) ⊆ \ennreal$.

\paragraph{Products.}
\AP
The ""(cartesian) product"" of a family of sets $(Aᵢ)_{i ∈ I}$ is defined as $∏_{i ∈ I} Aᵢ = \{\,(aᵢ)_{i ∈ I} | ∀ i ∈ I, aᵢ ∈ Aᵢ\,\}$.
We define the \intro{$n$-ary product} as $A₁ × … × Aₙ = ∏ⁿ_{i = 1} Aᵢ$, and write $(a₁, …, aₙ)$ in place of $(aᵢ)ⁿ_{i = 1}$.
The \intro{nullary product} is denoted by $\one$, and its unique element is denoted by $\intro*\sing$.
The ""joining"" of two families $a = (aᵢ)_{i ∈ I}$ and $b = (bⱼ)_{j ∈ J}$ such that $I ∩ J = ∅$ is denoted by $a \join b$.

\paragraph{Functions.}
\AP
The set of ""functions"" from set $A$ to set $B$ is denoted by $A → B$.
Function application is denoted using juxtaposition ($\app{f}{x}$).
The notation $(a ↦ …)$ denotes an ""(anonymous) function"" with parameter $a$.
Functions can be denoted using ""blanks"", e.g., $(1 + \blank) = (n ↦ 1 + n)$.
The ""identity function"" on $A$ is denoted by $\id_A$ and
the ""composition"" of two functions $f : B → C$ and $g : A → B$ is denoted by $f ∘ g$.
We omit the subscripts on $\id$ when it can be inferred from the context.

\paragraph{Coproducts.}
\AP
The ""coproduct"" (or \reintro{disjoint sum}) of a family of sets $(Aᵢ)_{i ∈ I}$ is defined as $∑_{i ∈ I} Aᵢ = \{\,\intro\inj_i{x} | x ∈ Aᵢ\,\}$.
We define the \reintro{$n$-ary coproduct} as $A₁ \intro*\bisum … \reintro*\bisum Aₙ = ∑ⁿ_{i = 1} Aᵢ$.
Given a family of functions $f : ∏_{i ∈ I} (Aᵢ → B)$, we define the function $\elim{fᵢ}_{i ∈ I} : ∑_{i ∈ I} Aᵢ → B$ to satisfy $\elim{fᵢ}_{i ∈ I}(\inj_j{x}) = \app{fⱼ}{x}$ for all $j ∈ I$.
The ""nullary coproduct"" is denoted by $\zero$ ($= ∅$), and
is equipped with a (unique) function $(\intro*\absurd_A) : \zero → A$ for all sets $A$.
We omit the subscript on $(\absurd)$ when it can be inferred from the context.
The set of ""booleans"" is defined as $\B = \one \bisum \one$, where $\intro*\true = \inj_1{\sing}$ and $\intro*\false = \inj_2{\sing}$.

\paragraph{Miscellaneous.} We implicitly treat propositions as "booleans". We define common miscellaneous functions in \Cref{fig:miscellaneous-functions}. We omit subscripts on these functions when they can be inferred from the context.

\begin{figure}
  \begin{mathpar}
    \begin{array}{@{}l@{}}
      (\intro*\fprod) : (A → B) × (A' → B') → A × A' → B × B' \\
      \app(f \fprod g)(x, y) = (\app{f}{x}, \app{g}{y})
    \end{array}

    \begin{array}{@{}l@{}}
      \intro*\distrib : A × (B \bisum C) → (A × B) \bisum (A × C) \\
      \distrib(x, \inj_1{y}) = \inj_1(x, y)                       \\
      \distrib(x, \inj_2{y}) = \inj_2(x, y)
    \end{array}

    \begin{array}{@{}l@{}}
      \intro*\curry_{A,B,C} : (A × B → C) → (A → B → C) \\
      \curry_{A,B,C}{f}{a}{b} = \app{f}(a, b)
    \end{array}

    \begin{array}{@{}l@{}c@{}l@{}}
      \intro*\iverson{ & \blank & }_A : \B → A \\
      \iverson{ & \true  & }_A = 1      \\
      \iverson{ & \false & }_A = 0
    \end{array}

    \begin{array}{@{}l@{}}
      \intro*\assoc_{A,B,C} : A × (B × C) → (A × B) × C \\
      \assoc_{A,B,C}(x, (y, z)) = ((x, y), z)
    \end{array}
  \end{mathpar}
  \vspace{-0.2in}
  \Description{Miscellaneous Functions}
  \caption{Miscellaneous Functions}
  \label{fig:miscellaneous-functions}
  \vspace{-0.15in}
\end{figure}

\subsection{Probability Theory} \label{sec:background/probability-theory}

\paragraph{Measure Theory}
\AP
A set $A$ is a ""measurable space"" if it is equipped with
a designated set $\mset_A ⊆ 𝓟{A}$ (called a ""$σ$-algebra"") such that $\intro*\Borel{\mset_A} ⊆ \mset_A$, where
$\Borel{X}$ is the closure of $X$ under complements and countable unions.
We define several (but not all) instances of $\mset$ in \Cref{fig:measurable-sets}.
\begin{figure}[t]
  \begin{align*}
    \mset_{\B}       & = \B                                            &
    \mset_{ℕ}        & = ℕ                                               \\
    \mset_{\ennreal} & = \Borel{\{\, (r, r') | r < r' ∈ \ennreal \,\}} &
    \mset_{\Sample}  & = \Borel{\{\, (r, r') | r < r' ∈ \Sample \,\}}
  \end{align*}
  \vspace{-0.2in}
  \Description{Common $σ$-Algebras.}
  \caption{Common $σ$-Algebras.}
  \label{fig:measurable-sets}
  \vspace{-0.15in}
\end{figure}
The class of measurable spaces is denoted by $\Meas$.
A function $f : A → B$ for $A, B ∈ \Meas$ is ""measurable"" if $\app{f}^{-1}{X} ∈ \mset_A$ for all $X ∈ \mset_B$. The set of "measurable functions" from $A$ to $B$ is denoted by $A \mfun B$.

\paragraph{Quasi Borel Spaces}
\AP
A set $A$ is a ""quasi-Borel space"" if there is a designated set $\Elem{A} ⊆ 𝓟(\ennreal → A)$ of ""random variables"" such that
\begin{enumerate}
  \item $\Elem{A}$ contains all constant functions,
  \item $\Elem{A}$ is closed under precomposition with all "measurable" $f : \ennreal → \ennreal$, and
  \item $\Elem{A}$ is closed under countable case analysis, i.e., $(r ↦ \app{f}(\app{i}{r}){r}) ∈ \Elem{A}$ for all $f : ℕ → \Elem{A}$ and $i : \ennreal \mfun ℕ$.
\end{enumerate}
We define several (but not all) instances of $\Elem$ in \Cref{fig:random-variables}.
\begin{figure}[t]
  \begin{align*}
    \Elem{\B}           & = (\ennreal \mfun \B)                        &
    \Elem{ℕ}            & = (\ennreal \mfun ℕ)                           \\
    \Elem{\ennreal}     & = (\ennreal \mfun \ennreal)                  &
    \Elem{\Sample}      & = (\ennreal \mfun \Sample)                     \\
    \Elem{∏_{i ∈ I} Aᵢ} & = ∏_{i ∈ I} \Elem{Aᵢ}                        &
    \Elem(A \qfun B)    & = \{ \curry{f} | f : \ennreal × A \qfun B \}
  \end{align*}
  \begin{equation*}
    \Elem{∑_{i ∈ I} Aᵢ} = \left\{\, (r ↦ \inj_{(\app{e}(\app{i}{r}))}(\app{α}_{(\app{i}{r})}{r})) | i : \ennreal \qfun ℕ, e : ℕ → I, αᵢ ∈ \Elem{A_{(\app{e}{i})}} \,\right\}
  \end{equation*}
  \vspace{-0.2in}
  \Description{Common Sets of Random Variables.}
  \caption{Common Sets of Random Variables.}
  \label{fig:random-variables}
  \vspace{-0.15in}
\end{figure}
The class of quasi-Borel spaces is denoted by $\Qbs$.
A function $f : A → B$ on $A, B ∈ \Qbs$ is ""quasi-measurable"" if $α ∈ \Elem{A}$ implies $(f ∘ α) ∈ \Elem{B}$. The set of "quasi-measurable" functions from $A$ to $B$ is denoted by $A \qfun B$. A function $f : A × \ennreal \qfun B$ is called a ""random function"".

\paragraph{Functors, Monads, and Arrows.}
\AP
A ""functor"" is a mapping $F : \Qbs → \Qbs$ equipped with a designated function $\map_F : (A \qfun B) \qfun (\app{F}{A} \qfun \app{F}{B})$ such that
\begin{enumerate}
  \item $\map_F{\id_A} = \id_{\app{F}{A}}$ and
  \item $\map_F(f ∘ g) = \map_F{f} ∘ \map_F{g}$ for all $f : B \qfun C$ and $g : A \qfun B$.
\end{enumerate}
\AP
A (strong) ""pre-monad"" is a mapping $F : \Qbs → \Qbs$ equipped with designated operations
\phantomintro\unit
\phantomintro\bind
\begin{align*}
  \reintro*\unit_F & : A \qfun \app{F}{A}                                        &
  \reintro*\bind_F & : (A \qfun \app{F}{B}) \qfun (\app{F}{A} \qfun \app{F}{B}).
\end{align*}
A ""monad"" is a "pre-monad" $F$ such that, for all $f : B \qfun C$ and $g : A \qfun B$,
\begin{align*}
  \bind_F{\unit_F}     & = \id &
  \bind_F{f} ∘ \unit_F & = f   &
  \bind_F{f} ∘ \bind_F{g} = \bind_F(\bind_F{f} ∘ g)
\end{align*}
Every "monad" $F$ is also a "functor" by defining $\map_F{f} = \bind_F(\unit_F ∘ f)$ for functions $f : A \qfun B$.
\AP
A ""pre-arrow"" is a mapping $F : \Qbs × \Qbs → \Qbs$ equipped with designated operations
\phantomintro\arr
\phantomintro\afirst
\phantomintro\acomp
\phantomintro\aleft
\begin{align*}
  \reintro*\arr_F     & : (A \qfun B) \qfun \app{F}(A, B)                      &
  \reintro*\afirst_F  & : \app{F}(A, B) \qfun \app{F}(A × C, B × C)              \\
  (\reintro*\acomp_F) & : \app{F}(B, C) × \app{F}(A, B) \qfun \app{F}(A, C)    &
  \reintro*\aleft_F   & : \app{F}(A, B) \qfun \app{F}(A \bisum C, B \bisum C).
\end{align*}
An ""arrow"" is a "pre-arrow" $F$ which satisfies the properties in \Cref{fig:arrows}.
We omit the subscripts on $\map$, $\unit$, $\bind$, $\arr$, $\acomp$, $\afirst$, and $\aleft$ when they can be inferred from the context.

\begin{figure}
  \begin{align*}
    f \acomp \arr{\id}                      & = f                                    &
    (f \acomp g) \acomp h                   & = f \acomp (g \acomp h)                  \\
    \arr(f ∘ g)                             & = \arr{f} \acomp \arr{g}               &
    \afirst(\arr{f})                        & = \arr(f \fprod \id)                     \\
    \afirst(f \acomp g)                     & = \afirst{f} \acomp \afirst{g}         &
    \arr(\id \fprod g) \acomp \afirst{f}    & = \afirst{f} \acomp \arr(\id \fprod g)   \\
    \arr(\blank₁) \acomp \afirst{f}         & = f \acomp \arr(\blank₁)               &
    \arr{\assoc} \acomp \afirst(\afirst{f}) & = \afirst{f} \acomp \arr{\assoc}
  \end{align*}
  \vspace{-0.2in}
  \Description{Arrow Properties}
  \caption{Arrow Properties.}
  \label{fig:arrows}
  \vspace{-0.15in}
\end{figure}

\paragraph{Measures and Distributions.}
\AP
\citet{quasi-borel-spaces} construct a ""distribution monad"" (or \reintro{measure monad}) $\Dist : \Qbs → \Qbs$ equipped with an ""integration"" operator $(\integral{\blank}{\blank}) : (A \mfun \ennreal) × \Dist{A} \mfun \ennreal$ for all $A ∈ \Qbs$. We write $\integral[x]{y}{μ}$ in place of $\integral{(x ↦ y)}{μ}$ when $μ ∈ \Dist{A}$.
\ifextended
  The "integration operator" satisfies the properties given in \Cref{fig:integral} for all $f, f₁, f₂ : A \qfun \ennreal$, $x ∈ A$, $c ∈ \ennreal$, $μ ∈ \Dist{A}$, $ν ∈ \Dist{B}$, $g : B \qfun A$, and $k₁ ≤ k₂ ≤ … : A \qfun \ennreal$ where $kᵢ$ are ordered point-wise and converge to a function $k : A \qfun \ennreal$.
\else
  The "integration operator" satisfies standard properties.
\fi
Given $f : A \qfun B$, $c ∈ \ennreal$, and $μ ∈ \Dist{A}$, the distributions $\map{f}{μ} ∈ \Dist{B}$ (called the ""push-forward"" of $f$ on $μ$) and $c \intro*\dmul μ ∈ \Dist{A}$ satisfy the properties
\begin{align*}
  \integral{g}{(\map{f}{μ})} & = \integral{g ∘ f}{μ}     &
  \integral{h}{(c \dmul μ)}  & = c \dmul \integral{h}{μ}
\end{align*}
for all $g : B \qfun \ennreal$ and $h : A \qfun \ennreal$.

\ifextended
  \begin{figure}[t]
    \begin{align*}
      \integral{f}{(\unit{x})}                   & = \app{f}{x}                                  &
      \integral{f}{(\bind{g}{ν})}                & = \integral[x]{\integral{f}{(\app{g}{x})}}{ν}   \\
      \integral[x]{c ⋅ \app{f}{x}}{μ}            & = c ⋅ \integral{f}{μ}                         &
      \integral[x]{\app{f₁}{x} + \app{f₂}{x}}{μ} & = \integral{f₁}{μ} + \integral{f₂}{μ}           \\
      \lim_{n \to ∞} \integral{kₙ}{μ}            & = \integral{k}{μ}
    \end{align*}
    \Description{Integral Properties.}
    \caption{Integral Properties.}
    \label{fig:integral}
  \end{figure}
\fi

\AP
The ""uniform distribution"" on samples $\uniform ∈ \Dist{\Sample}$ is the (unique) "distribution" satisfying $\integral[σ]{\iverson{σ₁ ≤ σ ≤ σ₂}}{\uniform} = σ₂ - σ₁$ for all $σ₁ ≤ σ₂ ∈ \Sample$. There exists two functions $\intro*\projl, \intro*\projr : \Sample → \Sample$ such that
\begin{align*}
  \integral[σ]{\app{f}(\projl{σ}, \projr{σ})}{\uniform} = \integral[σ₁]{\integral[σ₂]{\app{f}(σ₁, σ₂)}{\uniform}}{\uniform}
\end{align*}
for all $f : \Sample × \Sample \qfun \ennreal$.

\paragraph{Lists and Trees.}
\AP
The quasi-Borel spaces of ""lists"" $\List{A}$ and ""(rose) trees"" $\Tree{A}$ over $A ∈ \Qbs$ are the smallest sets satisfying
\phantomintro{ε}
\phantomintro{∷}
\phantomintro\node
\begin{align*}
  \List{A} & = \{ \reintro*{ε} \} ∪ \{ x \reintro*∷ \pl{x} | x ∈ A, \pl{x} ∈ \List{A} \} &
  \Tree{A} & = \{ \reintro*\node(x, \pl{x}) | x ∈ A, \pl{x} ∈ \List(\Tree{A}) \}.
\end{align*}
The functions $((x, \pl{x}) ↦ x ∷ \pl{x})$, $((x, \pl{x}) ↦ \node(x, \pl{x}))$, and any functions built using structural recursion with "quasi-measurable" functions, are themselves "quasi-measurable". The mappings $\List, \Tree : \Qbs → \Qbs$ form "monads" with the operations defined in \Cref{fig:lists-and-trees}.

\begin{figure}[t]
  \textbf{Monad Operations}
  \begin{align*}
     & \begin{array}[t]{@{}l@{}}
         \unit : A \qfun \List{A} \\
         \unit{a} = a ∷ ε
       \end{array}                                                                                     &
     & \begin{array}[t]{@{}l@{}}
         \bind : (A \qfun \List{B}) \qfun \List{A} \qfun \List{B} \\
         \bind{f}{ε} = ε                                          \\
         \bind{f}(a ∷ \pl{a}) = \app{f}{a} ⧺ \bind{f}{\pl{a}}
       \end{array}                                         \\
     & \begin{array}[t]{@{}l@{}}
         \unit : A \qfun \Tree{A} \\
         \unit{a} = \node(a, ε)
       \end{array}                                                                                     &
     & \begin{array}[t]{@{}l@{}}
         \bind : (A \qfun \Tree{B}) \qfun \Tree{A} \qfun \Tree{B}               \\
         \bind{f}(\node(a, \pl{a})) = \node(b, \pl{b} ⧺ \map(\bind{f}){\pl{a}}) \\
         \quad \text{where } \node(b, \pl{b}) = \app{f}{a}
       \end{array}
  \end{align*}
  \textbf{Auxiliary Definitions}
  \[
    \begin{array}{@{}l@{}}
      (\intro*⧺) : \List{A} × \List{A} \qfun \List{A} \\
      ε ⧺ \pl{b} = \pl{b}                             \\
      (a ∷ \pl{a}) ⧺ \pl{b} = a ∷ (\pl{a} ⧺ \pl{b})
    \end{array}
  \]
  \vspace{-0.2in}
  \Description{Lists and Rose Trees.}
  \caption{Lists and Rose Trees.}
  \label{fig:lists-and-trees}
  \vspace{-0.15in}
\end{figure}

\end{scope}

%% file: section/3-analysis.tex
\begin{scope}
\knowledgeimport{hedgehog}
\knowledgeimport{background}

\section{Hedgehog} \label{sec:analysis}

\begin{outline}[b]
  \textbf{Hedgehog}
  \begin{itemize}
    \item Syntax
          \begin{itemize}
            \item Types and Terms
            \item Coin Examples
            \item Typing Relation
          \end{itemize}
    \item Semantics
          \begin{itemize}
            \item Interpretations and Semantics
            \item Sampling Interpretation
            \item Coin Examples (coarse-grained)
            \item Semantic Equivalence
            \item Common Properties
          \end{itemize}
    \item Distributions
          \begin{itemize}
            \item Distribution Semantics
            \item Coin Examples (fine-grained)
            \item Non-compositionality
          \end{itemize}
    \item Contextual Equivalence
          \begin{itemize}
            \item Contexts and Contextual Equivalence
            \item Soundness and Completeness
            \item Sampling Interpretation is Sound and Complete
          \end{itemize}
  \end{itemize}
\end{outline}
\AP
In this section, we provide a formal account of the \hedgehog{} \citep{hedgehog} language. Our formalization is given in the style of the monadic metalanguage by \citet{monads}. At a high level, \hedgehog{} is a probabilistic programming language, where generators are specified using probabilistic operators. In addition to probabilistic behaviour, \hedgehog{} contains operators for specifying shrinking behaviour.

\AP
Generator optimizations require formal justification relative to an appropriate semantics. We formalize two candidate semantics for \hedgehog{}:
\begin{enumerate*}
  \item a "sampling semantics" (\Cref{def:sampling-interpretation}), where generators represent "random variables", which closely mirrors \hedgehog{}'s actual implementation, and
  \item a "distribution semantics" (\Cref{def:distribution}), where generators represent distributions, which more closely model how users actually think about generators.
\end{enumerate*}
Both of these semantics have significant issues which make them unsuitable for equivalence proofs. The "sampling semantics" is overly \emph{fine-grained}, i.e., it distinguishes many "terms" which users expect to be identical (\Cref{ex:coin-semantics}) and invalidates many common program transformations (\Cref{thm:unsound-transformations}). The "distribution semantics" is not "compositional", i.e., the distribution of a term is not uniquely determined by the distributions of its immediate sub-"terms" (\Cref{prop:non-compositionality}), which complicates reasoning about larger programs. Unfortunately, this state of affairs cannot be fixed by simply finding the ``right semantics'': we derive "soundness" and "completeness" conditions from the "distribution semantics" (\Cref{def:soundness-and-completeness}) and show that \emph{every} ("sound" and "complete") "compositional" semantics is equivalent to the "sampling semantics" (\Cref{corr:sampling-correct}). This result implies that several common program transformation are unsound, which justifies our need for a restricted version of \hedgehog{}.

\subsection{Syntax}

\AP
Syntactic elements are written in \texttt{teletype} font.

\begin{definition}[\hedgehog{} Syntax] \label[definition]{def:syntax}
  \AP
  \phantomintro\tabsurd
  \phantomintro\tunit
  \phantomintro\tinl
  \phantomintro\tinr
  \phantomintro\tmatch
  \phantomintro\tpair
  \phantomintro\tfst
  \phantomintro\tsnd
  \phantomintro\tlam
  \phantomintro{function application}
  \phantomintro\treturn
  \phantomintro\tlet
  \phantomintro\tflip
  \phantomintro\tshrink
  \looseness-1 The set of \hedgehog{} ""types"" $\Type$ is generated by
  \begin{equation*}
    τ
    ⩴ \intro*\Empty
    | \intro*\Unit
    | \intro*\Sum{τ}{τ}
    | \intro*\Prod{τ}{τ}
    | \intro*\Fun{τ}{τ}
    | \intro*\Gen{τ}
  \end{equation*}
  and the set of \hedgehog{} ""terms"" $\Term$ is generated by
  \begin{align*}
    t
    ⩴ {} & x
    | \reintro*\tabsurd{t}
    | \reintro*\tunit
    | \reintro*\tinl{t}
    | \reintro*\tinr{t}
    | \reintro*\tmatch{t}{x}{t}{x}{t}
    | \reintro*\tpair{t, t}  \\
    | {} & \reintro*\tfst{t}
    | \reintro*\tsnd{t}
    | \reintro*\tlam{x}{τ}{t}
    | \app{t}{t}
    | \reintro*\treturn{t}
    | \reintro*\tlet{x}{t}{t}
    | \reintro*\tflip{p}
    | \reintro*\tshrink
  \end{align*}
  where ""variables"" (e.g., $x$) are sourced from a countably infinite set of names $\Var$, and probabilities (e.g., $p$) are elements of the real interval $[0, 1]$. We identify terms up to $α$-equivalence.
\end{definition}

\AP The set of types consists of the "empty type" ($\Empty$), the "singleton type" ($\Unit$), "sum types" ($\Sum{τ}{τ}$), "product types" ($\Prod{τ}{τ}$), "function types" ($\Fun{τ}{τ}$), and "generator types" ($\Gen{τ}$). Intuitively, the type $\Gen{τ}$ is the type of generators which produce elements of the type $τ$. The set of terms contains many standard constructs including variables ($x$), left and right injections ($\tinl$, $\tinr$), eliminations for values of the empty type ($\tabsurd{t}$), case analysis ($\tmatch{t}{x}{t}{x}{t}$), the singleton value ($\tunit$), tuples ($\tpair{t, t}$), anonymous functions ($\tlam{x}{τ}{t}$), function application ($\app{t}{t}$), effect-free computations ($\treturn{t}$), and sequential composition ($\tlet{x}{t}{t}$). We derive some common constructs in \Cref{fig:derived-constructs}. In addition to standard constructs, \hedgehog{} has the following effectful operations:
\begin{enumerate}
  \item The biased \reintro{coin flip operation} $\tflip{p}$ generates $\ttrue$ with probability $p$ and $\tfalse$ with probability $1 - p$.
  \item The \reintro{shrinking operation} $\tshrink$ may be applied to an $n$-tuple $\tpair{t₀, t₁, …, tₙ}$. The operation $\tshrink{\tpair{t₀, t₁, … tₙ}}$ behaves like $t₀$ during the "testing stage@@introduction", but attempts each of $t₁, …, tₙ$ in order during the "shrinking stage@@introduction".
\end{enumerate}

\begin{figure}[t]
  \begin{mathpar}
    \intro*\Bool = \Sum{\Unit}{\Unit}

    \intro*\ttrue = \tinl{\tunit}

    \intro*\tfalse = \tinr{\tunit}

    \intro*\tif{t₁}{t₂}{t₃} = \tmatch{t₁}{x}{t₂}{x}{t₃}

    \reintro*\tpair{t₁, t₂, …, tₙ} = \tpair{t₁, \tpair{t₂, …, tₙ}}

    \intro*\tnot{t₁} = \tif{t₁}{\tfalse}{\ttrue}

    t₁ \intro*\teq t₂ = \tif{t₁}{t₂}{\tnot{t₂}}

    t₁ \intro*\tand t₂ = \tif{t₁}{t₂}{\tfalse}
  \end{mathpar}
  \Description{Derived}
  \vspace{-0.2in}
  \caption{Derived Constructs.}
  \label{fig:derived-constructs}
  \vspace{-0.15in}
\end{figure}

\begin{example} \label[example]{ex:coin-syntax}
  \AP
  \phantomintro\tcoini
  \phantomintro\tcoinii
  We recast our examples from \Cref{sec:introduction} in our chosen syntax, modelling \texttt{bool\textunderscore} as $\tflip{\nicefrac{1}{2}}$:
  \begin{align*}
    \begin{array}[t]{@{}l@{}}
      \reintro*\tcoini =
      \tlet+{x}{\tflip{\nicefrac{1}{2}}}{
        \tlet+{y}{\tflip{\nicefrac{1}{2}}}{
          \quad \treturn{x \teq y}}}
    \end{array}
     &  &
    \begin{array}[t]{@{}l@{}}
      \reintro*\tcoinii =
      \tlet+{x}{\tflip{\nicefrac{1}{2}}}{
        \quad \treturn{x}}
    \end{array}
  \end{align*}
\end{example}

\begin{definition}[\hedgehog{} Environments and Typing] \label[definition]{def:typing}
  \AP
  An ""environment"" $Γ ∈ ∏_{x ∈ \Dom{Γ}} \Type$ is a family mapping "variables" from a finite ""domain"" $\Dom{Γ} ⊆ \Var$ to "types". The set of environments is denoted by $\Env$, and the singleton environment is written $x : τ$. The ""typing relation"" $(\typed[\blank]{\blank}{\blank}) ⊆ \Env × \Term × \Type$ is the smallest relation satisfying the rules in \Cref{fig:typing}. If $t$ has no free variables, we write $\typed{t}{τ}$ in place of $∀Γ, \typed[Γ]{t}{τ}$, since the environment has no bearing on whether $t$ is typed.
\end{definition}

\begin{figure}[t]
  \begin{mathpar}
    \infer{ }{\typed[Γ, x : τ]{x}{τ}}

    \infer{
      \typed[Γ]{t}{τ₁}
    }{\typed[Γ]{\tinl{t}}{\Sum{τ₁}{τ₂}}}

    \infer{
      \typed[Γ]{t}{τ₂}
    }{\typed[Γ]{\tinr{t}}{\Sum{τ₁}{τ₂}}}

    \infer{
      \typed[Γ]{t}{\Empty}
    }{\typed[Γ]{\tabsurd{t}}{τ}}

    \infer{
      x₁, x₂ ∉ \Dom{Γ} \\
      \typed[Γ]{t₀}{\Sum{τ₁}{τ₂}} \\\\
      \typed[Γ, x₁ : τ₁]{t₁}{τ₃} \\
      \typed[Γ, x₂ : τ₂]{t₂}{τ₃}
    }{\typed[Γ]{\tmatch{t₀}{x₁}{t₁}{x₂}{t₂}}{τ_3}}

    \infer{ }{\typed[Γ]{\tunit}{\Unit}}

    \infer{
      \typed[Γ]{t₁}{τ₁} \\
      \typed[Γ]{t₂}{τ₂}
    }{\typed[Γ]{\tpair{t₁, t₂}}{\Prod{τ₁}{τ₂}}}

    \infer{
      \typed[Γ]{t}{\Prod{τ₁}{τ₂}}
    }{\typed[Γ]{\tfst{t}}{τ₁}}

    \infer{
      \typed[Γ]{t}{\Prod{τ₁}{τ₂}}
    }{\typed[Γ]{\tsnd{t}}{τ₂}}

    \infer{
      x ∉ \Dom{Γ} \\
      \typed[Γ, x : τ₁]{t}{τ₂}
    }{\typed[Γ]{\tlam{x}{τ₁}{t}}{\Fun{τ₁}{τ₂}}}

    \infer{
      \typed[Γ]{t}{τ}
    }{\typed[Γ]{\treturn{t}}{\Gen{τ}}}

    \infer{
    }{\typed[Γ]{\tflip{p}}{\Gen{\Bool}}}

    \infer{
      x ∉ \Dom{Γ} \\
      \typed[Γ]{t₁}{\Gen{τ₁}} \\
      \typed[Γ, x : τ₁]{t₂}{\Gen{τ₂}}
    }{\typed[Γ]{\tlet{x}{t₁}{t₂}}{\Gen{τ₂}}}

    \infer{
    }{\typed[Γ]{\tshrink}{\Fun{\Prod{τ}{\Prod{⋯}{τ}}}{\Gen{τ}}}}
  \end{mathpar}
  \vspace{-0.2in}
  \caption{Hedgehog's Typing Rules.}
  \Description{\hedgehog{}'s Typing Rules.}
  \label{fig:typing}
  \vspace{-0.15in}
\end{figure}

\AP
Not every term is sensible (e.g., consider `$\tfst{\tunit}$'). The typing relation $\typed[Γ]{t}{τ}$ describes exactly which terms are sensible within a given environment.

\subsection{Semantics}

\begin{definition}[\hedgehog{} Interpretation and Semantics] \label[definition]{def:semantics}
  \AP
  \phantomintro\flip
  \phantomintro\shrink
  A \hedgehog{} ""interpretation"" is a "pre-monad" $F : \Qbs → \Qbs$ equipped with additional operations
  \begin{align*}
     & \reintro*\flip_F : [0, 1] \qfun \app{F}{\B}            &
     & \reintro*\shrink^n_F : ∏_{i = 0}^n A \qfun \app{F}{A}.
  \end{align*}
  An interpretation induces
  \begin{enumerate*}
    \item a ""type and environment semantics"" $\tysem{\blank}^F : \Type ∪ \Env → \Qbs$ and
    \item for all $\typed[Γ]{t}{τ}$ a ""term semantics"" $\tmsem[Γ]{t}[τ]^F : \tysem{Γ}^F \qfun \tysem{τ}^F$
  \end{enumerate*}
  (see \Cref{fig:semantics}). When $t$ has no free variables, we write $\tmsem{t}[τ]^F$ in place of $\tmsem[\sing]{t}[τ]^F\sing$. We also omit the type ascription ($: τ$) when it can be inferred from the surrounding context.
\end{definition}

\begin{figure}[t]
  \textbf{Type and Environment Semantics}
  \begin{mathpar}
    \tysem{\Empty}^F = \zero

    \tysem{\Unit}^F = \one

    \tysem{\Sum{τ₁}{τ₂}}^F = \tysem{τ₁}^F + \tysem{τ₂}^F

    \tysem{\Prod{τ₁}{τ₂}}^F = \tysem{τ₁}^F × \tysem{τ₂}^F

    \tysem{\Fun{τ₁}{τ₂}}^F = \tysem{τ₁}^F \qfun \tysem{τ₂}^F

    \tysem{\Gen{τ}}^F = \app{F}{\tysem{τ}^F}

    \tysem{Γ}^F = ∏_{x ∈ \Dom{Γ}} \tysem{Γ_x}^F
  \end{mathpar}
  \textbf{Term Semantics}
  \begin{mathpar}
    \tmsem[Γ, x : τ]{x}[τ]^F{ρ} = ρ_x

    \tmsem[Γ]{\tunit}[\Unit]^F{ρ} = \sing

    \tmsem[Γ]{\tabsurd{t}}[τ]^F{ρ} = \absurd(\tmsem[Γ]{t}[\Empty]^F{ρ})

    (\tmsem[Γ]{\tfst{t}}[τ₁]^F{ρ}, \tmsem[Γ]{\tsnd{t}}[τ₂]^F{ρ}) = \tmsem[Γ]{t}[\Prod{τ₁}{τ₂}]^F{ρ}

    \tmsem[Γ]{\tpair{t₁, t₂}}[\Prod{τ₁}{τ₂}]^F{ρ} = (\tmsem[Γ]{t₁}[τ₁]^F{ρ}, \tmsem[Γ]{t₂}[τ₂]^F{ρ})

    \tmsem[Γ]{\tinl{t}}[\Sum{τ₁}{τ₂}]^F = \inj_1 ∘ \tmsem[Γ]{t}[τ₁]^F

    \tmsem[Γ]{\tinr{t}}[\Sum{τ₁}{τ₂}]^F = \inj_2 ∘ \tmsem[Γ]{t}[τ₂]^F

    \begin{array}{@{}l@{}}
      \tmsem[Γ]{\tmatch{t₀}{x₁}{t₁}{x₂}{t₂}}[τ₃]^F{ρ}                                                                                            \\
      \qquad = \begin{cases}
                 \tmsem[Γ, x₁ : τ₁]{t₁}[τ₃]^F(ρ_x)_{x ∈ \Dom{Γ} ∪ \{ x₁ \}} & \text{if } \tmsem[Γ]{t₀}[\Sum{τ₁}{τ₂}]^F{ρ} = \app{\inj_1}{ρ_{x₁}} \\
                 \tmsem[Γ, x₂ : τ₂]{t₂}[τ₃]^F(ρ_x)_{x ∈ \Dom{Γ} ∪ \{ x₂ \}} & \text{if } \tmsem[Γ]{t₀}[\Sum{τ₁}{τ₂}]^F{ρ} = \app{\inj_2}{ρ_{x₂}}
               \end{cases} \\
      \quad \text{where } \typed[Γ]{t₀}{\Sum{τ₁}{τ₂}},\quad \typed[Γ, x₁ : τ₁]{t₁}{τ₃},\quad \typed[Γ, x₂ : τ₂]{t₂}{τ₃}
    \end{array}

    \tmsem[Γ]{\tlam{x}{τ₁}{t}}[\Fun{τ₁}{τ₂}]^F{ρ} = ρ_x ↦ \tmsem[Γ, x : τ₁]{t}[τ₂]^F(ρ_y)_{y ∈ \Dom{Γ} ∪ \{x\}}

    \begin{array}{@{}l@{}}
      \tmsem[Γ]{\app{t₁}{t₂}}[τ₂]^F{ρ} = \app{\tmsem[Γ]{t₁}[\Fun{τ₁}{τ₂}]^F{ρ}}(\tmsem[Γ]{t₂}[τ₁]^F{ρ}) \\
      \quad \text{where } \typed[Γ]{t₁}{\Fun{τ₁}{τ₂}},\quad \typed[Γ]{t₂}{τ₁}
    \end{array}

    \tmsem[Γ]{\treturn{t}}[\Gen{τ}]^F{ρ} = \unit(\tmsem[Γ]{t}[τ]^F{ρ})

    \begin{array}{@{}l@{}}
      \tmsem[Γ]{\tlet{x}{t₁}{t₂}}[\Gen{τ₂}]^F{ρ}                                                                                 \\
      \qquad = \bind_F(ρ_x ↦ \tmsem[Γ, x : τ₁]{t₂}[\Gen{τ₂}]^F\app(ρ_y)_{y ∈ \Dom{Γ} ∪ \{x\}})(\tmsem[Γ]{t₁}[\Gen{τ₁}]^F\app{ρ}) \\
      \quad \text{where } \typed[Γ]{t₁}{\Gen{τ₁}}, \quad \typed[Γ, x : τ₁]{t₂}{\Gen{τ₂}}
    \end{array}

    \tmsem[Γ]{\tflip{p}}[\Gen{\Bool}]^F{ρ} = \flip_F{p}

    \tmsem[Γ]{\tshrink}[\Fun{\underbrace{\Prod{τ}{\Prod{⋯}{τ}}}_{n + 1\text{ times}}}{\Gen{τ}}]^F{ρ} = \shrink^n_F
  \end{mathpar}
  \vspace{-0.15in}
  \caption{Semantics of \hedgehog{}.}
  \Description{Semantics of \hedgehog{}.}
  \vspace{-0.1in}
  \label{fig:semantics}
\end{figure}

\AP
The semantics of most "types" and "terms" are standard, where "types" denote suitable "quasi-Borel spaces" and "terms" denote "quasi-measurable functions" over \reintro{variable assignments} (i.e., elements of $\tysem{Γ} = ∏_{x ∈ \Dom{Γ}} Γₓ$). An interpretation specifies the semantics of generator "types" and "terms" ($\Gen$, $\treturn$, $\tletkeyword$, $\tflip$, and $\tshrink$). We only require that an interpretation $F$ is a "pre-monad", and not a "monad": QuickCheck generators are known not to form a monad \citep{quickcheck}, and \hedgehog{}'s "sampling interpretation" (\Cref{def:sampling-interpretation}) inherits this property. An important feature of \Cref{def:semantics} is that it is \reintro{compositional}, i.e., the semantics of a "term" is uniquely determined by the semantics of its immediate sub-"terms".

\begin{definition}[\hedgehog{} Sampling Interpretation] \label[definition]{def:sampling-interpretation}
  \AP
  The ""sampling interpretation"" $\Sampling : \Qbs → \Qbs$ of Hedgehog is defined in \Cref{fig:sampling-interpretation}.
\end{definition}

\begin{figure}[t]
  \begin{mathpar}
    \Sampling{A} = (\Sample \qfun \Tree{A})

    \unit_{\Sampling}{a} = σ ↦ \unit{a}

    \bind_{\Sampling}{f}{m} = σ ↦ \bind(a ↦ \app{f}{a}(\projr{σ}))(\app{m}(\projl{σ}))

    \flip_{\Sampling}{p} = σ ↦ \unit(σ < p)

    \shrink^n_{\Sampling}(a₀, a₁, …, aₙ) = σ ↦ \node(a₀, \unit{a₁} ∷ … ∷ \unit{aₙ} ∷ ε)
  \end{mathpar}
  \vspace{-0.2in}
  \caption{\hedgehog{}'s Sampling Interpretation.}
  \Description{Hedgehog's Sampling Interpretation.}
  \label{fig:sampling-interpretation}
  \vspace{-0.15in}
\end{figure}

\AP
\Cref{def:sampling-interpretation} closely mirrors the actual implementation of \hedgehog{}. We describe each component of the "sampling interpretation" as follows:
\begin{enumerate}
  \item Generators ($\Gen{τ}$) are represented as deterministic functions from the "sample space" $\Sample$ to "rose trees" (i.e., generators are "random variables"). Intuitively, the probabilistic behaviour of a generator arises by feeding it a uniformly distributed random "sample", and the resulting "tree" (called a ""shrink tree"") describes a value and all the ways it can be shrunk.
  \item The ``pure'' computation $\treturn{t}$ ignores its "sample" and returns a tree with a single node labelled by the value of $t$.
  \item Sequential composition $\tlet{x}{t₁}{t₂}$ splits its "sample", passing a different half to $t₁$ and $t₂$.
  \item The probabilistic choice operation $\tflip{p}$ returns $\ttrue$ if its "sample" lies in the range $[0, p)$.
  \item The shrink operation $\tshrink$ ignores its "sample" and returns a shrink tree comprising of its arguments.
\end{enumerate}

\begin{example} \label[example]{ex:coin-semantics}
  \AP
  The semantics of $\typed{\tcoini, \tcoinii}{\Gen{\Bool}}$, given in \Cref{ex:coin-syntax}, are as follows:
  \begin{align*}
    \tmsem{\tcoini}^{\Sampling}{σ}  & = \unit(\projl{σ} < \nicefrac{1}{2} ⇔ \projl(\projr{σ}) < \nicefrac{1}{2}) &
    \tmsem{\tcoinii}^{\Sampling}{σ} & = \unit(\projl{σ} < \nicefrac{1}{2})
  \end{align*}
  While $\tcoini$ and $\tcoinii$ intuitively represent the same distribution of values, they are not equivalent under the "sampling semantics": $\Sampling$ is overly \emph{fine-grained}.
\end{example}

\subsection{Program Transformations}

\AP
We often talk about "term" equivalence under an "interpretation" (e.g., does $\tmsem{\tcoini}^F = \tmsem{\tcoinii}^F$ for some other "interpretation" $F$?), so we define a shorthand in \Cref{def:semantic-equivalence}.

\begin{definition}[\hedgehog{} Semantic Equivalence] \label[definition]{def:semantic-equivalence}
  \AP
  Two \hedgehog{} "terms" $\typed[Γ]{t₁, t₂}{τ}$ are ""semantically equivalent"" under an "interpretation" $F$ (written $\semeq{F}[Γ]{t₁}{t₂}[τ]$) if $\tmsem[Γ]{t₁}[τ]^F = \tmsem[Γ]{t₂}[τ]^F$. We write $\semeq{F}{t₁}{t₂}[τ]$ for $\semeq{F}[\sing]{t₁}{t₂}[τ]$ and omit the type ascription ($: τ$) when it can be inferred from the surrounding context.
\end{definition}

To further illustrate how the "sampling interpretation" is unsuitable for semantic equivalence proofs, we show that several common program transformations are not permitted under $\Sampling$.

\begin{theorem} \label[theorem]{thm:unsound-transformations}
  The following equivalences \emph{do not hold} (whenever both sides of the equivalence have the same type):
  \begin{align}
    \semeq{\Sampling}[Γ]{\tlet{x}{\treturn{t}}{\app{f}{x}}
      & }{\app{f}{t}}
    \label{eq:constant-folding} \\
    \semeq{\Sampling}[Γ]{\tlet{x}{t}{\treturn{x}}
      & }{t}
    \label{eq:eta-reduction}    \\
    \semeq{\Sampling}[Γ]{\tlet{y}{(\tlet{x}{t}{\app{f}{x}})}{\app{g}{y}}
      & }{\tlet{x}{t}{\tlet{y}{\app{f}{x}}{\app{g}{y}}}}
    \label{eq:inlining}         \\
    \semeq{\Sampling}[Γ]{\tlet{x}{\tflip{p}}{t}
      & }{t}
    \label{eq:flip-id}          \\
    \semeq{\Sampling}[Γ]{\tlet{x}{\tflip{p}}{\tlet{y}{t}{\app{f}{x}{y}}}
      & }{\tlet{y}{t}{\tlet{x}{\tflip{p}}{\app{f}{x}{y}}}}
    \label{eq:flip-comm}        \\
    \semeq{\Sampling}[Γ]{\tlet{x}{\tflip{p}}{\app{f}(\tnot{x})}
      & }{\tlet{x}{\tflip(1 - p)}{\app{f}{x}}}
    \label{eq:flip-inv}         \\
    \semeq{\Sampling}[Γ]{\tlet{x}{\tflip{p}}{\tlet{y}{\tflip{q}}{\app{f}(x \tand y)}}
      & }{\tlet{x}{\tflip(p ⋅ q)}{\app{f}{x}}}
    \label{eq:flip-conj}
  \end{align}
\end{theorem}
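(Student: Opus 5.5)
The plan is to read ``do not hold'' as saying that the equation fails for at least one instance. For each of \eqref{eq:constant-folding}--\eqref{eq:flip-conj} I will therefore choose a closed instance (taking $Γ$ empty, with concrete $t$, $f$, $g$, $p$, $q$), compute both sides with the clauses of \Cref{fig:sampling-interpretation}, and exhibit one sample $σ$ on which the two shrink trees differ. In each case the trees will be single nodes $\unit{b}$ with $b ∈ \B$, so it suffices to show that two boolean predicates of $σ$ disagree at some $σ$. The computations are routine unfoldings:
\begin{itemize}
  \item $\tmsem{\treturn{t}}^{\Sampling}$ ignores $σ$;
  \item $\tlet{x}{t_1}{t_2}$ feeds $\projl{σ}$ to $t_1$ and $\projr{σ}$ to $t_2$;
  \item $\tflip{p}$ tests $σ < p$;
  \item the tree $\bind$ on single nodes collapses to the inner tree.
\end{itemize}

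The instances are as follows.

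For \eqref{eq:constant-folding}, take $t = \tunit$ and $f = \tlam{z}{\Unit}{\tlet{y}{\tflip{\nicefrac12}}{\treturn{y}}}$. The left side is $\unit(\projl(\projr{σ}) < \nicefrac12)$ and the right side is $\unit(\projl{σ} < \nicefrac12)$.

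For \eqref{eq:eta-reduction}, take $t = \tflip{\nicefrac12}$. This gives $\unit(\projl{σ} < \nicefrac12)$ against $\unit(σ < \nicefrac12)$.

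For \eqref{eq:inlining}, take $t = \tflip{\nicefrac12}$, $f = \tlam{x}{\Bool}{\treturn{x}}$ and $g = \tlam{y}{\Bool}{\treturn{y}}$. The left side reads its flip at $\projl(\projl{σ})$, while the right side reads it at $\projl{σ}$.

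For \eqref{eq:flip-id}, take $t = \tflip{\nicefrac12}$. This gives $\projr{σ}$ against $σ$.

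For \eqref{eq:flip-comm}, take $t = \tflip{\nicefrac12}$ and $f = \tlam{x}{\Bool}{\tlam{y}{\Bool}{\treturn{x}}}$. This gives $\projl{σ}$ against $\projl(\projr{σ})$.

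For \eqref{eq:flip-inv}, I would rather pick a case where the discrepancy is in shrink-tree structure or sample routing than in the distribution. Take $f = \tlam{x}{\Bool}{\treturn{x}}$. The left side is $\unit(\lnot(\projl{σ} < p))$ and the right side is $\unit(\projl{σ} < 1-p)$. With $p = \nicefrac12$ these differ exactly at $\projl{σ} = \nicefrac12$, which lies in $\Sample$.

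For \eqref{eq:flip-conj}, take $p = q = \nicefrac12$ and $f = \tlam{x}{\Bool}{\treturn{x}}$. The left side is $\unit(\projl{σ} < \nicefrac12 ∧ \projl(\projr{σ}) < \nicefrac12)$ and the right side is $\unit(\projl{σ} < \nicefrac14)$. Any $σ$ with $\projl{σ} < \nicefrac14$ and $\projl(\projr{σ}) \ge \nicefrac12$ separates them.

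Two kinds of comparison remain, and I will handle them with two lemmas.

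\emph{Independence lemma.} Comparisons such as $\projl{σ}$ against $\projl(\projr{σ})$, and the last one in \eqref{eq:flip-conj}, compare events depending on $\projl{σ}$ and on $\projr{σ}$ separately. Here I use the stated defining property of $\projl, \projr$. If $[h_1(\projl σ)] = [h_2(\projr σ)]$ held for every $σ$, then
\[
\integral[σ]{\iverson{h_1(\projl{σ}) \ne h_2(\projr{σ})}}{\uniform} = 0 .
\]
By the product property, the left-hand integral equals $\nicefrac12$ when both events have probability $\nicefrac12$, which is a contradiction. For instance, $\projl(\projr σ)<\nicefrac12$ has probability $\nicefrac12$ by applying the product property once more. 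The same argument gives positive measure, and hence existence, for the conjunction needed in \eqref{eq:flip-conj}.

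\emph{Concrete-split lemma.} The main obstacle is the comparisons in which one side is the raw sample: $σ$ against $\projl{σ}$ or $\projr{σ}$, as in \eqref{eq:eta-reduction} and \eqref{eq:flip-id}, and $\projl(\projl σ)$ against $\projl σ$, as in \eqref{eq:inlining}. The abstract integral property of $\projl, \projr$ does not obviously forbid $\projl = \id$ on the relevant events, since $σ$ itself is not expressible through the splitting. I would therefore first try to get by with the independence trick and change the instances so that both sides route through distinct split coordinates, for example by prefixing an extra $\tlet{}{}{}$. If that is impossible, because the left side always sees one more $\projl$ than the right, I will fix the concrete construction of $\projl, \projr$ as binary-digit de-interleaving, which Hedgehog's splitmix-style splitting models. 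I then exhibit an explicit $σ$. For example, a $σ$ whose first binary digit is $1$ and whose third digit is $0$ satisfies $σ \ge \nicefrac12$ while $\projl{σ} < \nicefrac12$. The analogous digit choices handle $\projr$ and $\projl ∘ \projl$.
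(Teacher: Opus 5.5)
Your reading of ``do not hold'' as ``fails for some instance'' is right. The paper itself gives only the informal remark that each rewrite changes how samples are routed, so one closed counterexample per law is the natural way to make it precise. Your unfoldings of both sides are all correct. Cases \eqref{eq:constant-folding}, \eqref{eq:flip-comm} (once you fix $p \in (0,1)$) and \eqref{eq:flip-conj} are fully settled by your independence lemma.

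The gap is in \eqref{eq:eta-reduction}, \eqref{eq:flip-id} and \eqref{eq:inlining}. There your instances compare $\sigma$ with $\projl{\sigma}$ or $\projr{\sigma}$, or $\projl(\projl{\sigma})$ with $\projl{\sigma}$, and the product identity, which is the only property the paper grants $\projl, \projr$, cannot separate these. Here is an admissible splitting under which they coincide:
\begin{itemize}
\item Let $\projl{\sigma}$ collect the odd-position binary digits of $\sigma$ and $\projr{\sigma}$ the even-position ones.
\item On the null set where a collected sequence ends in all $1$s, set the value to half its leading digit.
\end{itemize}
The product identity only sees integrals, so it survives, yet now $(\projl{\sigma} < \nicefrac{1}{2}) = (\sigma < \nicefrac{1}{2})$ for \emph{every} $\sigma$. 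Your instances of \eqref{eq:eta-reduction} and \eqref{eq:inlining} are therefore true equations for this splitting, and swapping the roles of $\projl$ and $\projr$ does the same to your instance of \eqref{eq:flip-id}. Your fallback of committing to a concrete splitting thus proves the theorem only for that splitting, not for the paper's $\Sampling$. Your digit example also does not fit de-interleaving: the leading digit of $\projl{\sigma}$ is the first (or second) digit of $\sigma$, never the third.

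No concrete splitting is needed. Your first option works once each instance puts the flip in the half that the extra projection leaves alone, so that one side depends only on $\projl{\sigma}$ and the other only on $\projr{\sigma}$.
\begin{itemize}
\item For \eqref{eq:eta-reduction} your prefixing idea is exactly right: $t = \tlet{z}{\treturn{\tunit}}{\tflip{\nicefrac{1}{2}}}$ gives $\unit(\projr(\projl{\sigma}) < \nicefrac{1}{2})$ against $\unit(\projr{\sigma} < \nicefrac{1}{2})$.
\item For \eqref{eq:flip-id} the same prefix only gives $\projr(\projr{\sigma})$ against $\projr{\sigma}$, which is the same problem. Take instead $t = \tlet{z}{\tflip{\nicefrac{1}{2}}}{\treturn{z}}$, giving $\unit(\projl(\projr{\sigma}) < \nicefrac{1}{2})$ against $\unit(\projl{\sigma} < \nicefrac{1}{2})$.
\item For \eqref{eq:inlining} move the flip into $f$. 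With $t = \treturn{\tunit}$, $f = \tlam{x}{\Unit}{\tflip{\nicefrac{1}{2}}}$ and $g = \tlam{y}{\Bool}{\treturn{y}}$, the sides are $\unit(\projr(\projl{\sigma}) < \nicefrac{1}{2})$ and $\unit(\projl(\projr{\sigma}) < \nicefrac{1}{2})$.
\end{itemize}
Your independence lemma then closes all three.

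Separately, there is a slip in \eqref{eq:flip-inv}. With $p = \nicefrac{1}{2}$ your two sides are $\unit(\projl{\sigma} \geq \nicefrac{1}{2})$ and $\unit(\projl{\sigma} < \nicefrac{1}{2})$. These differ at \emph{every} $\sigma$, not only where $\projl{\sigma} = \nicefrac{1}{2}$, a value $\projl$ need not even attain. The case is trivial, but not for the reason you state.
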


Intuitively, none of these transformations are sound under the "sampling interpretation" because, similar to \Cref{ex:coin-semantics}, they all change how samples are interpreted. We discuss each equation in detail:
\begin{itemize}
  \item \Cref{eq:constant-folding} states that an effect-free computation, when $\tletkeyword$-bound to a variable, can be inlined. This is an optimization: removing a $\tletkeyword$ expression reduces the number of "sample" splits under the "sampling semantics". \Cref{eq:constant-folding} is a form of constant folding \citep{dragon-book}.
  \item \Cref{eq:eta-reduction} formalizes the intuition that $\treturn{x}$ is a ``no-op''. Like \cref{eq:constant-folding}, \cref{eq:eta-reduction} is an optimization and reduces the number of "sample" splits under the "sampling semantics".
  \item \Cref{eq:inlining} states that the order in which multiple sequential compositions are formed does not matter as long as the overall order of computations remains the same. While not strictly an optimization by itself, \cref{eq:inlining} is necessary to justify further optimizations. For example, if \cref{eq:inlining,eq:flip-id} hold then
        \begin{align*}
          \tlet{x}{(\tlet{y}{t₁}{\tflip})}{t₂}
           & = \tlet{y}{t₁}{\tlet{x}{\tflip}{t₂}} \\
           & = \tlet{y}{t₁}{t₂}.
        \end{align*}
        \Cref{eq:constant-folding,eq:eta-reduction,eq:inlining} are essentially the "monad" laws, and hence $\Sampling$ is not a "monad".
  \item \Cref{eq:flip-id} is a standard property of probabilistic choice \citep{probabilistic-λ-calculus-3}, and states that a $\tflip$ statement whose result is unused can be removed. This is also an optimization that reduces the number of "sample" splits under the "sampling semantics".
  \item \Cref{eq:flip-comm} is another standard property of probabilistic choice \citep{commutative-semantics}, and states that the order in which a random value is generated is irrelevant. While not an optimization by itself, \cref{eq:flip-comm} is necessary to justify further optimizations. For example, if \cref{eq:flip-comm,eq:flip-id} hold then
        \begin{align*}
          \left(\tlet+[]{x}{\tflip{p}}{\tlet+{y}{\tshrink{…}}{\tlet+{z}{\tflip{q}}{\quad \app{f}(x \tand z)}}}\right)
          = \left(\tlet+[]{x}{\tflip{p}}{\tlet+{z}{\tflip{q}}{\tlet+{y}{\tshrink{…}}{\quad \app{f}(x \tand z)}}}\right)
          = \left(\tlet+[]{x}{\tflip(p ⋅ q)}{\tlet+{y}{\tshrink{…}}{\quad \app{f}{x}}}\right).
        \end{align*}
  \item \Cref{eq:flip-inv} states that flipping a biased coin with parameter $p$ is the same as flipping a biased coin with parameter $1 - p$ and interchanging the results. Since $p$ is a constant in our setting, \cref{eq:flip-inv} is an optimization because it eliminates a negation ($\tnot$).
  \item \Cref{eq:flip-conj} provides a condition for merging two calls to $\tflip$. Since $\tflip$'s parameter is a constant in our setting, \cref{eq:flip-conj} is an optimization because it reduces the number of "sample" splits under the "sampling semantics".
\end{itemize}

In contrast to \Cref{thm:unsound-transformations}, the following transformations \emph{are} permitted under $\Sampling$.

\begin{theorem} \label[theorem]{thm:sound-transformations}
  The following equivalences hold:
  \begin{align}
    \semeq{\Sampling}[Γ]{\tlet{x}{\treturn{t}}{\tshrink(\app{f₁}{x}){…}({\app{fₙ}{x}})}
      & }{\tshrink(\app{f₁}{t}){…}({\app{fₙ}{t}})}
    \label{eq:shrink-constant-folding} \\
    \semeq{\Sampling}[Γ]{\tlet{x}{\tshrink{t₀}{…}{tₙ}}{\treturn{x}}
      & }{\tshrink{t₀}{…}{tₙ}}
    \label{eq:shrink-eta-reduction}    \\
    \semeq{\Sampling}[Γ]{\tshrink{t}
      & }{\treturn{t}}
    \label{eq:shrink-elim}             \\
    \semeq{\Sampling}[Γ]{\tif{t₁}{t₂}{t₂}
      & }{t₂}
    \label{eq:if-elim}                 \\
    \semeq{\Sampling}[Γ]{\tif{\ttrue}{t₁}{t₂}
      & }{t₁}
    \label{eq:if-true}                 \\
    \semeq{\Sampling}[Γ]{\tif{\tfalse}{t₁}{t₂}
      & }{t₂}
    \label{eq:if-false}                \\
    \semeq{\Sampling}[Γ]{\tflip{0}
      & }{\treturn{\tfalse}}
    \label{eq:flip-zero}               \\
    \semeq{\Sampling}[Γ]{\tflip{1}
      & }{\treturn{\ttrue}}
    \label{eq:flip-one}
  \end{align}
  \begin{equation}
    \semeq{\Sampling}[Γ]{
      \left(
      \tlet*[]{y}{(\tlet*{x}{\tshrink{t₀}{…}{tₙ}}{
      \tshrink(\app{f₁}{x}){…}(\app{fₘ}{x}))}}{
        \tshrink(\app{g₀}{y}){…}(\app{gₖ}{y})}
      \right)
    }{
      \left(
      \tlet+[]{x}{\tshrink{t₀}{…}{tₙ}}{
      \tlet+{y}{\tshrink(\app{f₁}{x}){…}(\app{fₘ}{x})}{
        \quad \tshrink(\app{g₀}{y}){…}(\app{gₖ}{y})}}
      \right)
    }
    \label{eq:shrink-inlining}
  \end{equation}
\end{theorem}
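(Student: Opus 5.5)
The plan is to verify each equivalence directly from the definition of $\Sampling$ in \Cref{fig:sampling-interpretation}, unfolding \Cref{fig:semantics} at an arbitrary variable assignment $ρ ∈ \tysem{Γ}^{\Sampling}$ and an arbitrary sample $σ ∈ \Sample$. The idea behind the whole proof is that every generator on either side of \cref{eq:shrink-constant-folding,eq:shrink-eta-reduction,eq:shrink-elim,eq:shrink-inlining} \emph{ignores its sample}: $\unit_{\Sampling}$ and $\shrink^n_{\Sampling}$ are constant in $σ$. The sample-splitting in $\bind_{\Sampling}$ therefore becomes invisible, and everything reduces to the "monad" laws of $\Tree$, which \Cref{sec:background} asserts.

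\textbf{Step 1: a constancy lemma.} Call $m ∈ \Sampling{A}$ \emph{sample-independent} if $m = (σ ↦ T)$ for some fixed tree $T$. I will show: if $m = (σ ↦ T)$ and $f : A \qfun \Sampling{B}$ with $\app{f}{a} = (σ ↦ \app{g}{a})$ for every $a$, then
\[ \bind_{\Sampling}{f}{m} = (σ ↦ \bind_{\Tree}{g}{T}). \]
In particular, $\bind_{\Sampling}{f}{m}$ is again sample-independent. This is immediate from the definition, since $\projl$ and $\projr$ are fed to constant functions.

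\textbf{Step 2: the $\tshrink$ equivalences.}
\begin{itemize}
  \item For \cref{eq:shrink-constant-folding}, the lemma reduces the left side to $\bind_{\Tree}(a ↦ \node(\ldots))(\unit{v})$, where $v = \tmsem[Γ]{t}^{\Sampling}{ρ}$. The left unit law of $\Tree$ collapses this to the right side. This also uses the fact that evaluating $\app{fᵢ}{x}$ under $ρ$ extended with $x ↦ v$ agrees with evaluating $\app{fᵢ}{t}$ under $ρ$. That fact follows from the application clause together with a routine weakening lemma, which says that adding a fresh unused variable to $Γ$ does not change a term's denotation.
  \item For \cref{eq:shrink-eta-reduction}, the lemma turns the left side into $\bind_{\Tree}{\unit}(\node(\ldots))$, and the right unit law applies.
  \item For \cref{eq:shrink-elim}, take $n = 0$: both sides are $σ ↦ \node(a₀, ε) = σ ↦ \unit{a₀}$.
  \item For \cref{eq:shrink-inlining}, apply the lemma twice on each side. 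The inner $\tletkeyword$ on the left is sample-independent by the lemma, so the lemma applies again to the outer one. Both sides then become nested $\bind_{\Tree}$ expressions over the same tree-valued functions, and they agree by the associativity law of $\Tree$.
\end{itemize}

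\textbf{Step 3: the remaining equivalences.}
\begin{itemize}
  \item \Cref{eq:if-true,eq:if-false} follow from the $\tmatchkeyword$ clause, since $\ttrue = \tinl{\tunit}$ selects the first branch. The bound variable $x$ is fresh, so weakening removes it from the environment.
  \item \Cref{eq:if-elim} follows because both cases of the $\tmatchkeyword$ clause evaluate $t₂$ in a weakened environment, hence give the same value whatever $t₁$ evaluates to.
  \item \Cref{eq:flip-zero,eq:flip-one} hold because $\Sample = [0,1)$: the test $σ < 0$ is always false and $σ < 1$ is always true.
\end{itemize}

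The only genuinely non-mechanical parts are the weakening/substitution facts about the term semantics and the bookkeeping in \cref{eq:shrink-inlining}. I expect \cref{eq:shrink-inlining} to be the main obstacle. There I would state the lemma in Step 1 carefully enough that sample-independence is visibly preserved through the inner $\tletkeyword$, and then the result is just $\Tree$-associativity.
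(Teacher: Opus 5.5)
Your proposal is correct. The paper gives no separate written proof of this theorem, but its accompanying remarks describe exactly the direct unfolding you carry out: the $\tshrink$ equations are the monad laws restricted to computations with only shrinking effects, the $\tifkeyword$ equations hold under any interpretation, and $\tflip$ is deterministic at $0$ and $1$. Your sample-independence lemma, which collapses $\bind_{\Sampling}$ to $\bind$ on $\Tree$ so that the asserted monad laws of $\Tree$ apply, is what makes that reasoning rigorous, together with the routine weakening lemma.
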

We describe each equation as follows:
\begin{itemize}
  \item \Cref{eq:shrink-constant-folding,eq:shrink-eta-reduction,eq:shrink-inlining} are essentially special cases of \Cref{eq:constant-folding,eq:eta-reduction,eq:inlining} where the computations involved only exhibit shrinking effects.
  \item \Cref{eq:shrink-elim} states that a $\tshrink$ term with only one argument (i.e., no specified shrinks) is identical to a pure computation ($\treturn$).
  \item \Cref{eq:if-elim,eq:if-true,eq:if-false} are common optimizations of $\tifkeyword$ expressions. Unlike the other equations we discuss, these equations hold under any interpretation.
  \item \Cref{eq:flip-zero,eq:flip-one} state that $\tflip$'s behaviour is deterministic when its parameter is either zero or one.
\end{itemize}

From \Cref{thm:unsound-transformations,thm:sound-transformations} we infer that the "sampling interpretation" only permits transformations of \emph{non-probabilistic terms}. Since generators are primarily probabilistic programs, this renders the "sampling interpretation" unsuitable for program optimization.

  Particularly problematic is the failure of \Cref{eq:constant-folding,eq:inlining}, which apply frequently as a result of inlining. For example, consider the program $\tcoiniii$, which is just $\tcoini$ with $\tflip$ replaced with $\tcoinii$:
  \[
    \intro*\tcoiniii = \tlet{x}{\tcoinii}{\tlet{y}{\tcoinii}{\treturn{x \teq y}}}.
  \]
  Aggressively applying \Cref{eq:constant-folding,eq:inlining} gives the following sequence of simplifications:
  \begin{align*}
    \tcoiniii & \mathrel{\kl[\semeq]{=}^{\Sampling}}
    \tlet{x}{\tcoinii}{\tlet{y}{\tcoinii}{\treturn{x \teq y}}} \tag{by definition} \\
                   & \mathrel{\kl[\semeq]{=}^{\Sampling}}
    \tlet+{x}{(\tlet{z}{\tflip{\nicefrac{1}{2}}}{\treturn{z}})}{
      \tlet+{y}{(\tlet{w}{\tflip{\nicefrac{1}{2}}}{\treturn{w}})}{
        \quad \treturn{x \teq y}}} \tag{by definition} \\
                   & \mathrel{\kl[\semeq]{=}^{\Sampling}}
    \tlet+{z}{\tflip{\nicefrac{1}{2}}}{
      \tlet+{x}{\treturn{z}}{
      \tlet+{w}{\tflip{\nicefrac{1}{2}}}{
      \tlet+{y}{\treturn{w}}{
        \quad \treturn{x \teq y}}}}} \tag{by \Cref{eq:inlining}} \\
                   & \mathrel{\kl[\semeq]{=}^{\Sampling}}
    \tlet+{z}{\tflip{\nicefrac{1}{2}}}{
      \tlet+{w}{\tflip{\nicefrac{1}{2}}}{
      \quad \treturn{z \teq w}}} \tag{by \Cref{eq:constant-folding}}
  \end{align*}
  The optimized term is faster because the original term will split the input seed four times, while the optimized term will only split it twice. \citet{goldstein-2026} identify random number generation (which includes seed splitting) as the ``hottest part of the hot path'' in generator execution, so the failure of \Cref{eq:constant-folding,eq:inlining} means we lose out on a potentially high impact optimization.

\subsection{Distributions}

\AP
We have presented in \Cref{def:semantics,def:sampling-interpretation} a "sampling semantics" which closely mirrors the actual implementation of Hedgehog. However, the specifics of how "samples" are turned into values are irrelevant from the user's perspective. Users are more concerned with a generator's "distribution", which determines how well the input space of the SUT is covered. To this end, the "sampling semantics" is not suitable for proving optimization soundness because it is too fine-grained (e.g., see \Cref{ex:coin-semantics}).

\begin{definition}[\hedgehog{} Distribution Semantics] \label[definition]{def:distribution}
  \AP
  The ""distribution semantics"" of any $\typed{t}{\Gen{\Bool}}$ is given by the function
  \[
    \begin{array}{@{}l@{}c@{}l@{}}
      \dist{ & \blank & } : \Term → \Dist{\B} \\
      \dist{ & t      & } = \map{\tmsem{t}^{\Sampling}}{\uniform}.
    \end{array}
  \]
\end{definition}

\AP
Since generators represent "random variables", we obtain a "distribution" via the "push-forward" operation.

\begin{example} \label[example]{ex:coin-distribution}
  \AP
  We compute the "distributions" of $\typed{\tcoini, \tcoinii}{\Gen{\Bool}}$ given in \Cref{ex:coin-syntax}:
  \begin{align*}
    \dist{\tcoini}  & = \nicefrac{1}{2} ⋅ \unit(\unit{\true}) + \nicefrac{1}{2} ⋅ \unit(\unit{\false}) \\
    \dist{\tcoinii} & = \nicefrac{1}{2} ⋅ \unit(\unit{\true}) + \nicefrac{1}{2} ⋅ \unit(\unit{\false})
  \end{align*}
  Terms $\tcoini$ and $\tcoinii$ both produce single-node trees. Each tree is labelled by either $\true$ or $\false$ with probability $\nicefrac{1}{2}$.
\end{example}

\AP
\phantomintro\tcontext
Building on \Cref{ex:coin-distribution}, we show that \hedgehog{}'s "distribution semantics" is non-"compositional", i.e., it cannot be represented as an "interpretation". Define
\[
  \begin{array}{@{}l@{}}
    \typed{\tcontext}{\Fun{\Gen{\Bool}}{\Gen{\Bool}}} \\
    \reintro*\tcontext = \tlam{m}{\Gen{\Bool}}{
                           \tlet{b}{\tshrink{\tpair{\ttrue, \tfalse}}}{
                             \tif{b}{m}{\tcoini}}}
  \end{array}
\]
and consider the "sampling semantics" of $\app{\tcontext}{\tcoini}$ and $\app{\tcontext}{\tcoinii}$. On one hand, we have
\begin{align*}
  \app{\tcontext}{\tcoini}
   & \mathrel{\kl[\semeq]{=}}^{\Sampling} \tlet{b}{\tshrink{\tpair{\ttrue, \tfalse}}}{\tif{b}{\tcoini}{\tcoini}} \\
   & \mathrel{\kl[\semeq]{=}}^{\Sampling} \tlet{b}{\tshrink{\tpair{\ttrue, \tfalse}}}{\tcoini}.
\end{align*}
During shrinking, the "sample" fed to $\tcoini$ does not change, so any tree produced by $\app{\tcontext}{\tcoini}$ will have identical node labels. Hence,
\begin{equation*}
  \dist{\app{\tcontext}{\tcoini}}
  = \nicefrac{1}{2} ⋅ \unit(\node(\true, \unit{\true} ∷ ε)) + \nicefrac{1}{2} ⋅ \unit(\node(\false, \unit{\false} ∷ ε))
\end{equation*}
On the other hand, we have
\begin{align*}
  \app{\tcontext}{\tcoinii}
   &  & \mathrel{\kl[\semeq]{=}}^{\Sampling} &  &
  \tlet*[t]{b}{\tshrink{\tpair{\ttrue, \tfalse}}}{\tif{b}{\tcoinii}{\tcoini}}
   &  & \mathrel{\kl[\semeq]{=}}^{\Sampling} &  &
  \tlet*[t]{b}{\tshrink{\tpair{\ttrue, \tfalse}}}{
    \tif*{b}{
      \tlet*{x}{\tflip{\nicefrac{1}{2}}}{\treturn{x}}
    }{
      \tlet+{y}{\tflip{\nicefrac{1}{2}}}{
        \tlet+{z}{\tflip{\nicefrac{1}{2}}}{
          \quad \treturn{y \teq z}.}}
    }}
\end{align*}
During shrinking, the values of $x$ and $y$ will be identical (as they obtain their values from the same part of the "sample"), but there is only a 50\% chance that $y \teq z$. Hence,
\begin{alignat*}{3}
  \dist{\app{\tcontext}{\tcoinii}}
   & = \nicefrac{1}{4} ⋅ \unit(\node(\true, \unit{\true} ∷ ε))   &
   & + \nicefrac{1}{4} ⋅ \unit(\node(\true, \unit{\false} ∷ ε))    \\
   & + \nicefrac{1}{4} ⋅ \unit(\node(\false, \unit{\true} ∷ ε))  &
   & + \nicefrac{1}{4} ⋅ \unit(\node(\false, \unit{\false} ∷ ε))
\end{alignat*}
If distribution semantics can be represented as an interpretation, then $\dist{\tcoini} = \dist{\tcoinii}$ implies $\dist{\tcontext{\tcoini}} = \dist{\tcontext{\tcoinii}}$. Hence, no such interpretation exists.

\begin{proposition}[Non-Compositionality of Distribution Semantics] \label[proposition]{prop:non-compositionality}
  \AP
  There is no interpretation $F$ such that $\tmsem{t}^F = \dist{t}$ whenever $\typed{t}{\Gen{\Bool}}$.
\end{proposition}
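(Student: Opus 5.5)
We argue by contradiction, using only the fact that every interpretation gives a compositional term semantics (\Cref{def:semantics}). Suppose $F$ is an interpretation with $\tmsem{t}^F = \dist{t}$ for every closed $\typed{t}{\Gen{\Bool}}$. Apply this to the closed terms $\tcoini$, $\tcoinii$, $\app{\tcontext}{\tcoini}$ and $\app{\tcontext}{\tcoinii}$, which all have type $\Gen{\Bool}$. By \Cref{ex:coin-distribution}, $\tmsem{\tcoini}^F = \dist{\tcoini} = \dist{\tcoinii} = \tmsem{\tcoinii}^F$.

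Next we use the application clause of \Cref{fig:semantics}. The closed term $\tcontext$ has some denotation $\tmsem{\tcontext}^F \in \tysem{\Gen{\Bool}}^F \qfun \tysem{\Gen{\Bool}}^F$, and
\[
\tmsem{\app{\tcontext}{\tcoini}}^F = \app{\tmsem{\tcontext}^F}(\tmsem{\tcoini}^F) = \app{\tmsem{\tcontext}^F}(\tmsem{\tcoinii}^F) = \tmsem{\app{\tcontext}{\tcoinii}}^F .
\]
It follows that $\dist{\app{\tcontext}{\tcoini}} = \dist{\app{\tcontext}{\tcoinii}}$. It therefore suffices to show that these two distributions differ. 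Nothing about $F$ beyond compositionality is needed, so the whole weight of the proof lies in this last step.

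The main step is to exhibit one tree on which the two distributions disagree. We take $T = \node(\true, \unit{\false} ∷ ε)$ and compare $\integral[u]{\iverson{u = T}}{\dist{\blank}}$ for the two terms. Using the push-forward property, each such integral becomes $\integral[σ]{\iverson{\tmsem{\app{\tcontext}{m}}^{\Sampling}{σ} = T}}{\uniform}$.

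First we unfold the sampling semantics from \Cref{fig:sampling-interpretation}, including $\bind$ on $\Tree$ from \Cref{fig:lists-and-trees}. The tree $\tmsem{\app{\tcontext}{m}}^{\Sampling}{σ}$ has root label equal to the root of $\app{\tmsem{m}}(\projr{σ})$. Its single child is the root of $\tmsem{\tcoini}(\projr{σ})$: the \emph{same} sample $\projr{σ}$ is fed to $m$ during testing and to $\tcoini$ during shrinking.

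For $m = \tcoini$ the root and child labels are therefore always equal, so the indicator of $T$ is identically $0$ and the integral is $0$.

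For $m = \tcoinii$, write $σ' = \projr{σ}$. The root is $\projl{σ'} < \nicefrac12$ and the child is $(\projl{σ'} < \nicefrac12 ⇔ \projl(\projr{σ'}) < \nicefrac12)$. The tree equals $T$ exactly when $\projl{σ'} < \nicefrac12$ and $\projl(\projr{σ'}) ≥ \nicefrac12$. Applying the splitting property of $\projl,\projr$ twice turns the integral into an iterated integral of a product of indicators against $\uniform$. Each factor contributes $\nicefrac12$ by the defining property of $\uniform$, giving $\nicefrac14 \neq 0$. This contradicts the equality derived above, so no such $F$ exists.

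The only delicate point is the careful unfolding of $\bind_{\Sampling}$ and $\shrink_{\Sampling}$ to confirm that both branches of the $\tifkeyword$ receive the identical sample $\projr{σ}$; the integral computations themselves are routine.
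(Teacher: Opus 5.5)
Your proposal is correct and is essentially the paper's argument: the application clause of any interpretation turns the equality of the distributions of \texttt{coin1} and \texttt{coin2} into an equality for \texttt{discrim coin1} and \texttt{discrim coin2}, and the sampling-semantics computation refutes that equality. The only difference is presentational: the paper writes out both full distributions, whereas you isolate the single tree with root $\mathrm{true}$ and one leaf child $\mathrm{false}$, which has probability $0$ in one case and $1/4$ in the other.
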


\subsection{Contextual Equivalence}

\AP
The "sampling interpretation" given in \Cref{def:sampling-interpretation} induces a "compositional" semantics, but does not accurately model user-level reasoning and is too fine-grained to be useful for generator equivalence proofs (\Cref{ex:coin-semantics}). Conversely, the "distribution semantics" given in \Cref{def:distribution} has neither of these drawbacks but is non-"compositional". Can we obtain a coarse-grained "interpretation" that more closely models user-level reasoning? To answer this question, we first define what it means for an "interpretation" to be \emph{correct}.

\AP
In the denotational approach to semantics, two standard tasks are to show that a particular denotational semantics is "sound" and "complete" with respect to an existing operational semantics \citep{semantics}. A denotational semantics is "sound" if any two denotationally equal programs are "contextually equivalent", i.e., they produce the same result in any "context" under the operational semantics. "Soundness" rules out ``bad'' "interpretations" (such as the one where all "terms" are equal) and ensures that all conclusions made using the denotational semantics are valid in the operational semantics. Conversely, a denotational semantics is complete if any two "contextually equivalent" programs are denotationally equal. "Completeness" rules out excessively fine "interpretations" (such as one where "terms" are equal iff they are syntactically equal) and ensures that all conclusions that can be made using the operational semantics can also be made using the denotational semantics (i.e., there are no ``blind spots'' in the reasoning power of the denotational semantics). In the context of \hedgehog{}, we consider "distribution semantics" as an operational semantics, and define "contextual equivalence" accordingly. Then, a correct (i.e., "sound" and "complete") "interpretation" is one that coincides with "contextual equivalence".

\begin{definition}[Contexts and Contextual Equivalence] \label[definition]{def:contextual-equivalence}
  \AP
  \phantomintro\hole
  The set of ""contexts"" $\Ctx$ is generated by
  \begin{align*}
    C
    ⩴ {} & \reintro*\hole
    | \reintro*\tabsurd{C}
    | \reintro*\tinl{C}
    | \reintro*\tinr{C}
    | \reintro*\tmatch{C}{x}{t}{x}{t}      \\
    {}
    | {} & \reintro*\tmatch{t}{x}{C}{x}{t}
    | \reintro*\tmatch{t}{x}{t}{x}{C}      \\
    {}
    | {} & \reintro*\tpair{C, t}
    | \reintro*\tpair{t, C}
    | \reintro*\tfst{C}
    | \reintro*\tsnd{C}
    | \reintro*\tlam{x}{τ}{C}
    | \app{C}{t}
    | \app{t}{C}
    | \reintro*\treturn{C}                 \\
    {}
    | {} & \reintro*\tlet{x}{C}{t}
    | \reintro*\tlet{x}{t}{C}.
  \end{align*}
  The ""substitution"" $\subst{C}{t}$ of a "term" $t$ into a context $C$ is obtained by substituting the (unique) instance of $\hole$ in $C$ for $t$. We write $\intro*\ctxtyped{C}{Γ}{τ}{Γ'}{τ'}$ when $\typed[Γ']{C[t]}{τ'}$ for all "terms" $\typed[Γ]{t}{τ}$. Two "terms" $\typed[Γ]{t₁, t₂}{τ}$ are ""contextually equivalent"" (written $\disteq[Γ]{t₁}{t₂}[τ]$) if $\dist{\subst{C}{t₁}} = \dist{\subst{C}{t₂}}$ for all contexts $\ctxtyped{C}{Γ}{τ}{\sing}{\tau'}$, where $τ' ∈ \Type$. We write $\disteq{t₁}{t₂}[τ]$ in place of $\disteq[\sing]{t₁}{t₂}[τ]$ and omit the "type" ascription ($: τ$) when it can be inferred from the surrounding context.
\end{definition}

\AP
Intuitively, \Cref{def:contextual-equivalence} states that two terms are contextually equivalent if one can be substituted for the other in any context without changing the overall "distribution" of the term. For example, we refer to \Cref{ex:coin-distribution}, from which we conclude $\tcoini$ and $\tcoinii$ are \emph{not} contextually equivalent: they produce different "distributions" in the context $\tcontext{\hole}$.

\AP
We now define our correctness criteria for "interpretations".

\begin{definition} \label[definition]{def:soundness-and-completeness}
  \AP
  An "interpretation" $F$ is
  \begin{enumerate*}
    \item ""sound"" if $\semeq{F}[Γ]{t₁}{t₂}[τ] \implies \disteq[Γ]{t₁}{t₂}[τ]$, and
    \item ""complete"" if $\disteq[Γ]{t₁}{t₂}[τ] \implies \semeq{F}[Γ]{t₁}{t₂}[τ]$.
  \end{enumerate*}
\end{definition}

\AP
The "sampling interpretation" $\Sampling$ is trivially sound (generators which represent the same "random variables" have the same "distribution"). Surprisingly, $\Sampling$ is also complete! We sketch our "completeness" proof for terminating generator terms with no free variables.

\AP
We proceed by contraposition. In essence, the problems we have observed in \Cref{ex:coin-syntax,ex:coin-semantics,ex:coin-distribution} generalize to terms other than $\tcoini$ and $\tcoinii$. Suppose we have two terms $t_1, t_2 : \Gen{τ}$ such that $t₁ \mathrel{\not\mathrel{\kl[\semeq]{=}}}^{\Sampling} t₂$. Then there exists some $σ ∈ \Sample$ such that $\tmsem{t₁}^{\Sampling}{σ} ≠ \tmsem{t₂}^{\Sampling}{σ}$. Define
\begin{equation*} \label{eq:pathological}
  \begin{array}{@{}l@{}}
    \ctxtyped{\intro*\tdiscrim}{\sing}{\Gen{τ}}{\sing}{\Gen{τ}} \\
    \tdiscrim =
    \tlet{b}{\tshrink{\tpair{\ttrue, \tfalse}}}{
      \tif{b}{\hole}{t₁}.}
  \end{array}
\end{equation*}
First, we consider the case of $\subst{\tdiscrim}{t₁}$:
\begin{align*}
  \subst{\tdiscrim}{t₁}
   & \mathrel{\kl[\semeq]{=}}^{\Sampling} \tlet{b}{\tshrink{\tpair{\ttrue, \tfalse}}}{\tif{b}{t₁}{t₁}} \\
   & \mathrel{\kl[\semeq]{=}}^{\Sampling} \tlet{b}{\tshrink{\tpair{\ttrue, \tfalse}}}{t₁}
\end{align*}
The first shrink produced by $\subst{\tdiscrim}{t₁}$ is always the same element returned during the testing phase, as $b$ first shrinks from $\ttrue$ to $\tfalse$. More concretely, for any $σ' ∈ \Sample$,
\begin{equation}
  \tmsem{\subst{\tdiscrim}{t₁}}^{\Sampling}{σ'} = \node(b, \node(b, \pl{x}) ∷ \pl{x}) \tag{$†$} \label{eq:context-form}
\end{equation}
where $\node(b, \pl{x}) = \tmsem{t₁}^{\Sampling} (\projr{σ'})$. Now consider the case of $\subst{\tdiscrim}{t₂}$:
\begin{equation*}
  \subst{\tdiscrim}{t₂}
  \mathrel{\kl[\semeq]{=}}^{\Sampling} \tlet{b}{\tshrink{\tpair{\ttrue, \tfalse}}}{\tif{b}{t₂}{t₁}}.
\end{equation*}
Without loss of generalization, assume $σ = \projr{σ'}$ for some $σ' ∈ \Sample$. Then
\begin{equation*}
  \tmsem{\subst{\tdiscrim}{t₂}}^{\Sampling}{σ'} = \node(a, \node(b, \pl{y}) ∷ \pl{x})
\end{equation*}
where $(a, \pl{x}) = \tmsem{t₂}^{\Sampling}σ$ and $(b, \pl{y}) = \tmsem{t₁}^{\Sampling}σ$. Since $\tmsem{t_1}^{\Sampling} σ ≠ \tmsem{t_2}^{\Sampling} σ$, either $a ≠ b$ or $\pl{x} ≠ \pl{y}$. In either case, the result of $\tmsem{\subst{\tdiscrim}{t_1}}^{\Sampling}{σ''}$ is \emph{not} of the form described in (\ref{eq:context-form}). Thus $\node(a, \node(b, \pl{y}) ∷ \pl{x})$ occurs with positive probability under $\dist{\subst{\tdiscrim}{t_2}}$ but occurs with zero probability under $\dist{\subst{\tdiscrim}{t_1}}$.

\begin{theorem} \label[theorem]{thm:sampling-correct}
  \AP
  The "sampling interpretation" $\Sampling$ is "sound" and "complete".
\end{theorem}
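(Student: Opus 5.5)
Soundness is immediate from compositionality of the term semantics. Suppose $\semeq{\Sampling}[Γ]{t₁}{t₂}[τ]$. I will show by induction on the structure of contexts $C$ that $\tmsem[Γ']{\subst{C}{t₁}}^{\Sampling} = \tmsem[Γ']{\subst{C}{t₂}}^{\Sampling}$ for every $\ctxtyped{C}{Γ}{τ}{Γ'}{τ'}$. Each clause of \Cref{fig:semantics} defines the meaning of a term as a function of the meanings of its immediate subterms, with the environment extended as necessary, so the inductive step is a single case per constructor. In the base case $C = \hole$ the claim is the hypothesis. For closed contexts of type $\Gen{\Bool}$, equal sampling denotations give equal push-forwards along $\uniform$, so $\dist{\subst{C}{t₁}} = \dist{\subst{C}{t₂}}$.

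For completeness I argue by contraposition, reducing the general case to the closed generator case sketched in the paper. First, from open terms to closed terms: if $\tmsem[Γ]{t₁}^{\Sampling}ρ ≠ \tmsem[Γ]{t₂}^{\Sampling}ρ$ for some $ρ$, I wrap $t_i$ in $\tlam{x}{Γ_x}{\cdots}$ for every $x ∈ \Dom{Γ}$. The resulting closed terms of function type have distinct denotations.

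Next, from arbitrary types to $\Gen{\Bool}$. I will construct, by induction on types, a context $D_τ : (\sing ⊢ τ) \longrightarrow (\sing ⊢ \Gen{\Bool})$ that preserves inequality of sampling denotations. There is a definability issue for arguments: the $ρ$, or the function argument, is a semantic value, not necessarily the denotation of a term. I would handle this either by proving a density lemma, or by avoiding it where possible. Wrapping in $\treturn$ and then applying a $\Gen$ observer is the generic option. The type-by-type steps are:
\begin{itemize}
\item For products, apply $\tfst$ or $\tsnd$.
\item For sums, use $\tmatch$, discriminating the injection tag with $\ttrue$ and $\tfalse$.
\item For $\Gen{τ}$, use the paper's $\tdiscrim$ construction, adapted to an observable of type $\Gen{\Bool}$ by composing with $\tletkeyword$ and an observer for $τ$.
\item For functions, apply the term to an argument that witnesses the difference.
\end{itemize}

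The main obstacle is twofold. First, values in $\tysem{τ}^{\Sampling}$ may be undefinable; for example, a sample-dependent tree or an arbitrary quasi-measurable function. Contexts can only feed definable arguments. Second, differences occurring at a single sample $σ$ have measure zero, so the $\tdiscrim$ argument does not directly give a positive-probability discrepancy. I would try first to restrict attention to a setting where these issues are settled, namely definable, terminating, first-order observations. I would then show that for such terms the map $σ ↦ \tmsem{t}^{\Sampling}σ$ is piecewise constant on finitely many intervals. $\tflip$ thresholds against constants, and the $\projl$/$\projr$ splitting can be chosen as digit interleaving, so each preimage is a union of products of intervals of positive measure. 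Then a pointwise difference at $σ$ extends to a set of positive $\uniform$-measure, and the $\tdiscrim$ argument goes through.

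For the structure of the final step, I would iterate $\tdiscrim$ along the path to the differing subtree. Shrinking to $\tfalse$ reruns $t₁$ on the same sample. This exposes, within one closed $\Gen$-typed observation, both $t₂$'s tree and $t₁$'s tree on the same seed. The resulting tree distribution then separates them, as in equation (\ref{eq:context-form}).

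Finally, to land in $\Gen{\Bool}$, I would compose with a term that computes, via $\tletkeyword$ and $\tshrink$, whether the observed tree has the shape (\ref{eq:context-form}). If the intended observable is really trees, I would instead compare the full tree distributions directly, as the paper's sketch does.
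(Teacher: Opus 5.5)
Your decomposition matches the paper's. Soundness comes from compositionality of $\mathfrak{C}$. Completeness goes by type-directed observations ($\mathtt{fst}$/$\mathtt{snd}$, $\mathtt{match}$, application) down to the generator case, where $\mathrm{discrim}'$ runs $t_2$ and $t_1$ on the same seed. The two obstacles you flag are real, and the paper's proof passes over both. However, your repair of the measure-zero obstacle is false.

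Take $t_1 = \mathtt{flip}\,\tfrac12$ and $t_2 = \mathtt{let}\ x \leftarrow \mathtt{flip}\,\tfrac12\ \mathtt{in}\ \mathtt{return}\ x$ (the paper's $\mathtt{coin}_2$). Then $[\![t_1]\!]^{\mathfrak{C}}\sigma = \mathrm{unit}(\sigma < \tfrac12)$ and $[\![t_2]\!]^{\mathfrak{C}}\sigma = \mathrm{unit}(\pi_l\sigma < \tfrac12)$. Use digit interleaving, say with $\pi_l$ keeping the odd-position binary digits. These two agree except on a nonempty null set: at $\sigma = 0.0010101\ldots_2 = \tfrac16$ we get $\pi_l\sigma = 0.0111\ldots_2 = \tfrac12$, so the labels are $\mathrm{true}$ versus $\mathrm{false}$. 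A closed context runs the hole only at samples $\pi_w\sigma$ for finitely many words $w \in \{l,r\}^*$, and each $\pi_w$ is measure-preserving. So $[\![C[t_1]]\!]^{\mathfrak{C}}$ and $[\![C[t_2]]\!]^{\mathfrak{C}}$ differ only on a null set, and $t_1 \approx t_2$ even though $t_1 \neq^{\mathfrak{C}} t_2$. For your splitting, pointwise completeness is therefore false, not merely unproved.

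The step that fails is your ``products of intervals'' picture. It gives positivity only when the sample paths involved are pairwise non-prefix. $\mathrm{discrim}'$ puts $t_1$ and $t_2$ on one seed, so their paths are typically nested ($\sigma$ versus $\pi_l\sigma$), and there a nonempty discrepancy set can be null. Nor is $\sigma \mapsto [\![t]\!]^{\mathfrak{C}}\sigma$ piecewise constant on finitely many intervals: $\{\sigma \mid \pi_l\sigma < p\}$ is in general an infinite union of intervals. The paper's appeal to termination has the same hole. Positivity of each value's preimage for a single term says nothing about the joint event $[\![t_1]\!]^{\mathfrak{C}}\sigma \neq [\![t_2]\!]^{\mathfrak{C}}\sigma$. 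A repair needs one of two things:
\begin{itemize}
\item state completeness for $\mathfrak{C}$ modulo $\lambda$-a.e.\ equality of generators, lifted to higher types by a logical relation; or
\item use a sample model without such boundary coincidences, and prove a joint-positivity lemma for nested paths.
\end{itemize}

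The definability obstacle is also left open. Retreating to ``definable, terminating, first-order observations'' proves a weaker statement. The theorem ranges over all types, including $\tau_1 \to \tau_2$ with generator- or function-typed $\tau_1$. It also covers open terms, whose distinguishing environment must be supplied by a context (after your $\lambda$-closing, this becomes the distinguishing argument). The missing ingredient is a density lemma: if two term denotations differ at some semantic argument, they differ at a definable one, on a positive-measure set of samples. The paper's function case reduces to $t_1\,t_3 =^{\mathfrak{C}} t_2\,t_3$ for all terms $t_3$, and its generator case picks an arbitrary $\rho$. Both silently assume exactly this lemma, so here your plan matches the paper but neither closes the step.

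Two smaller points:
\begin{itemize}
\item In the sum case, when both sides carry the same injection, you must recurse into the payload rather than only test the tag.
\item The shape test in your last paragraph cannot be written as a term. Hedgehog has no construct that reifies a shrink tree as a value, and $\mathtt{let}$ only relabels and grafts; with an effectful observer, every label is run on the same seed $\pi_r\sigma$. At $\mathtt{Gen}\,\mathtt{Bool}$ the distribution observable already sees the shape, so no test is needed. At $\mathtt{Gen}\,\tau$ with $\tau \neq \mathtt{Bool}$, you need either an argument that some label observation preserves the discrepancy, or to treat distributions over $\mathrm{Tree}\,\tau$ as observable, as the paper implicitly does.
\end{itemize}
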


\ifextended
  \begin{proof}
    \emph{Soundness.}
    Suppose $\semeq{\Sampling}[Γ]{t₁}{t₂}[τ]$ and $\ctxtyped{C}{Γ}{τ}{\sing}{\Bool}$.
    Since $\Sampling$ is (by definition) compositional, we have $\semeq{\Sampling}[Γ]{\subst{C}{t₁}}{\subst{C}{t₂}}[τ]$.
    Hence $\dist{\subst{C}{t₁}} = \dist{\subst{C}{t₂}}$.

    \emph{Completeness.}
    Let
    \[
      \app{P}{τ} ≡ ∀ \typed[Γ]{t₁, t₂}{τ}, \disteq[Γ]{t₁}{t₂} \implies \semeq{\Sampling}[Γ]{t₁}{t₂}[τ].
    \]
    Given $τ$, we prove $\app{P}{τ}$ by structural induction on $τ$.
    \begin{itemize}
      \item \emph{Cases $τ = \Empty$ and $τ = \Unit$}. These cases are trivial by the fact that all elements of these types are equal.
      \item \emph{Case $τ = \Prod{τ₁}{τ₂}$}. Suppose $\disteq[Γ]{t₁}{t₂}[\Prod{τ₁}{τ₂}]$, where $\typed[Γ]{t₁, t₂}{\Prod{τ₁}{τ₂}}$, and suppose the inductive hypotheses $\app{P}{τ₁}$ and $\app{P}{τ₂}$. It suffices to show $\semeq{\Sampling}[Γ]{\tfst{t₁}}{\tfst{t₂}}[τ₁]$ and $\semeq{\Sampling}[Γ]{\tsnd{t₁}}{\tsnd{t₂}}[τ₂]$, which, by the inductive hypotheses, reduces to showing $\disteq[Γ]{\tfst{t₁}}{\tfst{t₂}}[τ₁]$ and $\disteq[Γ]{\tsnd{t₁}}{\tsnd{t₂}}[τ₂]$, which follows from $\disteq[Γ]{t₁}{t₂}[\Prod{τ₁}{τ₂}]$.
      \item \emph{Case $τ = \Fun{τ₁}{τ₂}$}. Suppose $\disteq[Γ]{t₁}{t₂}[\Fun{τ₁}{τ₂}]$, where $\typed[Γ]{t₁, t₂}{\Fun{τ₁}{τ₂}}$, and suppose the inductive hypotheses $\app{P}{τ₁}$ and $\app{P}{τ₂}$. It suffices to show $\semeq{\Sampling}[Γ]{\app{t₁}{t₃}}{\app{t₂}{t₃}}[τ₂]$  for all $\typed[Γ]{t₃}{τ₁}$. By the inductive hypothesis, it suffices to show $\disteq[Γ]{\app{t₁}{t₃}}{\app{t₂}{t₃}}[τ₂]$, which follows from $\disteq[Γ]{t₁}{t₂}[\Fun{τ₁}{τ₂}]$.
      \item \emph{Case $τ = \Sum{τ₁}{τ₂}$}. Suppose $\disteq[Γ]{t₁}{t₂}[\Sum{τ₁}{τ₂}]$, where $\typed[Γ]{t₁, t₂}{\Sum{τ₁}{τ₂}}$, and suppose the inductive hypotheses $\app{P}{τ₁}$ and $\app{P}{τ₂}$. We consider, by cases, whether there exists terms $t₃ : τ₁$ and $t₄ : τ₂$.
            \begin{itemize}
              \item \emph{Case $t₃$ exists and $t₄$ exists}. It suffices to show
                    \begin{gather*}
                      \semeq{\Sampling}[Γ]{\tmatch{t₁}{x}{x}{x}{t₃}}{\tmatch{t₂}{x}{x}{x}{t₃}}[τ₁] \\
                      \semeq{\Sampling}[Γ]{\tmatch{t₁}{x}{t₄}{x}{x}}{\tmatch{t₂}{x}{t₄}{x}{x}}[τ₂],
                    \end{gather*}
                    which, by the inductive hypotheses, reduce to
                    \begin{gather*}
                      \disteq[Γ]{\tmatch{t₁}{x}{x}{x}{t₃}}{\tmatch{t₂}{x}{x}{x}{t₃}}[τ₁] \\
                      \disteq[Γ]{\tmatch{t₁}{x}{t₄}{x}{x}}{\tmatch{t₂}{x}{t₄}{x}{x}}[τ₂],
                    \end{gather*}
                    which both follow from $\disteq[Γ]{t₁}{t₂}[\Sum{τ₁}{τ₂}]$.
              \item \emph{Case $t₃$ exists and $t₄$ does not exist.} If $t₄$ does not exist, then $τ₂$ is isomorphic to $\Empty$, and hence there is an alternate term $\typed[x : τ₂]{t₄'}{τ₁}$ involving $\tabsurd$. It suffices to show
                    \begin{gather*}
                      \semeq{\Sampling}[Γ]{\tmatch{t₁}{x}{x}{x}{t₃}}{\tmatch{t₂}{x}{x}{x}{t₃}}[τ₁] \\
                      \semeq{\Sampling}[Γ]{\tmatch{t₁}{x}{t₄'}{x}{x}}{\tmatch{t₂}{x}{t₄'}{x}{x}}[τ₂],
                    \end{gather*}
                    which, by the inductive hypotheses, reduce to
                    \begin{gather*}
                      \disteq[Γ]{\tmatch{t₁}{x}{x}{x}{t₃}}{\tmatch{t₂}{x}{x}{x}{t₃}}[τ₁] \\
                      \disteq[Γ]{\tmatch{t₁}{x}{t₄'}{x}{x}}{\tmatch{t₂}{x}{t₄'}{x}{x}}[τ₂],
                    \end{gather*}
                    which both follow from $\disteq[Γ]{t₁}{t₂}[\Sum{τ₁}{τ₂}]$.
              \item \emph{Case $t₃$ does not and $t₄$ exists.} Symmetric to the previous case.
              \item \emph{Case $t₃$ does not and $t₄$ does not exist.} In this case, $τ₁$ and $τ₂$ are isomorphic to $\Empty$, and hence $\Sum{τ₁}{τ₂}$ is isomorphic to $\Empty$. Thus $\semeq{\Sampling}[Γ]{t₁}{t₂}[\Empty]$ trivially.
            \end{itemize}
      \item \emph{Case $τ = \Gen{τ'}$}. We proceed by contraposition. Suppose $Γ ⊢ t₁ \mathrel{\not\mathrel{\kl[\semeq]{=}}}^{\Sampling} t₂$. Then there exists some $ρ ∈ \tysem{Γ}$ and $σ ∈ \Sample$ such that $\tmsem{t₁}^{\Sampling}{ρ}{σ} ≠ \tmsem{t₂}^{\Sampling}{ρ}{σ}$. We proceed using $\tdiscrim$ as described above. Thus
            \begin{enumerate}
              \item for all $ρ$ and $σ$, $\tmsem{\subst{\tdiscrim}{t₂}}^{\Sampling}{ρ₁}{σ₁} = (a, \node(a, \pl{x}), \pl{x})$ for some $a$ and $\pl{x}$, and
              \item $\tmsem{\subst{\tdiscrim}{t₂}}^{\Sampling}{ρ₂}{σ₂} = (a, \node(b, \pl{y}), \pl{x})$ for some $ρ$, $σ$, $a ≠ b$ and $\pl{y} ≠ \pl{x}$.
            \end{enumerate}
            Since our terms represent \emph{terminating} programs, the set of choices of $σ₂$ must be infinite and of non-zero measure. Thus $\dist{\subst{\tdiscrim}{t₁}} = \dist{\subst{\tdiscrim}{t₂}}$, which completes the proof.
    \end{itemize}
  \end{proof}
\fi

\begin{corollary} \label[corollary]{corr:sampling-correct}
  \AP
  Let $F$ be a "sound" and "complete" "interpretation". Then two "terms" $\typed[Γ]{t₁, t₂}{τ}$ are "semantically equivalent" under $F$ iff they are "semantically equivalent" under $\Sampling$, i.e.,
  \[
    \semeq{F}[Γ]{t₁}{t₂} \iff \disteq[Γ]{t₁}{t₂}.
  \]
\end{corollary}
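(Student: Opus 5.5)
The corollary is a direct consequence of \Cref{def:soundness-and-completeness} and \Cref{thm:sampling-correct}. The plan is to derive both displayed biconditionals by routing each side through "contextual equivalence", which serves as the common reference point. Fix terms $\typed[Γ]{t₁, t₂}{τ}$ and a "sound" and "complete" "interpretation" $F$.

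First, I would show $\semeq{F}[Γ]{t₁}{t₂} \iff \disteq[Γ]{t₁}{t₂}$, which is the displayed form of the statement. The forward direction is exactly "soundness" of $F$. The backward direction is exactly "completeness" of $F$. This step uses nothing beyond \Cref{def:soundness-and-completeness}, instantiated at $Γ$ and $τ$.

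Second, I would apply the same reasoning to $\Sampling$. By \Cref{thm:sampling-correct}, $\Sampling$ is both "sound" and "complete", so $\semeq{\Sampling}[Γ]{t₁}{t₂} \iff \disteq[Γ]{t₁}{t₂}$. Chaining this with the first step gives $\semeq{F}[Γ]{t₁}{t₂} \iff \semeq{\Sampling}[Γ]{t₁}{t₂}$, which is the verbal form of the corollary.

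There is no real obstacle here, since all of the mathematical content sits in \Cref{thm:sampling-correct}, in particular its completeness half via the $\tdiscrim$ context. The only point needing care is bookkeeping: soundness and completeness must be quantified over the same environments and types for every interpretation. This holds because "contextual equivalence" is defined independently of any interpretation, using only $\dist{\blank}$, which is itself fixed through $\Sampling$. That independence is what makes $\disteq$ a shared pivot on which both equivalences can be chained.
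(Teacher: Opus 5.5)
Your proposal is correct and matches the intended argument. Soundness plus completeness of $F$ is literally the displayed biconditional; the same two properties hold for the sampling interpretation by \Cref{thm:sampling-correct}; and chaining the two through the interpretation-independent relation of contextual equivalence gives the verbal form. The paper gives no separate proof and treats the corollary as an immediate consequence of \Cref{def:soundness-and-completeness} and \Cref{thm:sampling-correct} in exactly this way.
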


\subsection{Summary}

\AP
We have formalized \hedgehog{}'s syntax as well as its \kl[sampling semantics]{sampling} and "distribution semantics" with an eye towards proving program optimizations sound. The "sampling semantics" is "compositional" but unusably fine-grained: many common program optimizations are unsound under the "sampling semantics" (\Cref{thm:unsound-transformations}). On the other hand, the "distribution semantics" is coarse-grained but non-"compositional". By \Cref{corr:sampling-correct}, \emph{every} ("sound" and "complete") "compositional" semantics must identify the exact same set of "terms" as the "sampling semantics". Thus, none of the transformations discussed in \Cref{thm:unsound-transformations} are permitted under \emph{any} "sound" "interpretation" of \hedgehog{}, which renders \hedgehog{} a poor target for optimization. We stress that \Cref{thm:sampling-correct} is a consequence of combining probabilistic and shrinking effects: if either $\tflip$ or $\tshrink$ are removed from \hedgehog{}, then \Cref{thm:sampling-correct} no longer holds.

\end{scope}

%% file: section/4-solution.tex
\begin{scope}
\knowledgeimport{background}
\knowledgeimport{hedgehog*}

\section{\hedgehog*{}} \label{sec:solution}

\begin{outline}[b]
  \textbf{Solution}
  \begin{itemize}
    \item Syntax
          \begin{itemize}
            \item Types and Terms
            \item Typing Relation
            \item Coin Examples
          \end{itemize}
    \item Semantics
          \begin{itemize}
            \item Interpretation and Semantics
            \item Sampling Interpretation
          \end{itemize}
    \item Distributions
          \begin{itemize}
            \item Distribution Semantics
            \item Distribution Interpretation (compositionality)
          \end{itemize}
    \item Translation
          \begin{itemize}
            \item Translation Definition
            \item Substitution Lemma
          \end{itemize}
  \end{itemize}
\end{outline}
\AP
In \Cref{sec:analysis}, we demonstrated that \hedgehog{} is unsuited for performing generator optimizations. In particular, any two programs with identical distributions can be distinguished within a pathological context, $\tdiscrim$, unless they have identical sampling semantics. To overcome this flaw, we propose to instead work with \emph{\hedgehog*{}}, a restricted version of \hedgehog{} based on the arrow calculus \citep{arrow-calculus} which rules out such pathological contexts.

\AP
Intuitively, $\tdiscrim$ exploits hidden statistical dependence between "generators@@hedgehog" in different branches. This dependence is a direct result of both sides of a $\tmatchkeyword$ term being evaluated with the same "sample". Thus, \hedgehog*{} rules out this behaviour by ensuring that each branch of a case analysis ($\tmatchkeyword$) receives different "samples". This is achieved by
\begin{enumerate*}
  \item redefining the semantics of case analysis and
  \item adding additional type system restrictions.
\end{enumerate*}

\AP
In this section, we define the \hedgehog*{} language. This section is structurally similar to \Cref{sec:analysis}. We first define \hedgehog*{}'s syntax and type system as a modification of \hedgehog{}'s (\Cref{def:syntax*}, \Cref{def:typing*}). We then define a "sampling semantics" for \hedgehog*{} (\Cref{def:sampling-interpretation*}) and derive from it a "distribution semantics" (\Cref{def:distribution*}). Unlike \hedgehog{}, \hedgehog*{}'s "distribution semantics" is "compositional" (we show it in \Cref{thm:compositionality}).

\subsection{Syntax}

\begin{definition}[\hedgehog*{} Types and Terms] \label[definition]{def:syntax*}
  \AP
  The set of \hedgehog*{} ""types"" (denoted by $\Type*$) is defined by the following modifications to $\Type$ (\Cref{def:syntax}):
  \begin{equation*}
    τ ⩴ … | \cancel{\Gen{τ}} | \intro*\GFun{τ}{τ}
  \end{equation*}
  The set of \hedgehog*{} ""terms"" (denoted by $\Term*$) is defined identically to $\Term$ (\Cref{def:syntax}), but types are drawn from $\Type*$ instead of $\Type$ (e.g., in $\tlam{x}{τ}{t}$).
\end{definition}

\AP
\hedgehog*{} replaces the generator type $\Gen{τ}$ with a "generator function type" $\GFun{τ₁}{τ₂}$. Intuitively, an element of $\GFun{τ₁}{τ₂}$ is a function that
\begin{enumerate*}
  \item takes in an element of type $τ₁$,
  \item performs some probabilistic and shrinking effects, and then
  \item returns a value of type $τ₂$.
\end{enumerate*}
In other words, $\GFun{τ₁}{τ₂}$ is analogous to $\Fun{τ₁}{\Gen{τ₂}}$. At the "term" level, \hedgehog*{} reuses the existing "function application@@hedgehog" ($\app{t₁}{t₂}$) and "anonymous function@@hedgehog" constructs ($\tlam{x}{\tau}{t}$) for "generator functions".

\AP
The most significant changes are in the "typing" rules (\Cref{def:typing*}).

\begin{definition}[\hedgehog*{} Typing] \label[definition]{def:typing*}
  \AP
  The set of ""environments"" $\Env*$ and the ""typing relation"" $(\typed*[\blank]{\blank}{\blank}) ⊆ \Env* × \Term* × \Type*$ are defined analogously to \Cref{def:typing}, but we remove the rules for $\tletkeyword$, $\treturn$, $\tflip$, and $\tshrink$, and add
  \begin{mathpar}
    \infer{
      \ctyped{Γ}{x : τ₁}{t}{τ₂}
    }{\typed*[Γ]{\tlam{x}{τ₁}{t}}{\GFun{τ₁}{τ₂}}}

    \infer{
    }{\typed*[Γ]{\tflip{p}}{\GFun{\Unit}{\Bool}}}

    \infer{
    }{\typed*[Γ]{\tshrink}{\GFun{τ × ⋯ × τ}{τ}}}
  \end{mathpar}
  where the ""command typing relation"" $(\ctyped{\blank}{\blank}{\blank}{\blank}) ⊆ \Env* × \Env* × \Term* × \Type*$ is mutually defined to be the smallest relation satisfying
  \begin{mathpar}
    \infer{
      \typed*[Γ \join Δ]{t}{τ}
    }{\ctyped{Γ}{Δ}{\treturn{t}}{τ}}

    \infer{
      x ∉ \Dom{Δ} \\
      \ctyped{Γ}{Δ}{t₁}{τ₁} \\
      \ctyped{Γ}{Δ \join x : τ₁}{t₂}{τ₂}
    }{\ctyped{Γ}{Δ}{\tlet{x}{t₁}{t₂}}{τ₂}}

    \infer{
      \typed*[Γ]{t₁}{\GFun{τ₁}{τ₂}} \\
      \ctyped{Γ}{Δ}{t₂}{τ₁}
    }{\ctyped{Γ}{Δ}{\app{t₁}{t₂}}{τ₂}}

    \infer{
      x₁, x₂ ∉ \Dom{Δ} \\
      \typed*[Γ \join Δ]{t₀}{\Sum{τ₁}{τ₂}} \\\\
      \ctyped{Γ}{Δ \join x₁ : τ₁}{t₁}{τ₃} \\
      \ctyped{Γ}{Δ \join x₂ : τ₂}{t₂}{τ₃}
    }{\ctyped{Γ}{Δ}{\tmatch{t₀}{x₁}{t₁}{x₂}{t₂}}{τ₃}}.
  \end{mathpar}
\end{definition}

\AP
\hedgehog*{}'s "typing" rules divide terms into two classes: normal terms (satisfying $\typed*[\blank]{\blank}{\blank}$) and ""command terms"" (satisfying $\ctyped{\blank}{\blank}{\blank}{\blank}$). Command terms only occur inside an "anonymous function@@hedgehog" with a "generator function type". Command terms include effect-free computations ($\treturn$), sequential composition ($\tletkeyword$), effectful function application ($\app{t₁}{t₂}$), and case analyses which result in commands ($\tmatchkeyword$). The "typing" rules for command terms have two contexts: one for normal values ($Γ$) and one for values produced by the result of another command or given as an input to the function ($Δ$). In an effectful function application $\app{t₁}{t₂}$, only the argument term $t₂$ can refer to any variable in $Δ$. This rules out command terms such as
\[
  \tlet{f}{t₁}{\tlet{x}{t₂}{\app{f}{x},}}
\]
where the effects performed by the last term ($\app{f}{x}$) are determined by the (arbitrarily complex) behaviour of $t₁$, i.e. the control flow of a command term is \emph{fixed}.

\begin{example} \label[example]{ex:coin-syntax*}
  \AP
  The "terms" $\tcoini$ and $\tcoinii$ (\Cref{ex:coin-syntax}) are still typed in \hedgehog*{}, but they are "command terms" instead of normal "terms", i.e.,
  \begin{align*}
     & \ctyped{Γ}{Δ}{\tcoini}{\Bool}   &
     & \ctyped{Γ}{Δ}{\tcoinii}{\Bool}.
  \end{align*}
  \AP
  \phantomintro\tcoinix
  \phantomintro\tcoiniix
  To form normal "terms", we place each "term" inside an "anonymous function@@hedgehog":
  \begin{align*}
    \begin{array}[t]{@{}l@{}}
      \tcoinix : \GFun{\Unit}{\Bool} \\
      \reintro*\tcoinix = \tlam{u}{\Unit}{\tlet+{x}{\tflip{\nicefrac{1}{2}}}{\tlet+{y}{\tflip{\nicefrac{1}{2}}}{\quad \treturn{x \teq y}}}}
    \end{array}
     &  &
    \begin{array}[t]{@{}l@{}}
      \tcoiniix : \GFun{\Unit}{\Bool} \\
      \reintro*\tcoiniix = \tlam{u}{\Unit}{\tlet+{x}{\tflip{\nicefrac{1}{2}}}{\treturn{x}.}}
    \end{array}
  \end{align*}
  \AP
  \phantomintro\tcontextx
  We similarly wrap the function $\tcontext$ as
  \[
    \begin{array}[t]{@{}l@{}}
      \tcontextx : \Fun{(\GFun{\Unit}{\Bool})}{(\GFun{\Unit}{\Bool})} \\
      \reintro*\tcontextx =
      \tlam{f}{\GFun{\Unit}{\Bool}}{
        \tlam{u}{\Unit}{
          \tlet+{b}{\tshrink{\tpair{\ttrue, \tfalse}}}{
            \tif{b}{\app{f}{\tunit}}{\app{\tcoinix}{\tunit}.}}}}
    \end{array}
  \]
\end{example}

\subsection{Semantics}

\begin{definition}[\hedgehog*{} Interpretation and Semantics] \label[definition]{def:semantics*}
  \AP
  A \hedgehog*{} ""interpretation"" $F : \Qbs × \Qbs → \Qbs$ is a "pre-arrow" equipped with additional operations
  \begin{align*}
     & \reintro*\flip_F : [0, 1] \qfun \app{F}(\one, \B)            &
     & \reintro*\shrink^n_F ∈ \app{F}(\textstyle ∏_{i = 0}^n A, A).
  \end{align*}
  An interpretation induces a ""type and environment semantics"" $\tysem*{\blank}^F : \Type* ∪ \Env* → \Qbs$, defined analogously to \Cref{def:semantics}, but with
  \[
    \tysem*{\GFun{τ₁}{τ₂}}^F = \app{F}(\tysem*{τ₁}^F, \tysem*{τ₂}^F).
  \]
  An interpretation also induces, for all $\typed*[Γ]{t}{τ}$, a ""term semantics"" $\tmsem*[Γ]{t}[τ]^F : \tysem{Γ}^F \qfun \tysem{τ}^F$ defined analogously to \Cref{def:semantics} with
  \begin{mathpar}
    \tmsem*[Γ]{\tlam{x}{τ₁}{t}}[\GFun{τ₁}{τ₂}]^F = \tmsem*[Γ \semi x : τ₁]{t}[τ₂]^F

    \tmsem*[Γ]{\tflip{p}}[\GFun{\Unit}{\Bool}]^F{ρ} = \flip_F{p}

    \tmsem*[Γ]{\tshrink}[\GFun{\underbrace{\Prod{τ}{\Prod{⋯}{τ}}}_{n + 1\text{ times}}}{\tau}]^F{ρ} = \shrink^n_F
  \end{mathpar}
  where the function $\tmsem*[Γ \semi Δ]{t}[τ]^F : \tysem*{Γ}^F \qfun \app{F}(\tysem*{Δ}^F, \tysem*{τ}^F)$ is defined for all $\ctyped{Γ}{Δ}{t}{τ}$ in \Cref{fig:semantics*}.
\end{definition}
\begin{figure}[t]
  \begin{mathpar}
    \tmsem*[Γ \semi Δ]{\treturn{x}}[τ]^F{ρ} = \arr(ρ' ↦ \tmsem*[Γ \join Δ]{x}[τ]^F(ρ \join ρ'))

    \begin{array}{@{}l@{}}
      \tmsem*[Γ \semi Δ]{\tlet{x}{t₁}{t₂}}[τ₂]^F{ρ}                    \\
      \qquad = \tmsem*[Γ \semi Δ \join x : τ₁]{t₂}[τ₂]^F{ρ}            \\
      \qquad {} \acomp \arr((ρ', ρ'_x) ↦ (ρ'_y)_{y ∈ \Dom{Δ} ∪ \{x\}}) \\
      \qquad {} \acomp \afirst(\tmsem*[Γ \semi Δ]{t₁}[τ₁]^F{ρ})        \\
      \qquad {} \acomp \arr(ρ' ↦ (ρ', ρ'))                             \\
      \quad \text{where } \ctyped{Γ}{Δ}{t₁}{τ₁}, \quad \ctyped{Γ}{Δ \join x : τ₁}{t₂}{τ₂}
    \end{array}

    \begin{array}{@{}l@{}}
      \tmsem*[Γ \semi Δ]{\app{t₁}{t₂}}[τ₂]^F{ρ} = \tmsem*[Γ]{t₁}[\GFun{τ₁}{τ₂}]^F{ρ} \acomp \tmsem*[Γ \semi Δ]{t₂}[τ₁]^F{ρ} \\
      \quad \text{where } \typed*[Γ]{t₁}{\GFun{τ₁}{τ₂}},\quad \ctyped{Γ}{Δ}{t₂}{τ₁}
    \end{array}

    \begin{array}{@{}l@{}}
      \tmsem*[Γ \semi Δ]{\tmatch{t₀}{x₁}{t₁}{x₂}{t₂}}[τ₃]^F{ρ}                                   \\
      \qquad = \arr((a, b) ↦ (b, a))                                                             \\
      \qquad {} \acomp {} \aleft(\tmsem*[Γ \semi Δ \join x_2 : τ₂]{t₂}[τ₃]^F{ρ})                 \\
      \qquad {} \acomp {} \arr((a, b) ↦ (b, a))                                                  \\
      \qquad {} \acomp {} \aleft(\tmsem*[Γ \semi Δ \join x_1 : τ₁]{t₁}[τ₃]^F{ρ})                 \\
      \qquad {} \acomp {} \arr(ρ' ↦ \distrib(ρ', \tmsem*[Γ \join Δ]{t₀}[τ₁ + τ₂]^F(ρ \join ρ'))) \\
      \quad \text{where } \typed*[Γ \join Δ]{t₀}{τ₁ + τ₂}, \quad \ctyped{Γ}{Δ \join x₁ : τ₁}{t₁}{τ₃}, \quad \ctyped{Γ}{Δ \join x₂ : τ₂}{t₂}{τ₃}
    \end{array}
  \end{mathpar}
  \vspace{-0.2in}
  \Description{Hedgehog* Command Semantics}
  \caption{\hedgehog*{} Command Semantics.}
  \label{fig:semantics*}
  \vspace{-0.15in}
\end{figure}

\AP
The semantics for "terms" are adapted from \citet{arrow-calculus}. The semantics of \hedgehog*{} "terms" are defined analogously to \hedgehog{}, where an "interpretation" provides implementations of "command terms". One significant difference is that the semantics of $\tmatchkeyword$ (specifically within a "command term") is now also defined by the "interpretation". As with \Cref{def:semantics}, we only require an "interpretation" to be a "pre-arrow" and not an "arrow", since our "sampling interpretation" is not an "arrow".

\begin{definition}[\hedgehog*{} Sampling Interpretation] \label[definition]{def:sampling-interpretation*}
  \AP
  The ""sampling interpretation"" $\Sampling* : \Qbs \times \Qbs → \Qbs$ of \hedgehog*{} is defined in \Cref{fig:sampling-interpretation*}.
\end{definition}
\begin{figure}[t]
  \begin{mathpar}
    \Sampling*(A, B)
    = (\Sample × A \qfun \Tree{B})

    \arr_{\Sampling*}{f}(σ, a) = \unit(\app{f}{a})

    \app(f \acomp_{\Sampling*} g)(σ, a) = \bind(b ↦ \app{f}(\app{πᵣ}{σ}, b))(\app{g}(\app{πₗ}{σ}, a))

    \afirst_{\Sampling*}{f}(σ, (a, c)) = \map(\blank, c)(\app{f}(σ, a))

    \aleft_{\Sampling*}{f}(σ, \app{ι₁}{a}) = \map{ι₁}(\app{f}(σ, a))

    \aleft_{\Sampling*}{f}(σ, \app{ι₂}{b}) = \unit(\app{ι₂}{b})

    \flip_{\Sampling*}{p}(σ, \sing) = \unit(σ < p)

    \shrink^n_{\Sampling*}(σ, (a₀, a₁, …, aₙ)) = \node(a₀, \unit{a₁} ∷ … ∷ \unit{aₙ} ∷ ε)
  \end{mathpar}
  \Description{Hedgehog*'s Sampling Interpretation}
  \vspace{-0.2in}
  \caption{\hedgehog*{}'s Sampling Interpretation.}
  \label{fig:sampling-interpretation*}
  \vspace{-0.15in}
\end{figure}

\AP
\hedgehog{}'s "sampling semantics" defines generators as "random variables", and \hedgehog*{} analogously defines "generator functions" as "random functions". The remaining definitions also mirror \Cref{def:semantics}, e.g., where composition $f \acomp_{\Sampling*} g$ splits its "sample" between arguments $f$ and $g$.

\begin{remark} \label[remark]{rem:restrictions}
  \emph{Only} redefining the semantics of $\tmatchkeyword$ is not sufficient to rule out pathological contexts such as $\tdiscrim$. \hedgehog*{}'s additional type system restrictions are necessary due to the presence of higher-order computation. For example, recall from \Cref{prop:non-compositionality},
  \[
    \tcontext{\tcoinii} \mathrel{\kl[\semeq]{=}}^{\Sampling} \tlet{b}{\tshrink{\tpair{\ttrue, \tfalse}}}{\tif{b}{\tcoini}{\tcoinii}}.
  \]
  This can be expressed equivalently --- without any case analysis --- as
  \[
    \tlet{m}{\tshrink{\tpair{\tcoini, \tcoinii}}}{m}.
  \]
  The \hedgehog*{} version of this term is
  \[
    \tlet{m}{\tshrink{\tpair{\tcoinix, \tcoiniix}}}{\app{m}{\tunit}}
  \]
  which does not satisfy the "typing relation" because $\tletkeyword$-bound variables (i.e., $m$), cannot appear on the left hand side of a function application (as in $\app{m}{\tunit}$).
\end{remark}

\begin{example} \label[example]{ex:coin-semantics*}
  \AP
  The "sampling semantics" of $\tcoinix, \tcoiniix : \GFun{\Unit}{\Bool}$ (\Cref{ex:coin-syntax*}) are nearly identical to their counterparts in \Cref{ex:coin-semantics}:
  \begin{align*}
    \tmsem*{\tcoinix}^{\Sampling*} (σ, u)  & = \unit(σ₁ < \nicefrac{1}{2} ⇔ σ₂ < \nicefrac{1}{2}) &
    \tmsem*{\tcoiniix}^{\Sampling*} (σ, u) & = \unit(σ₁ < \nicefrac{1}{2})
  \end{align*}
  where $σ₁, σ₂ ∈ \Sample$ obtained from $σ$ by some sequence of invocations of $\projl$ and $\projr$. The exact sequences are unimportant, but they are much longer than in \Cref{ex:coin-semantics} because the semantics of $\tletkeyword$ (\Cref{fig:semantics*}) involves a large number of compositions ($\acomp_{\Sampling*}$). \Citet[\S 4.2]{arrow-programming} describes how these redundant compositions can be eliminated.

  On the other hand, the semantics of $\tcontextx{\tcoinix}$ is
  \begin{equation*}
    \tmsem*{\tcontextx{\tcoinix}}^{\Sampling*}{σ} =
    \node(σ₁ < \nicefrac{1}{2} ⇔ σ₂ < \nicefrac{1}{2}, \unit(σ₃ < \nicefrac{1}{2} ⇔ σ₄ < \nicefrac{1}{2}) ∷ ε)
  \end{equation*}
  where $σ₁, σ₂, σ₃, σ₄ ∈ \Sample$ are also obtained from $σ$ by some sequence of invocations of $\projl$ and $\projr$. Unlike \hedgehog{}, \hedgehog*{} ensures that each branch in $\tcontextx$ receives an independent "sample", so $\tcontextx{\tcoinix}$ does not return "trees" where all nodes are identically labelled. This example also demonstrates that the command term $\tif{t}{t'}{t'}$ is generally \emph{not} identical to $t'$, i.e., \Cref{eq:if-elim} does not hold in \hedgehog*{}.
\end{example}

\AP
Like \hedgehog{}, \hedgehog*{} interpretations induce a "semantic equivalence" relation.

\begin{definition}[\hedgehog*{} Semantic Equivalence]
  \AP
  Two \hedgehog*{} terms $\typed*[Γ]{t₁, t₂}{τ}$ are ""semantically equivalent"" with respect to an interpretation $F$ (written $\semeq*{F}[Γ]{t₁}{t₂}[τ]$) if $\tmsem*[Γ]{t₁}[τ]^F = \tmsem*[Γ]{t₂}[τ]^F$. We write $\semeq*{F}{t₁}{t₂}[τ]$ for $\semeq*{F}[\sing]{t₁}{t₂}[τ]$ and omit the type ascription ($: τ$) when it can be inferred from the surrounding context.
\end{definition}


\subsection{Distributions}

\begin{definition}[\hedgehog*{} Distribution Semantics] \label[definition]{def:distribution*}
  \AP
  The ""distribution semantics"" of term $t : \GFun{\Unit}{\Bool}$ is given by the function
  \[
    \begin{array}{@{}l@{}c@{}l@{}}
      \dist*{ & \blank & } : \Term* → \Dist(\one \qfun \B)                    \\
      \dist*{ & t      & } = \map(\curry{\tmsem*{t}^{\Sampling*}}){\uniform}.
    \end{array}
  \]
\end{definition}

\AP
\Cref{def:distribution*} is the \hedgehog*{} analogue of \Cref{def:distribution}.

\begin{example} \label[example]{ex:coin-distribution*}
  \AP
  We compute the "distributions" of $\tcoinix, \tcoiniix : \GFun{\Unit}{\Bool}$ given in \Cref{ex:coin-syntax*}:
  \begin{align*}
    \dist*{\tcoinix}  & = \nicefrac{1}{2} ⋅ \unit(\sing ↦ \unit{\true}) + \nicefrac{1}{2} ⋅ \unit(\sing ↦ \unit{\false}) \\
    \dist*{\tcoiniix} & = \nicefrac{1}{2} ⋅ \unit(\sing ↦ \unit{\true}) + \nicefrac{1}{2} ⋅ \unit(\sing ↦ \unit{\false})
  \end{align*}
  These "distributions" are essentially identical to the ones given in \Cref{ex:coin-distribution}.
\end{example}

\begin{definition}[\hedgehog*{} Distribution Interpretation] \label[definition]{def:distribution-interpretation*}
  \AP
  The ""distribution interpretation"" $\Distrib : \Qbs → \Qbs$ is defined in \Cref{fig:distribution-interpretation*}.
\end{definition}

\begin{figure}[t]
  \begin{mathpar}
    \Distrib(A, B)
    = \Dist(A \qfun \Tree{B})

    \arr_{\Distrib}{μ} = \unit(\unit ∘ μ)

    μ₁ \acomp_{\Distrib} μ₂
    = \bind(a₂ ↦ \map(a₁ ↦ \bind{a₁} ∘ a₂){μ₁}){μ₂}

    \afirst_{\Distrib}{μ} = \map(μ' ↦ (a, c) ↦ \map(\blank, c)(μ' a)){μ}

    \aleft_{\Distrib}{μ} = \map(μ' ↦ \elim{\map{\inj_1} ∘ μ', \inj_2}){μ}

    \flip_{\Distrib}{p}
    = \map((σ, \sing) ↦ σ < p){\uniform}

    \shrink^n_{\Distrib} = \unit((a₀, a₁, …, aₙ) ↦ \node(a₀, \unit{a₁} ∷ … ∷ \unit{aₙ} ∷ ε))
  \end{mathpar}
  \vspace{-0.15in}
  \Description{Hedgehog*'s Distribution Interpretation}
  \caption{\hedgehog*{}'s Distribution Interpretation.}
  \label{fig:distribution-interpretation*}
  \vspace{-0.1in}
\end{figure}

\ifextended
\begin{proposition}
  The interpretation $\Distrib$ is an "arrow".
\end{proposition}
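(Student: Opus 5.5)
The plan is to verify each of the eight equations in \Cref{fig:arrows} for $\Distrib$ directly, working entirely with the monad laws of $\Dist$ (the distribution monad is a genuine "monad", unlike $\Sampling$) together with the monad laws of $\Tree$ from \Cref{fig:lists-and-trees}.

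The first step is a reformulation. An element $μ ∈ \Distrib(A, B)$ is a distribution over deterministic tree-valued functions $A \qfun \Tree{B}$. With this reading:
\begin{itemize}
  \item $\arr{f}$ is the Dirac distribution at the Kleisli arrow $\unit ∘ f$;
  \item $μ₁ \acomp μ₂$ samples $a₂ ∼ μ₂$ and $a₁ ∼ μ₁$ independently and returns the Kleisli composite $\bind{a₁} ∘ a₂$;
  \item $\afirst$ and $\aleft$ are push-forwards of pointwise operations on Kleisli arrows of $\Tree$.
\end{itemize}
So $\Distrib(A,B) = \Dist(\mathrm{Kl}_{\Tree}(A,B))$, and every arrow operation is the ``lifting'' through $\Dist$ of the corresponding operation of the Kleisli arrow of the "monad" $\Tree$. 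I will first record, as an auxiliary fact, that the Kleisli pre-arrow of $\Tree$ satisfies all equations of \Cref{fig:arrows}. This is the standard result that the Kleisli construction of a strong monad yields an arrow. The needed facts are exactly the three $\Tree$ monad laws plus naturality of $\map(\blank, c)$ and of $\map{\inj_1}$ with respect to $\bind$, and these follow by structural induction on trees.

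Next, I will transfer each equation through $\Dist$. Equations involving only $\acomp$ and $\arr$ come from $\Dist$'s monad laws:
\begin{itemize}
  \item $f \acomp \arr{\id}$ and $\arr(f∘g) = \arr{f} \acomp \arr{g}$ use the left unit law of $\Dist$ (binding a Dirac) and then the corresponding Kleisli law for $\Tree$.
  \item Associativity $(f \acomp g) \acomp h = f \acomp (g \acomp h)$ is the key place where commutativity is needed. Both sides unfold to a triple nested $\bind$/$\map$ over $μ_h, μ_g, μ_f$, but in different orders, so I must reorder independent samples. I will use commutativity (Fubini) of the quasi-Borel measure monad for the relevant (s-finite) measures, which \citet{quasi-borel-spaces} establish. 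Concretely, I will prove that both sides integrate any test function $k$ to the same triple integral, using the $\bind$ integral property from \Cref{fig:integral}, swapping integrals, and then applying Kleisli associativity of $\Tree$ pointwise.
\end{itemize}

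The remaining equations follow the same pattern:
\begin{itemize}
  \item $\afirst(\arr{f})$ and $\afirst(f \acomp g)$ use the fact that $\map$ distributes over $\bind$ in $\Dist$ (so independent sampling is preserved), plus strength of $\Tree$.
  \item The equations with $\arr(\id \fprod g)$, $\arr(\blank₁)$ and $\arr{\assoc}$ composed with $\afirst$ are single-sample. One side is a push-forward of $μ$ and the other is a $\bind$ against a Dirac, which again reduces to a push-forward of $μ$. This reduces each to a pointwise identity of Kleisli arrows of $\Tree$, proved by unfolding $\map(\blank,c)$ and the $\Tree$ laws.
\end{itemize}
Equality of distributions throughout is established by comparing integrals against arbitrary $k$, using the properties in \Cref{fig:integral}.

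I expect associativity, and generally any equation that composes two genuinely random components, to be the main obstacle. It relies on commutativity of $\Dist$, which is not among the listed integral properties, so it must be imported from the quasi-Borel literature. It also requires checking quasi-measurability of maps like $(a₁,a₂) ↦ \bind{a₁} ∘ a₂$ on function spaces, so that $\bind$ and $\map$ are well-defined there; I would discharge this using the cartesian closed structure of $\Qbs$.
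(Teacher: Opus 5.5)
Your route is the paper's route. The paper's one-line proof says exactly that $\mathfrak{D}(A,B)=\mathrm{Dist}(A \rightarrow_{\mathbf{Q}} \mathrm{Tree}\,B)$ is the Kleisli arrow of $\mathrm{Tree}$ lifted through the monad $\mathrm{Dist}$, which is your reformulation.

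What needs correcting is your analysis of associativity. You claim the two sides sample ``in different orders'' so that Fubini is needed; that is false, and no commutativity step is required. Write $K(a_1,a_2)=\mathrm{bind}\,a_1\circ a_2$ for the Kleisli composite. By definition, $\mu_1 \circ_{\mathfrak{D}} \mu_2$ first draws $a_2$ from $\mu_2$, then $a_1$ from $\mu_1$, and returns $K(a_1,a_2)$. Using only the unit and associativity laws of $\mathrm{Dist}$ (and the map/bind fusion they imply), both $(\mu_f \circ \mu_g)\circ\mu_h$ and $\mu_f\circ(\mu_g\circ\mu_h)$ unfold to
\[
  \mathrm{bind}\bigl(a_h \mapsto \mathrm{bind}\bigl(a_g \mapsto \mathrm{map}(a_f \mapsto k(a_f,a_g,a_h))\,\mu_f\bigr)\,\mu_g\bigr)\,\mu_h .
\]
Here $k(a_f,a_g,a_h)$ is $K(K(a_f,a_g),a_h)$ on the left and $K(a_f,K(a_g,a_h))$ on the right. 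The draws occur in the same order $h,g,f$ on both sides, and the returned Kleisli arrows agree pointwise by associativity of $\mathrm{Tree}$'s bind. Likewise, both sides of $\mathrm{first}(f\circ g)=\mathrm{first}\,f\circ\mathrm{first}\,g$ draw from $\mu_g$ and then from $\mu_f$.

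In fact none of the eight laws in \Cref{fig:arrows} swaps the order of two effectful components. So $M(A(-,-))$ is an arrow for any monad $M$ and any arrow $A$ (the ``static arrow'' construction), which is why the paper needs only that $\mathrm{Dist}$ is a monad.

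Invoking commutativity of the quasi-Borel measure monad would not break your proof, since it does hold for those constructions. But it adds a dependency the paper neither states nor needs. It also brings in a comparison-of-integrals argument that plain monad-law rewriting makes unnecessary. You should drop both, together with the incorrect claim about sampling order.
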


\begin{proof}
  Follows from the fact that $\Dist$ is a "monad" and $(\blank \qfun \Tree{\blank})$ is an "arrow".
\end{proof}

\begin{lemma} \label[lemma]{lem:distrib-preservation}
  The following equations hold:
  \begin{align}
    \map(\curry(\arr_{\Sampling*}{f})){λ}
     & = \arr_{\Distrib}{f}
    \label{eq:dist-arr}                                          \\
    \map(\curry(\afirst_{\Sampling*}{f})){λ}
     & = \afirst_{\Distrib}(\map(\curry{f}{λ}))
    \label{eq:dist-first}                                        \\
    \map(\curry(\aleft_{\Sampling*}{f})){λ}
     & = \aleft_{\Distrib}{f}
    \label{eq:dist-left}                                         \\
    \map(\curry(f \acomp_{\Sampling*} g)){λ}
     & = \map(\curry{f}){λ} \acomp_{\Distrib} \map(\curry{g}){λ}
    \label{eq:dist-comp}                                         \\
    \map(\curry(\flip_{\Sampling*}{p})){λ}
     & = \flip_{\Distrib}{p}
    \label{eq:dist-flip}                                         \\
    \map(\curry{\shrink^n_{\Sampling*}}){λ}
     & = \shrink^n_{\Distrib}
    \label{eq:dist-shrink}
  \end{align}
\end{lemma}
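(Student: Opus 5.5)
The plan is to prove each equation by extensionality of measures: two distributions on $A \qfun \Tree{B}$ are equal when they give equal integrals against every test function $h : (A \qfun \Tree{B}) \qfun \ennreal$. In each case we unfold the left-hand side using the push-forward property $\integral{h}{(\map{f}{μ})} = \integral{h ∘ f}{μ}$. We then unfold the right-hand side using the definitions in \Cref{fig:distribution-interpretation*} together with the integral properties of \Cref{fig:integral}, namely those for $\unit$, $\bind$ and push-forward. Finally we compare the two sides pointwise in $σ$.

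Four of the equations involve constant or structure-only behaviour and are handled directly.

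\begin{itemize}
\item For \cref{eq:dist-arr,eq:dist-shrink}, $\curry(\arr_{\Sampling*}{f})$ and $\curry{\shrink^n_{\Sampling*}}$ are constant in $σ$. The push-forward of $λ$ along a constant function $c$ is $\unit{c}$, because $λ$ has total mass $1$ (since $\integral[σ]{\iverson{0 ≤ σ ≤ 1}}{λ} = 1$). We then check that $c$ equals $\unit ∘ f$, respectively the shrink-tree function.
\item For \cref{eq:dist-flip}, the two sides agree literally after currying, since $\flip_{\Distrib}{p} = \map((σ,\sing) ↦ σ<p){λ}$.
\item For \cref{eq:dist-first,eq:dist-left}, $\curry(\afirst_{\Sampling*}{f})$ factors as $(μ' ↦ ((a,c) ↦ \map(\blank,c)(μ'\,a))) ∘ \curry{f}$. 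Functoriality of push-forward, $\map(g∘k) = \map{g}∘\map{k}$, which follows from the push-forward integral property, then gives the result; $\aleft$ is handled the same way with $\elim{\map{\inj_1}∘μ', \inj_2}$. I read the right-hand side of \cref{eq:dist-left} as $\aleft_{\Distrib}(\map(\curry{f}){λ})$, matching \cref{eq:dist-first}; the statement as printed appears to have a typo. Likewise I read the right-hand side of \cref{eq:dist-arr} as $\arr_{\Distrib}{f}$ with $\arr_{\Distrib}{f} = \unit(\unit∘f)$.
\end{itemize}

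The main obstacle is \cref{eq:dist-comp}, since this is where sample splitting must be converted into independence. Fix a test function $h$. The left-hand side is
\[
  \integral[σ]{h\bigl(a ↦ \bind(b ↦ f(\projr{σ}, b))(g(\projl{σ}, a))\bigr)}{λ}.
\]
This integrand has the form $F(\projl{σ}, \projr{σ})$ with
\[
  F(σ₁,σ₂) = h\bigl(a ↦ \bind(\curry{f}{σ₂})(\curry{g}{σ₁}\,a)\bigr).
\]
The defining property of $\projl,\projr$ stated in \Cref{sec:background/probability-theory} turns this into the iterated integral $\integral[σ₁]{\integral[σ₂]{F(σ₁,σ₂)}{λ}}{λ}$.

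On the right-hand side, the integral rules for $\bind$ and $\map$ give
\[
  \integral[a₂]{\integral[a₁]{h(\bind{a₁} ∘ a₂)}{\map(\curry f){λ}}}{\map(\curry g){λ}}.
\]
Unfolding both push-forwards yields exactly the same iterated integral, with $a₂ = \curry{g}{σ₁}$ and $a₁ = \curry{f}{σ₂}$. The order of integration matches, so Fubini is not even needed.

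The only genuine side condition is quasi-measurability of $F$, and of the curried functions, needed to apply the product property. This follows because $\Qbs$ is cartesian closed and $\bind$, $\curry$, and the projections are quasi-measurable.
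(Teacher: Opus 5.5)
Your proposal is correct and follows essentially the same route as the paper's proof: constant push-forward for $\arr$ (and $\shrink$), functoriality of push-forward for $\afirst$ and $\aleft$, and for $\acomp$ the $\projl/\projr$ splitting property followed by rewriting the iterated integral as a $\bind$ over $\map(\curry{g}){λ}$ of a $\map$ over $\map(\curry{f}){λ}$; the only difference is that you check each equality against test functions, while the paper chains the monad laws of $\Dist$ directly. Your corrected reading of the right-hand side of the $\aleft$ equation as $\aleft_{\Distrib}(\map(\curry{f}){λ})$ is exactly what the paper's own derivation produces.
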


\begin{proof} \phantom{A}
  \begin{enumerate}
    \item[\eqref{eq:dist-arr}] \begin{align*}
            \map(\curry(\arr_{\Sampling*}{f})){λ}
             & = \map(σ ↦ a ↦ \arr_{\Sampling*}{f}(σ, a)){λ} \\
             & = \map(σ ↦ a ↦ \unit(\app{f}{a})){λ}          \\
             & = \unit(a ↦ \unit(\app{f}{a}))                \\
             & = \arr_{\Distrib}{f}
          \end{align*}
    \item[\eqref{eq:dist-first}] \begin{align*}
            \map(\curry(\afirst_{\Sampling*}{f})){λ}
             & = \map(σ ↦ (a, c) ↦ \afirst_{\Sampling*}{f}(σ, (a, c))){λ}      \\
             & = \map(σ ↦ (a, c) ↦ \map(\blank, c)(\app{f}(σ, a))){λ}          \\
             & = \map(σ ↦ (a, c) ↦ \map(\blank, c)(\curry{f}{σ}{a})){λ}        \\
             & = \map(f' ↦ (a, c) ↦ \map(\blank, c)(f' a))(\map(\curry{f}){λ}) \\
             & = \afirst_{\Distrib}(\map(\curry{f}){λ})
          \end{align*}
    \item[\eqref{eq:dist-left}] \begin{align*}
            \map(\curry(\aleft_{\Sampling*}{f})){λ}
             & = \map(σ ↦ \elim{\map{\inj_1} ∘ \curry{f}{σ}, \unit ∘ \inj_2}){λ} \\
             & = \map(f' ↦ \elim{\map{\inj_1} ∘ f', \inj_2})(\map(\curry{f}){λ}) \\
             & = \aleft_{\Distrib}(\map(\curry{f}){λ})
          \end{align*}
    \item[\eqref{eq:dist-comp}] \begin{align*}
                  & \map(\curry(f \acomp_{\Sampling*} g)){λ}                                           \\
            {}={} & \map(σ ↦ a ↦ \bind(b ↦ \app{f}(\projr{σ}, b))(\app{g}(\projl{σ}, a))){λ}           \\
            {}={} & \bind(σ ↦ \unit(a ↦ \bind(b ↦ \app{f}(\projr{σ}, b))(\app{g}(\projl{σ}, a)))){λ}   \\
            {}={} & \bind(σ₁ ↦ \bind(σ₂ ↦ \unit(a ↦ \bind(b ↦ \app{f}(σ₂, b))(\app{g}(σ₁, a)))){λ}){λ} \\
            {}={} & \bind(σ₁ ↦ \map(σ₂ ↦ a ↦ \bind(b ↦ \app{f}(σ₂, b))(\app{g}(σ₁, a))){λ}){λ}         \\
            {}={} & \bind(σ₁ ↦ \map(σ₂ ↦ \bind(\curry{f}{σ₂}) ∘ \curry{g}{σ₁}){λ}){λ}                  \\
            {}={} & \bind(a₂ ↦ \map(σ ↦ \bind(\curry{f}{σ}) ∘ a₂){λ})(\map(\curry{g}){λ})              \\
            {}={} & \bind(a₂ ↦ \map(a₁ ↦ \bind{a₁} ∘ a₂)(\map(\curry{f}){λ}))(\map(\curry{g}){λ})      \\
            {}={} & \map(\curry{f}){λ} \acomp_{\Distrib} \map(\curry{g}){λ}
          \end{align*}
  \end{enumerate}
\end{proof}
\fi

\begin{theorem} \label[theorem]{thm:compositionality}
  \AP
  Suppose $\typed*[\sing]{t}{\GFun{\Unit}{\Bool}}$. Then $\dist*{t} = \tmsem*{t}^{\Distrib}$.
\end{theorem}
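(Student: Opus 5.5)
Our plan is to prove a stronger statement by mutual induction on typing derivations: for every command derivation $\ctyped{Γ}{Δ}{t}{τ}$ and every $ρ ∈ \tysem*{Γ}$, we aim to show
\[
  \map(\curry(\tmsem*[Γ \semi Δ]{t}[τ]^{\Sampling*}{ρ})){λ} = \tmsem*[Γ \semi Δ]{t}[τ]^{\Distrib}{ρ},
\]
and, for normal terms $\typed*[Γ]{t}{τ}$, that $\tmsem*[Γ]{t}[τ]^{\Sampling*}$ and $\tmsem*[Γ]{t}[τ]^{\Distrib}$ are related by a logical relation $R_τ ⊆ \tysem*{τ}^{\Sampling*} × \tysem*{τ}^{\Distrib}$. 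This relation would be defined by $R_{\GFun{τ₁}{τ₂}}(f, μ) \iff \map(\curry{f}){λ} = μ$ when $τ₁, τ₂$ are first-order. For higher types it would be the usual lifting through sums, products and functions. The theorem is then the instance $t = \tlam{u}{\Unit}{c}$ with $Γ = \sing$, since $\tmsem*{\tlam{u}{\Unit}{c}}^F = \tmsem*[\sing \semi u : \Unit]{c}^F$ and $\dist*{t}$ is exactly the push-forward of the curried sampling semantics.

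The inductive cases for commands follow the clauses of \Cref{fig:semantics*}, each of which is built only from $\arr$, $\acomp$, $\afirst$, $\aleft$ applied to semantics of subterms. Each step is then discharged by \Cref{lem:distrib-preservation}. For $\treturn$ we use \eqref{eq:dist-arr}. For $\tletkeyword$ we use \eqref{eq:dist-comp} three times, together with \eqref{eq:dist-first} and \eqref{eq:dist-arr}. For application we use \eqref{eq:dist-comp} together with the hypothesis on $t₁$ at generator-function type. For $\tmatchkeyword$ we use \eqref{eq:dist-left} twice, interleaved with \eqref{eq:dist-arr} and \eqref{eq:dist-comp}. The primitives $\tflip$ and $\tshrink$ are handled by \eqref{eq:dist-flip} and \eqref{eq:dist-shrink}. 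Normal-term cases that do not touch generators (pairs, injections, projections, ordinary $λ$, application, $\tmatchkeyword$ at value level) preserve the relation by construction of $R$, because both interpretations use the same semantics for these constructs.

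The key observation making this work is the typing discipline: in $\ctyped{Γ}{Δ}{\app{t₁}{t₂}}{τ₂}$ the function $t₁$ is typed only in $Γ$. Hence, for fixed $ρ$, the arrow $\tmsem*{t₁}^{\Sampling*}{ρ}$ is a single random function independent of the sample fed to the rest of the command, and \eqref{eq:dist-comp} applies with independent samples. (The analogous equation would fail in \hedgehog{}, as \Cref{rem:restrictions} illustrates.) Similarly, in the $\tmatchkeyword$ clause both branches appear as separate arrows composed via $\acomp$. Each branch therefore receives its own part of the sample, so \eqref{eq:dist-comp} is applicable rather than requiring correlated samples.

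We expect the main obstacle to be the higher-order normal-term cases. If $Γ$ contains variables of type $\GFun{τ₁}{τ₂}$, or if $t₁$ in an application is a bound variable or a general higher-order term, we need the relation $R$ to be preserved by the environment. Moreover, at types such as $\Fun{(\GFun{\Unit}{\Bool})}{(\GFun{\Unit}{\Bool})}$ the relation is not a function from one side to the other. We would first try to sidestep this by exploiting that the theorem concerns closed terms: we would use $R$ only as a relation and verify the fundamental lemma case by case, checking carefully that \eqref{eq:dist-comp} holds for arbitrary related pairs, not just for ones literally of the form $\map(\curry{f}){λ}$. If that becomes delicate, a fallback is to restrict the induction to environments $Γ$ whose generator-typed entries are deterministic images, and to argue via substitution that closed programs can be normalized to that form.
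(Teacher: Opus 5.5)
Your command cases are exactly the paper's proof, which consists only of ``follows from \Cref{lem:distrib-preservation}'': each clause of \Cref{fig:semantics*} is discharged by the matching push-forward equation. You also correctly identify why this works: in command application, $t_1$ is typed in $\Gamma$ alone. You are right that this does not yet cover higher-order programs. The lemma relates $\Sampling*(A,B)$ and $\Distrib(A,B)$ at the \emph{same} spaces, while $\tysem*{\tau}^{\Sampling*}\neq\tysem*{\tau}^{\Distrib}$ whenever $\tau$ mentions $\rightsquigarrow$, and the paper's one-line proof says nothing about this.

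However, your repair leaves a concrete gap. You define $R_{\GFun{\tau_1}{\tau_2}}$ only for generator-free $\tau_1,\tau_2$. The ``usual lifting through sums, products and functions'' does not say how to relate an element of $\Sample\times A\qfun\Tree{B}$ to one of $\Dist(A'\qfun\Tree{B'})$ when $A\neq A'$ or $B\neq B'$. Your command invariant feeds the same $\rho$ to both semantics and compares by push-forward, so it cannot even be stated once $\Gamma$, $\Delta$ or $\tau$ contains a generator type. Quantifying over $R_\Gamma$-related environments fixes first-order entries such as $f$ in the body of $\tcontextx$. It does not fix, say, a closed $\mathtt{fun}\;u$ whose body is $\tlet{m}{\tshrink{\tpair{\tcoinix, \tcoiniix}}}{\treturn{\ttrue}}$. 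That term has type $\GFun{\Unit}{\Bool}$, yet it uses $\tshrink$ at $\GFun{\tau\times\tau}{\tau}$ with $\tau=\GFun{\Unit}{\Bool}$ and puts $m:\tau$ into $\Delta$. Your induction hypothesis says nothing about these subderivations. Relatedly, a closed term of type $\GFun{\Unit}{\Bool}$ need not be a $\mathtt{fun}$ (e.g.\ $\tcontextx$ applied to $\tcoinix$). So the theorem must come from the fundamental lemma at that type, not from the $\lambda$ instance alone.

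The missing ingredient is a relational lifting through $\Dist$ and $\Tree$ that couples the two sides along the sample. Let $R_{\GFun{\tau_1}{\tau_2}}(f,\mu)$ hold iff $\mu=\map{\varphi}{\uniform}$ for some quasi-measurable $\varphi:\Sample\qfun(\tysem*{\tau_1}^{\Distrib}\qfun\Tree{\tysem*{\tau_2}^{\Distrib}})$ such that, for every $\sigma$, $\curry{f}{\sigma}$ and $\varphi(\sigma)$ send $R_{\tau_1}$-related arguments to $\Tree(R_{\tau_2})$-related trees. Then state the command invariant as $R_{\GFun{\Delta}{\tau}}$ between the two denotations at $R_\Gamma$-related environments.

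At generator-free types $\varphi$ is forced to be $\curry{f}$, so this collapses to your push-forward equation. Your worry about ``arbitrary related pairs'' is therefore moot there, and the normalisation fallback is unnecessary. The theorem for an arbitrary closed $t$ is then the fundamental lemma at $\GFun{\Unit}{\Bool}$.

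Each equation of \Cref{lem:distrib-preservation} lifts by the same calculation. For composition, take the witness $\sigma\mapsto\bind(\varphi_1(\projr{\sigma}))\circ\varphi_2(\projl{\sigma})$. Its push-forward is $\mu_1\acomp_{\Distrib}\mu_2$ by the $\projl$/$\projr$ splitting property, and tree $\bind$ preserves $\Tree(R)$. The other five cases have evident witnesses.
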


\ifextended
\begin{proof}
  Follows from \Cref{lem:distrib-preservation}.
\end{proof}
\fi

\AP
\Cref{thm:compositionality} states that, unlike \hedgehog{}'s "distribution semantics@@hedgehog", \hedgehog*{}'s "distribution semantics" \emph{are} "compositional", as witnessed by the "interpretation" $\Distrib$. Under $\Distrib$, "generator functions" are interpreted as distributions of functions into "shrink trees@@hedgehog". Note that, in \Cref{thm:compositionality}, $t$ is a "command term" and therefore represents an effectful computation.

\begin{theorem} \label[theorem]{thm:sound-transformations*}
  \Crefrange{eq:constant-folding}{eq:shrink-inlining} (\Cref{thm:sound-transformations,thm:unsound-transformations}) all hold for \hedgehog*{} terms under $\Distrib$, \emph{except} for \cref{eq:if-elim}.
\end{theorem}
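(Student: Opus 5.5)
Since $\Distrib$ is an "arrow" (the proposition preceding \Cref{lem:distrib-preservation}), the plan is to reduce every equation to the arrow laws of \Cref{fig:arrows}, plus a few extra laws specific to $\Distrib$, namely commutativity and idempotence properties that come from $\Dist$ being a commutative monad. The first step is to fix what each equation means in \hedgehog*{}. Both sides are read as "command terms" $\ctyped{Γ}{Δ}{\blank}{τ}$ that are wrapped in $\tlam{x}{τ₁}{\blank}$. A function application $\app{f}{x}$ is read as the effectful application of a normal term $f : \GFun{τ₁}{τ₂}$. Each equation then becomes an equation between morphisms in $\Distrib(\tysem*{Δ}, \tysem*{τ})$, obtained by unfolding \Cref{fig:semantics*}.

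The structural equations are \cref{eq:constant-folding,eq:eta-reduction,eq:inlining,eq:shrink-constant-folding,eq:shrink-eta-reduction,eq:shrink-inlining}. I would handle them uniformly, by a lemma stating that the command semantics satisfies the arrow-calculus $\beta$/$\eta$/associativity laws whenever $F$ is an arrow. This is exactly the soundness theorem of the arrow calculus with respect to arrows in \citet{arrow-calculus}, and I would re-prove it by induction on command typing derivations. The induction needs a weakening/substitution lemma, proved first, for how $\tmsem*[Γ \semi Δ]{\blank}$ interacts with environment rearrangements of the form $\arr(\ldots)$. The laws used are naturality of $\afirst$, $\afirst(\arr{f}) = \arr(f \fprod \id)$, and associativity. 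The shrink variants of these equations are then instances of the general ones. \Cref{eq:shrink-elim,eq:flip-zero,eq:flip-one} are direct calculations. For example, $\flip_{\Distrib}{0} = \map(\blank \mapsto \false)\uniform = \unit(\unit ∘ (\blank \mapsto \false)) = \arr_{\Distrib}(\blank\mapsto\false)$, and $\shrink^0_{\Distrib} = \arr_{\Distrib}\,\id$ composed into the tree monad.

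\Cref{eq:if-true,eq:if-false} follow from the semantics of $\tmatchkeyword$ together with $\aleft_{\Distrib}(\arr{f}) = \arr(f \bisum \id)$. The exclusion of \cref{eq:if-elim} is justified by \Cref{ex:coin-semantics*}.

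The genuinely probabilistic equations are \cref{eq:flip-id,eq:flip-comm,eq:flip-inv,eq:flip-conj}. Here I would compute with integrals, using the properties of $\integral{}{}$ and $\uniform$, after expanding $\acomp_{\Distrib}$ into $\bind$ over $\Dist$. For \cref{eq:flip-id}, the $\flip$ contributes $\bind(\blank\mapsto \nu)\,\flip_{\Distrib}{p} = \nu$, because $\flip_{\Distrib}{p}$ has total mass one. For \cref{eq:flip-comm}, I use Fubini-style commutativity of $\Dist$ (i.e.\ $\bind$ over $\uniform$ commutes, as in the $\projl,\projr$ equation). The crucial point is that the two sub-morphisms are drawn independently from $\Dist$ and only then combined deterministically via the tree $\bind$. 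For \cref{eq:flip-inv}, I use the measure-preserving map $σ \mapsto 1-σ$ on $\uniform$, up to a null set. For \cref{eq:flip-conj}, I use that $(σ₁,σ₂) \mapsto σ₁ < p ∧ σ₂ < q$ has the same push-forward on $\uniform ⊗ \uniform$ as $σ < pq$ on $\uniform$. In each case the tree/function part is a deterministic function of the sampled Booleans, so equality of the Boolean distributions transfers.

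I expect the main obstacle to be the structural lemma, specifically the bookkeeping of environments in the $\tletkeyword$ and $\tmatchkeyword$ clauses for \cref{eq:inlining}. The required reassociations and duplications via $\arr(ρ' \mapsto (ρ',ρ'))$ must be commuted past $\afirst$ of an effectful arrow. This is valid only because the arrow laws (including the $\arr(\id\fprod g)$ exchange law) hold for $\Distrib$. I would verify that law for $\Distrib$ carefully, since it fails for $\Sampling*$.
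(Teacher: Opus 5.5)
The paper gives no proof of this theorem. The only ingredient it supplies is that $\Distrib$ is an arrow, which is also your starting point. Your split (arrow-calculus laws for the structural equations, direct integral computations for the $\tflip$ equations) is therefore the natural route.

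\textbf{The gap: \cref{eq:if-true,eq:if-false}.} This step fails as written. The law $\aleft_{\Distrib}(\arr f) = \arr(f \bisum \id)$ does not touch the \emph{untaken} branch, whose denotation $\mu_2$ is an arbitrary command meaning, not an $\arr$. The $\tmatchkeyword$ clause composes $\aleft(\mu_1)$ and $\aleft(\mu_2)$ unconditionally, so $\Distrib$ still draws from $\mu_2$. Unfolding $\acomp_{\Distrib}$ for $\tif{\ttrue}{t_1}{t_2}$ gives $\bind(m \mapsto \map(m_2 \mapsto h_m)\,\mu_2)\,\mu_1$ with $h_m$ independent of $m_2$. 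That is the right-hand side scaled by the total mass of $\mu_2$.

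What you actually need is:
\begin{enumerate}
\item the choice law $\aleft f \acomp \arr\iota_1 = \arr\iota_1 \acomp f$ for the taken branch, and
\item \emph{discardability} $\aleft \mu \acomp \arr\iota_2 = \arr\iota_2$ for the other branch.
\end{enumerate}
Discardability is neither an arrow law nor an ArrowChoice law. In $\Distrib$ it holds exactly when $\mu$ has total mass one. So the extra property you need beyond commutativity is affineness, not idempotence. You do use it for \cref{eq:flip-id}, but only for $\flip_{\Distrib}$; here it must hold for every command.

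The fix is a lemma, by induction on command typing, that every command denotation has mass one. This holds because $\arr$ and $\shrink$ give Dirac measures, $\flip$ is a push-forward of $\uniform$, and $\acomp$, $\afirst$, $\aleft$ preserve mass one. The lemma also needs the values of generator-function variables in $\Gamma$ to be probability measures. That is automatic if $\Dist$ is the quasi-Borel probability monad. If instead $\Dist$ contains arbitrary measures, as the paper's $c \cdot \mu$ for $c \in \ennreal$ suggests, then \cref{eq:if-true} genuinely fails for open terms. A counterexample is an untaken branch that applies a free generator-function variable assigned a measure of mass $2$. So the statement itself needs the probability-measure reading.

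\textbf{Two smaller points.}
\begin{itemize}
\item Your structural lemma is not delivered by ``$\Distrib$ is an arrow'' in the paper's sense, because \Cref{fig:arrows} contains no $\aleft$ laws. The $\tmatchkeyword$ cases of weakening and substitution need $\aleft(f \acomp g) = \aleft f \acomp \aleft g$ and $\aleft(\arr f) = \arr(f \bisum \id)$. They also need the exchange law $\arr(\id \bisum g) \acomp \aleft f = \aleft f \acomp \arr(\id \bisum g)$, to move a substitution into the second branch past $\aleft$ of the first. These laws do hold for $\Distrib$, since it is the static arrow of $\Dist$ over the Kleisli arrow of $\Tree$, but they must be checked.
\item In \cref{eq:flip-comm}, commutativity of $\Dist$ only reorders the draws. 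The two resulting $\Tree$-Kleisli composites coincide only because the tree component of $\flip_{\Distrib}$ is a leaf, and $\Tree$ is not commutative: the same swap with $\tshrink$ in place of $\tflip$ fails. Your closing sentence gestures at this, but it should be an explicit step.
\end{itemize}
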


\Cref{thm:sound-transformations*} demonstrates that \hedgehog*{} is considerably more suited for program optimization than \hedgehog{}. The only transformation which is not valid (as demonstrated in \Cref{ex:coin-semantics*}) is \cref{eq:if-elim}. However, we can recover a special case.

\begin{proposition} \label[proposition]{prop:sound-transformations*}
  The following equivalence holds:
  \begin{equation*}
    \semeq*{\Distrib}{\tlet{x}{\tflip{p}}{\tif{x}{t}{t}}}{t}
    \label{eq:if-elim'}
    \tag{\begin{NoHyper}\ref{eq:if-elim}$'$\end{NoHyper}}
  \end{equation*}
\end{proposition}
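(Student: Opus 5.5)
We propose to prove \cref{eq:if-elim'} by computing both sides directly under $\Distrib$, rather than going through $\Sampling*$ and \Cref{thm:compositionality}. We read the statement as an equation between "command terms" $\ctyped{Γ}{Δ}{\blank}{τ}$ with $x ∉ \Dom{Δ}$ and $x$ not free in $t$. Here $\tflip{p}$ in command position abbreviates $\app{\tflip{p}}{\treturn{\tunit}}$. Fix $ρ ∈ \tysem*{Γ}$ and write $μ = \tmsem*[Γ \semi Δ]{t}^{\Distrib}{ρ} ∈ \Dist(\tysem*{Δ} \qfun \Tree{\tysem*{τ}})$.

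The first step is a weakening lemma, proved by induction on the "command typing relation": if $x$ is not free in $t$, then
\[
  \tmsem*[Γ \semi Δ \join x : τ']{t}^{\Distrib}{ρ} = \map(h ↦ ((ρ', c) ↦ \app{h}{ρ'})){μ}.
\]
Each case follows from the definitions in \Cref{fig:distribution-interpretation*} together with naturality of $\map$. The $\tmatchkeyword$ and $\tletkeyword$ cases need the analogous weakening for normal terms, which is standard. Applying this lemma to both branches of the $\tifkeyword$ shows that each branch denotes the push-forward of $μ$ along "ignore $c$".

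Next we unfold the left-hand side using \Cref{fig:semantics*}: $\arr(\mathrm{dup})$, then $\afirst$ of the flip composite, then the reshuffling $\arr$, then the two $\aleft$'s and swaps coming from the $\tmatchkeyword$. Under $\Distrib$, $\flip_{\Distrib}{p}$ composed with $\arr$ is the push-forward of $\uniform$ to $c = (σ < p)$; every value it produces is a constant single-node tree $\unit{c}$. Since $\bind$ on $\Tree$ against $\unit$ is trivial, the inner tree structure of the flip disappears. Expanding $\acomp_{\Distrib}$ as nested $\bind$s in $\Dist$, the left-hand side becomes the distribution of the random function
\[
  ρ' ↦ \text{if } c \text{ then } \app{h₁}{ρ'} \text{ else } \app{h₂}{ρ'}
\]
with $c ∼ \mathrm{Bernoulli}(p)$ and $h₁, h₂ ∼ μ$, all three drawn independently. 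Here $h₁$ comes from the $\aleft$ for $t₁$ and $h₂$ from the one for $t₂$. The $\aleft$/swap bookkeeping only routes $\inj_1$/$\inj_2$ correctly, using $\map{\inj_i}$ followed by the final $\elim{}$ implicit in the codomain.

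The final step uses integration against a test function $g$. By the $\bind$ and linearity rules of \Cref{fig:integral}, together with Fubini for $\Dist$ (which allows the $\bind$s over $h₁$ and $h₂$ to be reordered after the $\bind$ over $c$), we get
\[
  \integral{g}{\mathrm{LHS}} = p ⋅ \integral[h₁]{\integral[h₂]{\app{g}{h₁}}{μ}}{μ} + (1-p) ⋅ \integral[h₁]{\integral[h₂]{\app{g}{h₂}}{μ}}{μ}.
\]
Provided $μ$ is a probability measure, this equals $p ⋅ \integral{g}{μ} + (1-p) ⋅ \integral{g}{μ} = \integral{g}{μ}$. Since equal integrals determine the measure, the left-hand side equals $μ$, which is exactly $\tmsem*[Γ \semi Δ]{t}^{\Distrib}{ρ}$.

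We expect the main obstacle to be this last step, in two respects. First, we must justify exchanging the order of $\bind$s, i.e.\ commutativity of the QBS measure monad on the measures arising here. These are all push-forwards of $\uniform$, so we would invoke the Fubini-type property for $\uniform$ stated in the paper via $\projl$ and $\projr$. Second, we must check that $μ$ has total mass $1$, which we would establish by a routine induction showing that every $\tmsem*^{\Distrib}$ is a push-forward of $\uniform$.
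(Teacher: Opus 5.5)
The paper gives no proof of this proposition (the extended version only proves that $\mathfrak{D}$ is an arrow, the lemma that pushing $\lambda$ forward commutes with the arrow operations, and \Cref{thm:compositionality}), so there is no argument of the authors' to match. Your proof is correct on its own terms.

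The route the paper's machinery suggests is different. You would compute the left-hand side in $\mathfrak{C}^{\rightarrow}$, where the two branches of the \texttt{match} receive different sub-samples of $\sigma$ via $\pi_l,\pi_r$. You would then push $\lambda$ forward and use the $\pi_l/\pi_r$ identity to collapse the mixture. That forces you to track the sample-splitting through all the \texttt{let}/\texttt{match} plumbing, and \Cref{thm:compositionality} only links the two semantics for closed terms of type $\mathtt{1} \rightsquigarrow \mathtt{Bool}$.

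Computing directly in $\mathfrak{D}$ avoids all of this, because $\circ_{\mathfrak{D}}$ already draws its two components independently. It also covers open commands and arbitrary result types, subject to the mass caveat below. And it isolates exactly where the hypothesis on $\mathtt{flip}\,p$ is used: the scrutinee's trees are single nodes, so the tree-$\mathrm{bind}$ grafts nothing, and the result is literally the mixture $p\cdot\mu + (1-p)\cdot\mu$.

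Two comments on the step you flag as the main obstacle.

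\emph{No exchange of $\mathrm{bind}$s is needed.} The $\pi_l/\pi_r$ identity would not by itself supply one anyway. Whatever nesting the compositions produce, after splitting on $c$ each summand integrates a function of only one of $h_1, h_2$. The other integral therefore contributes only the total mass of $\mu$, so the $\mathrm{bind}$ rule and linearity suffice.

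\emph{Total mass.} For the quasi-Borel probability monad the paper cites, every element of $\mathrm{Dist}$ has mass $1$, so there is nothing to prove. If $\mathrm{Dist}$ were a general measure monad, your induction would not go through for arbitrary $\rho$. Take $\Gamma = f : \mathtt{1} \rightsquigarrow \mathtt{Bool}$, $t = f\,(\mathtt{return}\,())$, and $\rho_f$ of mass $2$. Then the left-hand side has mass $4$ while $t$ has mass $2$, so the open-term version of the statement is false. In that reading you would have to restrict to closed terms and establish mass $1$ with a unary logical predicate (generator-function values are probability measures, hereditarily through function types), rather than a plain induction on typing derivations.
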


Intuitively, \Cref{prop:sound-transformations*} demonstrates that \cref{eq:if-elim} holds whenever the condition term does not shrink.

\subsection{From \hedgehog*{} to \hedgehog{}}

\begin{definition} \label[definition]{def:translation}
  The ""translation"" from \hedgehog*{} "types", "environments", and "terms" to \hedgehog{} "types@@hedgehog", "environments@@hedgehog", and "terms@@hedgehog" is given by the functions $\tytrans{\blank} : \Type* → \Type$, $\envtrans{\blank} : \Env* → \Env$, and $\tmtrans{\blank} : \Term* → \Term$ defined in \Cref{fig:translation}.
\end{definition}

\begin{figure}[t]
  \textbf{Types}
  \begin{mathpar}
    \tytrans{\Unit} = \Unit

    \tytrans{\Empty} = \Empty

    \tytrans{\Sum{τ₁}{τ₂}} = \Sum{\tytrans{τ₁}}{\tytrans{τ₂}}

    \tytrans{\Prod{τ₁}{τ₂}} = \Prod{\tytrans{τ₁}}{\tytrans{τ₂}}

    \tytrans{\Fun{τ₁}{τ₂}} = \Fun{\tytrans{τ₁}}{\tytrans{τ₂}}

    \tytrans{\GFun{τ₁}{τ₂}} = \Fun{\tytrans{τ₁}}{\Gen{\tytrans{τ₂}}}
  \end{mathpar}
  \textbf{Environments}
  \begin{mathpar}
    \envtrans{x₁ : τ₁, …, xₙ : τₙ} = x₁ : \tytrans{τ₁}, …, xₙ : \tytrans{τₙ}
  \end{mathpar}
  \textbf{Terms}
  \begin{mathpar}
    \tmtrans{x}
    = x

    \tmtrans{\tabsurd{t}}
    = \tabsurd{\tmtrans{t}}

    \tmtrans{\tunit}
    = \tunit

    \tmtrans{\tinl{t}}
    = \tinl{\tmtrans{t}}

    \tmtrans{\tinr{t}}
    = \tinr{\tmtrans{t}}

    \begin{array}{@{}l@{}}
      \tmtrans{\tmatch{t₁}{x₁}{t₂}{x₂}{t₃}} \\
      \quad = \tmatch*{\tmtrans{t₁}}{x₁}{\tlet{y}{\tmtrans{t₂}}{\treturn{y}}}{x₂}{\tlet{y}{\treturn{\tunit}}{\tmtrans{t₃}}}
    \end{array}

    \tmtrans{\tpair{t₁, t₂}}
    = \tpair{\tmtrans{t₁}, \tmtrans{t₂}}

    \tmtrans{\tfst{t}}
    = \tfst{\tmtrans{t}}

    \tmtrans{\tsnd{t}}
    = \tsnd{\tmtrans{t}}

    \tmtrans{\tlam{x}{τ}{t}}
    = \tlam{x}{\tytrans{τ}}{\tmtrans{t}}

    \tmtrans{\app{t₁}{t₂}}
    = \app{\tmtrans{t₁}}{\tmtrans{t₂}}

    \tmtrans{\treturn{t}}
    = \treturn{\tmtrans{t}}

    \tmtrans{\tlet{x}{t₁}{t₂}}
    = \tlet{x}{\tmtrans{t₁}}{\tmtrans{t₂}}

    \tmtrans{\tflip{p}}
    = \tflip{p}

    \tmtrans{\tshrink}
    = \tshrink
  \end{mathpar}
  \vspace{-0.2in}
  \Description{Translation from Hedgehog-Arrow to Hedgehog}
  \caption{Translation from \hedgehog*{} to \hedgehog{}.}
  \label{fig:translation}
  \vspace{-0.15in}
\end{figure}

\Cref{def:translation} defines our "translation" from \hedgehog*{} "terms" to \hedgehog{} "terms@@hedgehog". The only changes performed by $\tmtrans{\blank}$ are
\begin{enumerate*}
  \item replacing every instance of $\GFun{τ₁}{τ₂}$ with $\Fun{τ₁}{\Gen{τ₂}}$, and
  \item adding additional $\treturn$ computations to $\tmatchkeyword$ "terms".
\end{enumerate*}

\begin{proposition} \label[proposition]{prop:translation}
  The following statements hold:
  \begin{enumerate}
    \item Suppose $\typed*[Γ]{t}{τ}$. Then $\typed[\envtrans{Γ}]{\tmtrans{t}}{\tytrans{τ}}$.
    \item Suppose $\ctyped{Γ}{Δ}{t}{τ}$. Then $\typed[\envtrans{Γ}, \envtrans{Δ}]{\tmtrans{t}}{\tytrans{τ}}$.
  \end{enumerate}
\end{proposition}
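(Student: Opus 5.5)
We prove both statements simultaneously by mutual induction on the derivations of $\typed*[Γ]{t}{τ}$ and $\ctyped{Γ}{Δ}{t}{τ}$. The induction must be mutual because the rule for anonymous generator functions has a command premise, while the command rules have normal-term premises (the return, application and case-analysis rules). For statement (2) we write $\envtrans{Γ}, \envtrans{Δ}$ for the joined \hedgehog{} environment. This is well formed because $\Dom{Γ}$ and $\Dom{Δ}$ are disjoint: the join $Γ \join Δ$ occurs in the premises, and all binder side conditions ($x ∉ \Dom{Δ}$, etc.) can be strengthened to freshness for both environments by $α$-equivalence. Throughout, we use that $\envtrans{\blank}$ commutes with joining and with singleton extension, so $\envtrans{Γ \join Δ} = \envtrans{Γ}, \envtrans{Δ}$ and $\envtrans{Δ \join x : τ} = \envtrans{Δ}, x : \tytrans{τ}$.

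\emph{Normal terms.} All rules inherited unchanged from \Cref{def:typing} (variables, $\tabsurd$, injections, $\tmatchkeyword$ at normal type, tuples, projections, ordinary $\tlam{x}{τ}{t}$ at $\Fun{τ₁}{τ₂}$, application) are handled by applying the induction hypothesis to the premises and reapplying the same \hedgehog{} rule. This works because $\tmtrans{\blank}$ and $\tytrans{\blank}$ are homomorphic on these constructs. The two exceptions are the normal-type $\tmatchkeyword$, which the translation also rewrites, and the new rules. For a normal-type $\tmatchkeyword$, the extra $\tletkeyword$/$\treturn$ wrappers would only typecheck at a $\Gen$ type. So either the translation is read as applying the wrapping only to command-level matches, or we note that this is the point requiring care (see below). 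For the new rules:
\begin{itemize}
  \item For $\tflip{p} : \GFun{\Unit}{\Bool}$, we need $\tflip{p} : \Fun{\Unit}{\Gen{\Bool}}$. Since \hedgehog{} types $\tflip{p} : \Gen{\Bool}$, the statement as literally given needs the translation to insert a dummy abstraction $\tlam{u}{\Unit}{\tflip{p}}$. Up to that reading, which I would adopt by treating $\tmtrans{\tflip{p}}$ as $\tlam{u}{\Unit}{\tflip{p}}$, the case is immediate.
  \item The case of $\tshrink$ is immediate: $\tytrans{\GFun{τ^{n+1}}{τ}} = \Fun{\tytrans{τ}^{n+1}}{\Gen{\tytrans{τ}}}$ matches the \hedgehog{} rule exactly.
  \item For a generator abstraction, the induction hypothesis (2) gives $\typed[\envtrans{Γ}, x : \tytrans{τ₁}]{\tmtrans{t}}{\Gen{\tytrans{τ₂}}}$, and the \hedgehog{} $\tlam{}{}{}$ rule finishes.
\end{itemize}

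\emph{Command terms.} We strengthen statement (2) to conclude the type $\Gen{\tytrans{τ}}$, which is what is actually needed and intended.
\begin{itemize}
  \item \emph{Return.} Apply the induction hypothesis (1) at $Γ \join Δ$, then the $\treturn$ rule.
  \item \emph{Let.} Apply the induction hypothesis (2) to both premises and use the $\tletkeyword$ rule.
  \item \emph{Application.} The induction hypothesis (1) types $\tmtrans{t₁}$ in $\envtrans{Γ}$. We weaken it to $\envtrans{Γ}, \envtrans{Δ}$, which requires a standard weakening lemma for \hedgehog{} typing that I prove separately by induction. This gives type $\Fun{\tytrans{τ₁}}{\Gen{\tytrans{τ₂}}}$, while $\tmtrans{t₂}$ has type $\Gen{\tytrans{τ₁}}$. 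This is a mismatch: $\app{\tmtrans{t₁}}{\tmtrans{t₂}}$ is ill typed unless we read it as the bind $\tlet{y}{\tmtrans{t₂}}{\app{\tmtrans{t₁}}{y}}$. I would adopt that reading, since it agrees with the semantics $\tmsem*{t₁} \acomp \tmsem*{t₂}$, and then typing follows from the $\tletkeyword$ and application rules.
  \item \emph{Match.} The scrutinee is typed by the induction hypothesis (1). Each branch is typed by the induction hypothesis (2) in the extended environment, and then wrapped by a $\tletkeyword$ with $\treturn$ (using a fresh $y$, again weakening). Both arms then have type $\Gen{\tytrans{τ₃}}$, and the \hedgehog{} $\tmatchkeyword$ rule applies.
\end{itemize}

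The main obstacle is not the induction itself but reconciling the translation clauses with the types. The cases for $\tflip$ and for command-level application only go through once the intended readings (eta-expanded $\tflip$, and application translated as a bind) are made explicit. The weakening and freshness bookkeeping for the split environments $Γ; Δ$ is routine by comparison.
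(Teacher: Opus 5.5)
Your proof is correct. The paper states this proposition without proof, and your mutual induction on the normal and command typing derivations is the argument it needs.

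Your diagnosis is also right: read literally, the statement and the clauses of \Cref{fig:translation} do not fit the typing rules.
\begin{itemize}
  \item Clause (2) already fails for $\ctyped{\sing}{\sing}{\treturn{\tunit}}{\Unit}$, because $\treturn{\tunit}$ only has type $\Gen{\Unit}$.
  \item The literal translations of $\tflip{p}$ and of command-level application are ill-typed.
  \item The wrapped $\tmatchkeyword$ clause is ill-typed for \emph{every} normal-level match, since no translated type is a $\Gen$ type. This already affects the test $x \teq y$ in the running coin example.
\end{itemize}
Your repairs are the right ones: $\Gen{\tytrans{\tau}}$ in (2), eta-expanding $\tflip{p}$, translating command application as a bind, and wrapping only command-level matches. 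With them every case goes through as you describe.

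Three small remarks.
\begin{itemize}
  \item The change to (2) is a correction, not a strengthening, since the original (2) is false. Clause (1) needs the corrected form in the generator-abstraction case.
  \item State explicitly that the repaired $\tmtrans{\blank}$ is defined either on typing derivations or on the syntactic split between normal and command terms. No raw term is both, because an application or match that is both would force its argument or branches to be both. It can no longer be the uniform raw-term function of the figure, since the application and $\tmatchkeyword$ clauses must now tell the two uses apart.
  \item For command application there is an alternative repair: adopt the arrow-calculus rule, whose argument is a normal term typed in $\Gamma, \Delta$. This is what the paper's examples actually use (e.g.\ $\app{f}{\tunit}$). Under that rule the literal clause $\app{\tmtrans{t_1}}{\tmtrans{t_2}}$ typechecks after weakening. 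Your bind reading is the one consistent with the rule and the composition semantics as actually stated.
\end{itemize}
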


\Cref{prop:translation} shows that our "translation" preserves "typing". As shown in \Cref{ex:coin-semantics*}, it is not the case that our "translation" preserves "sampling semantics". However, "distribution semantics" is preserved.

\begin{lemma} \label[lemma]{lem:substitution}
  Suppose $\semeq*{\Distrib}[\sing \semi \sing]{t₁}{t₂}[\Bool]$. Then $\dist{\tmtrans{t₁}} = \dist{\tmtrans{t₂}}$.
\end{lemma}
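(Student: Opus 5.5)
The plan is to prove a stronger, compositional statement by induction on typing derivations, and then specialise it to $\Gamma = \Delta = \sing$ and $\tau = \Bool$. The basic observation is this. For a \hedgehog*{} command term $\ctyped{\Gamma}{\Delta}{t}{\tau}$, the translated term $\tmtrans{t}$ denotes, under $\Sampling$, a map $\rho \join \rho' \mapsto (\sigma \mapsto \dots)$. Currying in the sample gives a \emph{random function} $\sigma \mapsto (\rho' \mapsto \tmsem[\envtrans{\Gamma}, \envtrans{\Delta}]{\tmtrans{t}}^{\Sampling}(\rho \join \rho')\,\sigma)$ into $\tysem{\Delta} \qfun \Tree{\tysem{\tau}}$. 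Its law under $\uniform$ should be exactly $\tmsem*[\Gamma \semi \Delta]{t}^{\Distrib}\rho$.

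We must track the law of the whole random function, and not only the law of its values pointwise in $\rho'$. The reason is that $\bind_{\Sampling}$ feeds the same sample $\projr{\sigma}$ to the continuation at every label of the first tree. Hence the law of $\tlet{x}{t_1}{t_2}$ depends on the joint behaviour of $t_2$ across inputs, and this is precisely what $\acomp_{\Distrib}$ records. At $\Delta = \sing$, the domain $\one \qfun \Tree{\B}$ is isomorphic to $\Tree{\B}$. The strengthened statement therefore gives $\dist{\tmtrans{t}} = \map(f \mapsto f\,\sing)(\tmsem*[\sing\semi\sing]{t}^{\Distrib})$, and the hypothesis $\semeq*{\Distrib}[\sing \semi \sing]{t₁}{t₂}[\Bool]$ yields the lemma.

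Since command terms mention normal terms of generator function type, which may be variables of $\Gamma$ or lambdas, I would set up a logical relation $R_\tau$ between $\tysem{\tytrans{\tau}}^{\Sampling}$ and $\tysem*{\tau}^{\Distrib}$. It is equality-like at base, sum and product types, and pointwise at function types. At $\GFun{\tau_1}{\tau_2}$ we put $g \mathrel{R} \mu$ iff $\map(\sigma \mapsto a \mapsto g\,a\,\sigma)\,\uniform = \mu$, with $R$ on the arguments handled accordingly. The mutual induction then states:
\begin{itemize}
\item for normal terms, related environments give related values;
\item for command terms, related environments give the law identity above.
\end{itemize}
The normal-term cases are routine. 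The lambda case of $\GFun$ is exactly the command hypothesis. The $\tflip$ and $\tshrink$ cases compare $\flip_{\Sampling}$ and $\shrink_{\Sampling}$ with $\flip_{\Distrib}$ and $\shrink_{\Distrib}$, in the style of \eqref{eq:dist-flip} and \eqref{eq:dist-shrink}.

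For the command cases, $\treturn$ is immediate, since $\unit_{\Sampling}$ ignores the sample, giving $\arr_{\Distrib}$. For $\tlet$ and application, $\bind_{\Sampling}$ splits $\sigma$ into $\projl\sigma$ and $\projr\sigma$. By the defining property of $\projl$ and $\projr$, the pair of curried random functions is distributed as an independent product. Pushing forward then gives the formula for $\acomp_{\Distrib}$, mirroring the computation proving \eqref{eq:dist-comp}, together with the trivial $\arr$ and $\afirst$ plumbing as in \eqref{eq:dist-arr} and \eqref{eq:dist-first}. For application, the function part depends only on $\Gamma$, so the $R$-hypothesis on $\tmsem{\tmtrans{t_1}}$ applies with a sample independent of the argument's sample; this is where the typing restriction is used.

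The main obstacle is the $\tmatchkeyword$ case. In \hedgehog{} both branches see the same sample, whereas $\Distrib$ composes two $\aleft$'s, so the branches draw independent random functions. This is exactly what the translation in \Cref{fig:translation} repairs. The left branch becomes $\tlet{y}{\tmtrans{t_2}}{\treturn{y}}$ and so reads only $\projl\sigma$, while the right branch becomes $\tlet{y}{\treturn{\tunit}}{\tmtrans{t_3}}$ and reads only $\projr\sigma$. I would compute both sampling semantics explicitly and show that the random function on $\tysem{\Delta} \qfun \Tree{\tysem{\tau_3}}$ is $\rho' \mapsto [\,F_1(\projl\sigma), F_2(\projr\sigma)\,](\distrib(\rho', v(\rho')))$, up to the harmless unit wrappers. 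Independence of $\projl\sigma$ and $\projr\sigma$ then matches the law of the $\arr$, $\aleft$, swap and $\aleft$ pipeline under $\Distrib$. One more point needs checking here: the extra $\bind$ with $\unit$ introduces only sample splits that do not change laws, since $\map$ along a measure-preserving projection preserves $\uniform$.
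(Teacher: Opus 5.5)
The paper states this lemma without proof. The surrounding text (``distribution semantics is preserved'') together with \Cref{thm:compositionality} suggests a two-step route through \hedgehog*{}'s own sampling interpretation, call it $\mathfrak{C}^{*}$:
\begin{enumerate}
\item show that $\langle t\rangle_{\mathrm{tm}}$ under $\mathfrak{C}$ and $t$ under $\mathfrak{C}^{*}$ have the same law (their sampling semantics differ);
\item push forward along $\lambda$, using that $\mathrm{map}(\mathrm{curry}\,{-})\,\lambda$ commutes with $\mathrm{arr}$, $\circ$, $\mathrm{first}$, $\mathrm{left}$, $\mathrm{flip}$ and $\mathrm{shrink}$.
\end{enumerate}
You instead run a single logical relation directly from $\mathfrak{C}$ on translated terms to $\mathfrak{D}$. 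This never touches $\mathfrak{C}^{*}$, which is not an arrow and whose nested $\mathrm{arr}/\mathrm{first}/\circ$ chains split samples quite differently from \hedgehog{}'s binds. It also makes explicit two points the paper's development leaves implicit. First, the invariant must be the joint law of the whole random function over $\Delta$, because a continuation's sample is shared by all labels. Second, some relation at higher types is unavoidable, because $\Gamma$ can hold generator-function values. The two-step route in exchange isolates the translation step and factors through \Cref{thm:compositionality}, which is of independent interest. Your $\mathtt{match}$ analysis is right: the inserted binds collapse by the tree-monad unit laws, leaving $F_1(\pi_l\sigma)$ and $F_2(\pi_r\sigma)$. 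So is your account of where the restriction on effectful application enters (\Cref{rem:restrictions}). Do state explicitly two readings of the translation, because the literal clauses of \Cref{fig:translation} are ill-typed in \hedgehog{} and your application case needs that bind:
\begin{itemize}
\item a command application is read as $\mathtt{let}\ x \leftarrow \langle t_2\rangle_{\mathrm{tm}}\ \mathtt{in}\ \langle t_1\rangle_{\mathrm{tm}}\,x$;
\item $\mathtt{flip}\,p$ is read as $\mathtt{fun}\ u : \mathtt{1}.\ \mathtt{flip}\,p$.
\end{itemize}

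The step that needs repair is the relation at $\tau_1 \rightsquigarrow \tau_2$, and with it your command-level invariant. The condition $\mathrm{map}(\sigma \mapsto a \mapsto g\,a\,\sigma)\,\lambda = \mu$ only typechecks when $\tau_1$ and $\tau_2$ contain no $\rightsquigarrow$; for commands, the same must hold of $\Delta$ and the result type. Otherwise the two sides are distributions on different spaces: arguments and tree labels are random functions on one side and distributions on the other. This is not a corner case. Take the closed $\mathtt{Bool}$ command $\mathtt{let}\ m \leftarrow \mathtt{shrink}\,(\mathtt{return}\,(c_1, c_2))\ \mathtt{in}\ \mathtt{return}\ \mathtt{true}$, with $c_1, c_2$ closed terms of type $\mathtt{1} \rightsquigarrow \mathtt{Bool}$. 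It already needs the hypothesis for a command whose result type is a generator-function type; so does returning a lambda that closes over $\Delta$. So ``$R$ on the arguments handled accordingly'' must also cover the outputs, and equality of laws must become a lifting.

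A convenient lifting: $g \mathrel{R} \mu$ iff there is a quasi-measurable $g'$ from samples into the $\mathfrak{D}$-side space of functions $\tau_1 \to \mathrm{Tree}\,\tau_2$ such that $\mathrm{map}\,g'\,\lambda = \mu$ and $(a \mapsto g\,a\,\sigma) \mathrel{(R \Rightarrow \mathrm{Tree}\,R)} g'(\sigma)$ for every $\sigma$. Use the same definition for commands. Each case then builds $g'$ from the inductive realizations exactly as the $\mathfrak{C}$ side is built:
\begin{itemize}
\item for let and application, $\sigma \mapsto \mathrm{bind}(g_1'(\pi_r\sigma)) \circ g_2'(\pi_l\sigma)$;
\item for $\mathtt{match}$, $\sigma \mapsto (\rho' \mapsto [F_1'(\pi_l\sigma), F_2'(\pi_r\sigma)](\mathrm{distrib}(\rho', v(\rho'))))$, with $F_i'$ the branch realizations.
\end{itemize}
The $\pi_l/\pi_r$ property computes each law as the corresponding $\mathfrak{D}$ operation. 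At $\mathtt{Bool}$ with empty $\Delta$ the relation is equality, which gives exactly the law identity you specialise.
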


\subsection{Summary}

We have introduced \hedgehog*{}, a restricted variant of \hedgehog{} with a "compositional" "distribution semantics" (\Cref{thm:compositionality}). \hedgehog*{} achieves this by forcing different branches in a case analysis ($\tmatchkeyword$) to produce statistically independent values. This cannot be a simple semantic change due to the presence of higher-order computation, so \hedgehog*{} also places restrictions on (effectful) function application in the style of the arrow calculus \citep{arrow-calculus}. The resulting language admits nearly all the program transformations discussed in \Cref{thm:sound-transformations,thm:unsound-transformations}, except for one which holds in a restricted context. We provide a simple translation from \hedgehog*{} back to \hedgehog{}, which indicates that any \hedgehog{} program which corresponds to some \hedgehog*{} program is equally amenable to optimization.

\end{scope}

%% file: section/5-evaluation.tex
\section{Evaluation} \label{sec:evaluation}

In \Cref{sec:solution}, we described a modification of the \hedgehog{} language, \hedgehog*{}, which has a "compositional@@hedgehog*" "distribution semantics@@hedgehog*" which supports a much wider variety of program transformations than \hedgehog{} (\Cref{thm:sound-transformations*}). We now evaluate the effects of \hedgehog*{}'s design with respect to other practical considerations. Given that the \hedgehog*{} language obtains a "compositional@@hedgehog*" semantics by placing syntactic restrictions on the form of \hedgehog{} programs, the goal of our evaluation is to answer the following research questions:
\begin{itemize}
  \item[\textbf{RQ1:}] To what extent can a \hedgehog*{}-based language express existing generators?
  \item[\textbf{RQ2:}] What is the impact of \hedgehog*{}'s design on program size?
\end{itemize}

\subsection{Setup}

We have implemented \hedgehog*{} as a companion library to \hedgehog{}. The library, \texttt{hedgehog-arrow}, includes a type \texttt{Gen\,a\,b} of effectful functions from \texttt{a} to \texttt{b}. The Glasgow \haskell{} Compiler (GHC) includes syntactic extensions for arrows \citep{arrow-notation}, which allows programs to be written in a syntax similar to what is presented in \Cref{sec:solution}. For example, we translate \Cref{ex:coin-syntax*} to \haskell{} in \Cref{fig:coin-syntax-haskell}. Anonymous generator functions are constructed using the \texttt{proc} keyword, sequential composition is denoted using \texttt{do} notation, and generator function application is explicitly denoted using the operator ``$⤙$''.
\begin{figure}[t]
  \includegraphics[scale=0.8]{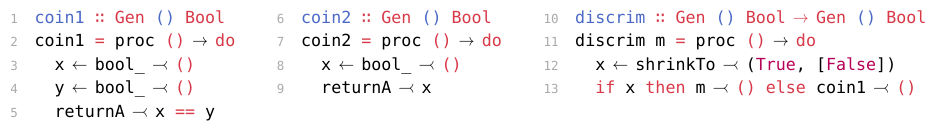}
  \vspace{-0.1in}
  \caption{\Cref{ex:coin-syntax*} translated to \haskell{}.}
  \Description{\Cref{ex:coin-syntax*} translated to \haskell{}}
  \label{fig:coin-syntax-haskell}
  \vspace{-0.1in}
\end{figure}

In addition to the two basic operators $\tflip$ and $\tshrink$ presented in \Cref{sec:analysis,sec:solution}, \texttt{hedgehog-arrow} also includes the following primitives:
\begin{enumerate}
  \item Operations \texttt{getSize $\dblcolon$ Gen a Size} and \texttt{resize $\dblcolon$ Gen a b $\rightarrow$ Gen (Size, a) b} for controlling the size of generated values.
  \item An operation \texttt{freeze $\dblcolon$ Gen a b $\rightarrow$ Gen a (Tree b)} which runs its argument, capturing all of its shrinking behaviour and saving it to a tree.
\end{enumerate}
The first feature comes from \quickcheck{} and solves certain termination issues \citep{quickcheck}. The second feature is used to modify or eliminate existing shrinking behaviour. For example, the function \texttt{prune}, which eliminates all shrinking behaviour from its argument, is implemented using \texttt{freeze} in \Cref{fig:prune}. Intuitively, \texttt{prune} behaves exactly like \texttt{freeze} but discards everything except the root value in the resulting tree (line 4).
\begin{figure}[t]
  \includegraphics[scale=0.8]{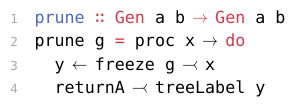}
  \vspace{-0.1in}
  \caption{The \texttt{prune} Function.}
  \Description{The \texttt{prune} Function}
  \label{fig:prune}
  \vspace{-0.1in}
\end{figure}

We perform our experiments on \hedgehog{}'s \texttt{Hedgehog.Gen} module. The \texttt{Hedgehog.Gen} module contains a large collection of common generators and generator combinators. We provide analogues for each of these generators and combinators in \texttt{hedgehog-arrow}. We also perform experiments on several modules from \hedgehog{}'s example repository\footnote{\href{https://github.com/hedgehogqa/haskell-hedgehog/tree/4e9045fafa9b50efdc68326cd9f643c0d8383a18}{Commit 4e9045fafa9b50efdc68326cd9f643c0d8383a18}}. We rewrote each of these modules to use \texttt{hedgehog-arrow}. Our full module set is given in \Cref{tab:examples}.

\begin{table}[t]
  \newcommand{\sizes}[2]{#1/#2 (\nmcEvaluate*{(#1-#2)/#2 * 100}[1]\%)}
  {\small
    \begin{tabular}{lll}
      \hline
      Module
       & \# Expressible
       & Module Size     \\
      \hline
      \texttt{Hedgehog.Gen}
       & 2/4/70 (76)
       & N/A             \\

      \texttt{Test.Example.Basic}
       & 0/5/13 (18)
       & \sizes{417}{393} \\

      \texttt{Test.Example.Confidence}
       & 0/0/1 (1)
       & \sizes{35}{37}           \\

      \texttt{Test.Example.Coverage}
       & 0/0/7 (7)
       & \sizes{251}{261}         \\

      \texttt{Test.Example.Exception}
       & 0/0/3 (3)
       & \sizes{82}{85}           \\

      \texttt{Test.Example.QuasiShow}
       & 0/0/2 (2)
       & \sizes{63}{60}           \\

      \texttt{Test.Example.Resource}
       & 0/0/3 (3)
       & \sizes{398}{401}         \\

      \texttt{Test.Example.RoundTrip}
       & 0/3/0 (6)
       & \sizes{225}{216}         \\

      \texttt{Test.Example.STLC}
       & 0/15/1 (16)
       & \sizes{425}{402}         \\
      \hline
    \end{tabular}}
  \captionsetup{justification=centering}
  \caption{
    Numbers of expressible generators per module and module sizes.
    Generator counts are reported as number of (not expressible/semi-expressible/fully-expressible (total)) generators.
    Module sizes are reported as (\hedgehog*{}/\hedgehog{}) module AST nodes.}
  \label{tab:examples}
  \vspace{-0.2in}
\end{table}

The source code for our implementation, examples, and experiments is available at \url{https://github.com/hedgehog-arr/hedgehog-arr}.

\subsection{Results}

\paragraph{\textbf{RQ1:} To what extent can a \hedgehog*{}-based language express existing generators?} \hedgehog*{} is intuitively a restricted version of \hedgehog{}, hence we investigate the impact this restriction has on expressivity in practice.

\Cref{tab:examples} lists, per module, how many generators and combinators within those modules
\begin{enumerate}
  \item cannot be expressed in \hedgehog*{},
  \item are \emph{semi-expressible}, i.e., can be expressed using non-expressible generators, and
  \item are expressible in \hedgehog*{}.
\end{enumerate}
We observed only two combinators in the first category: \texttt{choice} and \texttt{frequency}. To illustrate why these combinators are not expressible, we present a simplified implementation of \texttt{choice} in \Cref{fig:choice}. Intuitively, \texttt{choice xs} selects a random generator \texttt{x} from \texttt{xs} and executes it. During shrinking, \texttt{choice xs} may behave like any \texttt{x'} in \texttt{xs} which occurs earlier than \texttt{x}. Operationally, \texttt{choice xs} works by selecting a random integer \texttt{n} (line 4) and then executing the \texttt{n}th element of \texttt{xs} (line 5). The reason \texttt{choice} cannot be implemented is because it always passes the same sample to its selection \texttt{xs !! n}, which introduces a statistical dependency, which is explicitly disallowed by \hedgehog*{} (see \Cref{rem:restrictions}). The \texttt{frequency} combinator exhibits the same issue. We provide alternative implementations of \texttt{choice} and \texttt{frequency} in \texttt{hedgehog-arrow} which guarantee statistical independence during shrinking. Note that combinators are not generators themselves, but functions used to construct other generators: to our knowledge, no generator deliberately depends on the original behaviour of \texttt{choice} and \texttt{frequency}.
\begin{figure}[t]
  \includegraphics[scale=0.8]{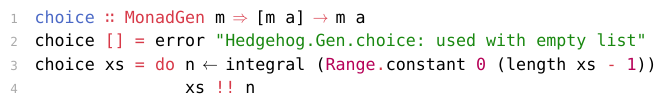}
  \vspace{-0.1in}
  \Description{The implementation of \texttt{choice} in \hedgehog{}}
  \caption{The implementation of \texttt{choice} in \hedgehog{}.}
  \label{fig:choice}
  \vspace{-0.1in}
\end{figure}

\noindent
\fbox{\parbox{\linewidth - 2\fboxsep}{
    \textbf{Summary:} \texttt{hedgehog-arrow} expresses almost all generators in our evaluation corpus, including those used in practical examples.}}

\paragraph{\textbf{RQ2:} What is the impact of \hedgehog*{}'s design on program size?} The restrictions \hedgehog*{} places on "command terms@@hedgehog*" may require users to employ verbose workarounds. We determine whether this is the case by measuring the impact of \hedgehog*{}'s design on program size.

\Cref{tab:examples} lists the size (in AST nodes) of both the \hedgehog{} and \texttt{hedgehog-arrow} version of each module. We do not compare the \texttt{Hedgehog.Gen} module because the \texttt{Hedgehog.Gen} primarily consists of re-exported internal code.

Our results show that using \texttt{hedgehog-arrow} has a negligible impact on program size.

\noindent
\fbox{\parbox{\linewidth - 2\fboxsep}{
    \textbf{Summary:} \texttt{hedgehog-arrow} programs are not significantly larger than their \hedgehog{} counterparts.}}

\subsection{Threats to Validity}

Our example set is entirely written in \haskell{}, which is a lazy and purely functional language. Our results may not generalize to strict, imperative languages. Our example set, while fairly large, does not include any particularly large examples (the largest being 122 lines), so our conclusions may not hold for larger programs.

%% file: section/6-related-work.tex
\section{Related Work} \label{sec:related-work}

\paragraph{Property-Based Testing and Shrinking} Property-based testing was popularized by the \quickcheck{} \citep{quickcheck} framework for Haskell, which features a probabilistic eDSL for specifying generators. In \quickcheck{}, a shrinker for a type a is specified separately as a function \texttt{shrinks $∷$ a $→$ [a]} from values to their immediate shrinks. In contrast, \hedgehog*{} employs \emph{generator-based shrinking}, where shrinkers are derived from annotated generator specifications.

Generator-based shrinking approaches were introduced by QuviQ \quickcheck{} \citep{quviq-quickcheck} for Erlang. QuviQ \quickcheck{} implements generator-based shrinking using  \emph{integrated shrinking}, where generators are represented as tree-valued random variables. Integrated shrinking is used by \hedgehog{} \citep{hedgehog} (and hence \hedgehog*{})  and many property-based testing frameworks in other languages such as RapidCheck~\citep{rapidcheck} for C++.

The \hypothesis{} \citep{hypothesis} library for Python implements a different approach to generator-based shrinking called \emph{internal shrinking}, where generators are represented as random variables drawing from a space of infinite bit sequences, and shrinks are obtained by re-executing generators on modified bit sequences obtained via a set of built-in heuristics. While hypothesis often produces better shrinks than approaches based on integrated shrinking for programs with no user annotations, hypothesis provides no mechanism for user control over shrinking. This is less than ideal when the built-in heuristics are insufficient. The \falsify{} library for Haskell \citep{falsify} rectifies this downside while retaining the advantages of internal shrinking by allowing users to specify custom shrinking behaviour. \Citet{falsify} identify the primary advantage of internal shrinking as \emph{hierarchical shrinking} behaviour, where shrinking is allowed to backtrack. The \hedgehog{} language actually exhibits hierarchical shrinking for generators implemented using \texttt{Applicative} functor operations \citep{applicative-functors}. Applicative functors and arrows are closely related \citep{applicative-functors}. Integrating hierarchical shrinking into \hedgehog*{} is left as future work.

Reflective generators \citep{reflecting} extend normal generators with \emph{partial monadic profunctor} operations. \Citet{reflecting} show how this extension lends itself to a number of features, including the ability to run a generator ``backwards'' to obtain a random seed from a user-provided test case. The reflective generator framework does not mandate a particular interpretation, but is incompatible with \hedgehog*{}'s semantics since \hedgehog*{}'s explicitly disallows the general (monadic) form of sequential composition that reflective generators support.

The formal semantics of generator-based shrinking languages have not been explored in detail prior to this work and hence these languages lack sound, formally-specified, compositional rules for reasoning about combined generator and shrinker behaviour.

  \paragraph{Optimizing Generator Execution Time.} The abstractions used by property-based testing libraries introduce a non-trivial amount of overhead. \citet{goldstein-2026} describe how to use staging techniques to eliminate this overhead. In another vein, QuickerCheck~\cite{claessen-2024} is an implementation of QuickCheck which achieves significant performance improvements by running test cases in parallel. Both works are orthogonal to ours: \citet{claessen-2024} and \citet{goldstein-2026} focus on optimizing the execution of generators, while our work enables optimizing generators themselves.

\paragraph{Semantics of Probabilistic Programming Languages} Quasi-Borel Spaces \citep{quasi-borel-spaces} form a semantic model of higher-order probabilistic programs, which we use to justify our various semantics. Quasi-Borel Pre-Domains \citep{quasi-borel-pre-domains} combine quasi-Borel spaces with $ω$-complete partial orders \citep{domain-theory} to allow modelling probabilistic programs with recursion. All of our results generalize to this setting except for (possibly) \Cref{thm:sampling-correct}. The proof of \Cref{thm:sampling-correct} relies on the fact that, whenever $\tmsem{t}^{\Sampling}{σ} = x$, then $\tmsem{t}^{\Sampling}{σ'} = x$ for all $σ'$ in some set of positive measure. This holds for all \emph{terminating} programs, but not necessarily for non-terminating programs. Investigating this case is left for future work. Note that a failure to generalize does not reduce the impact of \Cref{thm:sampling-correct} since it still applies to all terminating programs.

\paragraph{Semantics of Probabilistic and Nondeterministic Programming Languages} Shrinking can be viewed as a form of nondeterminism, and there is a large body of work addressing the combination of probabilistic and nondeterministic effects \citep{probability-and-nondeterminism-1,probability-and-nondeterminism-2,probability-and-nondeterminism-3,probability-and-nondeterminism-4,just-do-it}. Various authors \citep{probability-and-nondeterminism-1,probability-and-nondeterminism-2,just-do-it} observe that allowing probabilistic choice to distribute over nondeterministic choice, combined with the idempotence of probabilistic choice, leads to unexpected behaviour. \Cref{thm:sampling-correct} provides a similar type of observation, specific to \hedgehog{}. Placing this result in a more general setting is left to future work.

\paragraph{Arrows} Arrows are a generalization of monads \citep{monads} introduced to functional programming by \citet{arrows}. Arrows model languages with more restricted control flow than monads allow, which is crucial for our solution. Arrows alone do not model languages with case analysis: the \texttt{ArrowChoice} class \citep{arrows} in \haskell{} extends arrows with this ability, and corresponds to the $\aleft$ function required by \hedgehog*{} "interpretations@@hedgehog*". \hedgehog*{} is built on the \emph{arrow calculus} \citep{arrow-calculus}, which provides a more familiar presentation of arrows and their equational properties. \Citet{arrow-notation} describes a similar notation for \haskell{}, on which our implementation is based.

%% file: section/7-conclusion.tex
\section{Conclusion and Future Work} \label{sec:conclusion}

Existing languages that feature user-controlled generator-based shrinking lack sound, compositional semantics for reasoning about generator equivalence. In this paper, we showed that this problem is inherent to the design of existing languages, using \hedgehog{} as a canonical example. We defined \hedgehog*{}, a restricted version of \hedgehog{}, and showed that it has an appropriate compositional semantics. We also showed that \hedgehog*{} is expressive enough in practice by translating a large set of existing programs into \hedgehog*{}, and showed that \hedgehog*{} programs are not significantly more tedious to write than \hedgehog{} programs.

Generator-based shrinking is frequently much slower than manual shrinking~\cite{tweedale-2020,lyngle-2019}, which begs the question: why not use manual shrinking if performance is a concern? Anecdotally, we have found \falsify{} to be up to an order of magnitude faster than \hedgehog{} in some cases (although not quite as fast as \quickcheck{}), which indicates that poor performance with generator-based shrinking is largely a product of \hedgehog{}'s implementation strategy of representing generators as tree-valued random variables. In principle, generator-based shrinking should not induce additional overhead during the testing stage, as shrinking is not even performed on the vast majority of (successful) test cases.

Implementation strategies aside, our work demonstrates that generators written using Hedgehog-style generator-based shrinking frameworks are inherently \emph{less optimizable} than equivalent generators written in frameworks with manual shrinking, as many common optimizations are only allowed in the latter (\Cref{thm:unsound-transformations,thm:sampling-correct}). On the other hand, we have shown that many common optimizations are valid within \hedgehog*{} (\Cref{thm:sound-transformations*}). Implementing these optimizations (e.g. with rewrite rules \citep{ghc-rules} or a compiler plugin \citep{ghc-plugins} for Haskell programs) remain as future work.

As mentioned in \Cref{sec:related-work}, integrating \falsify{}-style \citep{falsify} hierarchical shrinking into \hedgehog*{} is also left as future work.